\newtheorem{theorem}{Theorem}
\newtheorem*{theorem*}{Theorem}
\newtheorem{lemma}{Lemma}
\newtheorem{definition}{Definition}
\begin{document}

\title{A Lyapunov Framework for Quantum Algorithm Design in Combinatorial Optimization with Approximation Ratio Guarantees
}

\author{Shengminjie Chen}
\email[Corresponding author: ]{csmj@ict.ac.cn, sunxiaoming@ict.ac.cn}
\affiliation{State Key Lab of Processors, Institute of Computing Technology, Chinese Academy of Sciences, Beijing 100190, China}
\affiliation{School of Computer Science and Technology, University of Chinese Academy of Sciences, Beijing 100049, China}

\author{Ziyang Li}
\affiliation{State Key Lab of Processors, Institute of Computing Technology, Chinese Academy of Sciences, Beijing 100190, China}
\affiliation{School of Computer Science and Technology, University of Chinese Academy of Sciences, Beijing 100049, China}

\author{Hongyi Zhou}
\affiliation{State Key Lab of Processors, Institute of Computing Technology, Chinese Academy of Sciences, Beijing 100190, China}
\affiliation{School of Computer Science and Technology, University of Chinese Academy of Sciences, Beijing 100049, China}

\author{Jialin Zhang}
\affiliation{State Key Lab of Processors, Institute of Computing Technology, Chinese Academy of Sciences, Beijing 100190, China}
\affiliation{School of Computer Science and Technology, University of Chinese Academy of Sciences, Beijing 100049, China}

\author{Wenguo Yang}
\affiliation{School of Mathematical Sciences, University of Chinese Academy of Sciences, Beijing 100049, China}

\author{Xiaoming Sun}
\email[Corresponding author: ]{csmj@ict.ac.cn, sunxiaoming@ict.ac.cn}
\affiliation{State Key Lab of Processors, Institute of Computing Technology, Chinese Academy of Sciences, Beijing 100190, China}
\affiliation{School of Computer Science and Technology, University of Chinese Academy of Sciences, Beijing 100049, China}


\begin{abstract}

In this work, we develop a framework aiming at designing quantum algorithms for combinatorial optimization problems while providing theoretical guarantees on their approximation ratios. The principal innovative aspect of our work is the construction of a time-dependent Lyapunov function that naturally induces a controlled Schrödinger evolution with a time dependent Hamiltonian for maximizing approximation ratios of algorithms. Because the approximation ratio depends on the optimal solution, which is typically elusive and difficult to ascertain a priori, the second novel component is to construct the upper bound of the optimal solution through the current quantum state. By enforcing the non-decreasing property of this Lyapunov function, we not only derive a class of quantum dynamics that can be simulated by quantum devices but also obtain rigorous bounds on the achievable approximation ratio. As a concrete demonstration, we apply our framework to Max-Cut problem, implementing it as an adaptive variational quantum algorithm based on a Hamiltonian ansatz. This algorithm avoids ansatz and graph structural assumptions and bypasses parameter training through a tunable parameter function integrated with measurement feedback.  


\end{abstract}


\maketitle


\section{Introduction}
Combinatorial optimization constitutes a significant research area in the interdisciplinary fields, with tremendous applications in many scenarios. The fundamental target of combinatorial optimization problems is to find the optimal or near-optimal solution within a discrete and finite feasible domain associated with a specific objective function. Although some combinatorial optimization problems can be solved within polynomial complexity, a considerable number of them, characterized by their NP-hard property, exhibit an exponential growth in computational resources for obtaining an exact solution. Consequently, approximation algorithms and heuristic algorithms have emerged as viable alternatives, as they are capable of good solutions within acceptable computational resources. Recently, quantum computers have demonstrated their ability to handle some problems that are intractable for classical computers. Adopting quantum computers to solve combinatorial optimization problems has become a hot issue in quantum computing. Further insights into quantum optimization algorithms can refer to Survey \citep{Abbas2023}.

The pioneering quantum algorithm for combinatorial optimization problems is the Grover search, which explores a quadratic advantage in unstructured database and can be translated into the minimization of functions \citep{Durr1996} and the brute-force \citep{Gilliam2021}. Alternatively, Quantum Adiabatic Algorithm (QAA) \citep{Farhi2000,Kadowaki1998} encodes a problem Hamiltonian $\mathbf{H}$, whose ground quantum state is intrinsically associated with the solution. The adiabatic theorem guarantees that the state $|\psi\rangle$ will be transited from the initial state to the ground state of $\mathbf{H}$, when the annealing is slowly enough. However, the energy gap diminish exponentially with the increase of the system within the glass phase \citep{Mohseni2023}. Consequently, it entails exponentially prolonged run times, when the objective is to adhere as precisely as possible to the adiabatic path. Quantum Walk is also a quantum algorithm framework for combinatorial optimization problems such as TSP \citep{Marsh2020} and N-Queen \citep{campos2023}.

On near-term intermediate-scale quantum (NISQ) era, Variational Quantum Algorithms (VQAs) \citep{Cerezo2021} have attracted significant attentions. Formally, a VQA is a parameterized quantum circuit that adopts classical computers to iteratively determine parameters such that the objective function of quantum state measured on some bases is as large(small) as possible. Prominent instances of VQA include the Variational Quantum Eigensolver (VQE) \citep{McClean2016,Peruzzo2014} and the Quantum Approximate Optimization Algorithm (QAOA) \citep{farhi2014}. In addition, numerous variants of QAOA \citep{Egger2021,Wurtz2022,Grimsley2023,Bucher2025IFQAOA,MontanezBarrera2025,Hadfield2018,Lu2023} have been developed and can refer to Survey \citep{BLEKOS2024}. Unfortunately, finding the optimal parameters in a VQA is time-consuming in classical computers, attributed to NP-hard property \citep{Bittel2021} and barren plateaus \citep{McClean2018}. Nevertheless, several heuristic schemes have been proposed to obtain good parameters \citep{Akahoshi2024,Golden2023,Heidari2024,Ragone2024,Grimsley2019,Grimsley2023}. The structure-based technique and the measurement-based technique for parameters have been explored \citep{Wang2025,Magann2021,Magann2022a,Magann2022b}. In addition, these parameters technique has also been used in QAOA to solve multi-objective optimization problem \citep{kotil2025}. 

Previous studies have predominantly explored the efficiency and empirical performance. In parallel, and analogous to classical approximation algorithms, theoretical analyses of their approximation ratios $\frac{\langle \psi(t)|\mathbf{H}_f |\psi(t) \rangle}{\langle \psi^*|\mathbf{H}_f |\psi^* \rangle}$ have also attracted considerable attention. In \citep{farhi2014}, QAOA can obtain a $0.6924$ approximation solution for Max-Cut problem under $3$-regular graph when $p=1$ round. Building upon this, \citep{Wurtz2021} proved that QAOA with $p=2$ returns a $0.7559$ approximation solution for Max-Cut problem in $3$-regular graph and further conjectured that in the worst case when $p=3$, it yields a $0.7924$ approximation solution. For large-girth $D$-regular graphs, \citep{Basso2022,Farhi2022} provided some evidences that QAOA with $p \geq 11$ potentially outperforms classical approximation algorithms, such as the SDP-based algorithm with a 0.878 approximation ratio \citep{Goemans1995}. More recently, \citep{Li2024} extended the similar technique to a special type of low-girth graph. \citep{wang2025b} proposed a dynamic algorithm with the tree-decomposition technique for calculating the expected cut fraction. In addition, \citep{Braida2022} employed the Lieb-Robinson bound to prove that the constant-time quantum annealing returns approximation solutions under bounded degree graphs. \citep{Braida2024} demonstrated that there is a $1$-local quantum annealing algorithm with the $0.7020$ approximation ratio for Max-Cut on $3$-regular graph. \citep{Wang2025} illustrated that one round light-cone VQA is $0.7926$ approximation algorithm for Max-Cut on infinite $3$-regular graph. 

However, the aforementioned theoretical analyses for the Max-Cut problem are still subject to stringent limitations. Constructing efficient iterative formulas for the expected cut fraction typically requires imposing additional constraints on graph structures such as large-girth \citep{Basso2022,Farhi2022}, $D$-regular \citep{Basso2022,farhi2014,Farhi2022,Wurtz2021}, additive product graph \citep{Li2024}, and tree-decomposition \citep{wang2025b}. For general graphs, the theoretical guarantee for Max-Cut is still a open problem. While QAOA/VQA and Quantum Walk are unified frameworks to obtain an 'approximation solution' for combinatorial optimization problems, they pose challenges to analyze the approximation ratio in general for other combinatorial optimization problems. In this work, we endeavor to offer a partial resolution to the following issue.
  
\emph{Does there exist a partially generic framework that can be employed to design quantum algorithms while analyzing the theoretical guarantee (albeit perhaps not the tightest) for combinatorial optimization problems?}

In this work, we propose a universal framework called quantum Lyapunov framework for designing quantum algorithms for combinatorial optimization problems with provable approximation ratio guarantees. The core idea of our framework is inspired by Lyapunov framework \citep{Du2022,chen2023} on classical algorithms. Intuitively, a quantum algorithm can be described by a dynamic system from an initial state $|\psi(0)\rangle$ governed, for example, by the Schr\"{o}dinger equation with Hamiltonian $\mathbf{H}(t)$. This evolution can be viewed as a trajectory in Hilbert space originating from the initial point. Similarly, a classical algorithm for optimization problems may be regarded as a trajectory: beginning from an initial solution $\mathbf{x}(0) \in \mathbb{R}^n$ or $S_0 \subseteq N$ on definition domains $\mathbb{R}^n$ (continuous optimization problems) or $2^N$ (combinatorial optimization problems) and sequentially connecting all points with a piece-wise straight line. That implies that both quantum algorithms and classical algorithms for optimization problems can also be regarded as a dynamic system with some proper discrete procedures. 

Consequently, we construct a time-dependent Lyapunov function that incorporates both the problem dependent Hamiltonian and the (unknown) optimal solution. Enforcing the non-decreasing property of this Lyapunov function over time, we naturally derive a time-dependent Hamiltonian that governs the system's evolution such that the final state has rigorously theoretical guarantees. Discreterizing the evolution time, we establish a link between the dynamic system and an iterative process. As a result, the final state converges to a superposition of 'good solutions' for the combinatorial problem. To implement this evolution on near-term quantum devices, the continuous dynamics can be digitally simulated by adaptive variational quantum circuits with parameter schedules determined by the Lyapunov condition. As a concrete demonstration, we apply it to Max-Cut problem, designing a specific variational quantum algorithm, rigorously analyzing its approximation ratio, and fitting the convergence rate, thereby showcasing a practical pathway to achieve performance guarantees beyond limitations of existing QAOA analyses and illustrating the possible acceleration compared to classical algorithms.

\section{Results}
Compared with the analysis of approximation ratios for QAOA/VQA with some special problems characteristic limitation \citep{Basso2022,farhi2014,Farhi2022,Li2024,Wurtz2021,wang2025b}, our framework is a partially generic framework that only depends on the quantum upper bound of the optimal solution. For the Max-Cut problem, our framework diverges from existing works by alleviating the restriction such as $D$-regular and large-girth, thereby rendering it applicable to general graphs. Additionally, we also incorporate feedback-control and measurements techniques to overcome the obstacle for obtaining parameters in QAOA/VQAs.

Compared with the Lyapunov control-inspired quantum algorithms \citep{Magann2022a,Magann2022b}, while our approach regarding parameters involves a shared aspect, our framework possesses capabilities of not only designing quantum algorithms but also conducting a comprehensive analysis of their theoretical guarantees. In other words, our framework effectively fills the gap of approximation ratios for designing Lyapunov control-inspired quantum algorithms. Essentially, in situations where the approximation ratio remains undefined, the Lyapunov control-inspired quantum algorithms is a quantum heuristic algorithm scheme. However, with the implementation of our framework, they are now transformed into a quantum approximation algorithm scheme, thereby enhancing their theoretical rigor and practical applicability.   

Potentially, our framework streamlines the process of designing quantum approximation algorithms by reducing it to construct the quantum upper bound of the optimal solution. This means that designing quantum approximation algorithms with good theoretical guarantees fundamentally entails to identify a proper quantum upper bound of the optimal solution, which is predicted on properties of the objective function and the corresponding constraints. 

In this work, we consider the maximization problem. More technical details are covered in Appendix, including concretely discrete procedures and proofs.

\subsection{Lyapunov Function with One Parameter}


As a warm-start, we first introduce the Lyapunov functions with one parameter. The formal statement is as follows:
\begin{equation}\label{eq:: Lyapunov function 1 main}
    E(t) = \langle \psi(t)| \mathbf{H}_f | \psi(t) \rangle - \lambda(t) \langle \psi^*| \mathbf{H}_f | \psi^* \rangle
\end{equation}
where $\mathbf{H}_f$ is the Hamiltonian of the objective function $f$ for combinatorial optimization problems and can be expressed as a polynomial of Pauli-$Z$, i.e., $\mathbf{H}_f = c_0+\sum_{j}c_j \mathbf{Z}_j+\sum_{j<k}c_{jk} \mathbf{Z}_j \mathbf{Z}_k + \sum_{j<k<\ell}c_{jk\ell} \mathbf{Z}_j \mathbf{Z}_k \mathbf{Z}_\ell+\cdots$. $|\psi ^* \rangle$ is the quantum state of the optimal solution for combinatorial optimization problems, i.e., $|\psi ^* \rangle = \arg\max_{|\psi\rangle} \langle \psi | \mathbf{H}_f | \psi \rangle$. Concretely, if we denote $OPT = \{S \in \mathcal{C}| f(S) \ge f(V), \forall V \in \mathcal{C}\}$, where $\mathcal{C}$ represents the feasible solution space of the combinatorial optimization problem, then $|\psi^* \rangle = \sum_{S \in OPT} a_S |S\rangle$. $|\psi(t) \rangle$ is the quantum state returned by a quantum algorithm at time $t$. If we can control the above equation properly, we can indirectly obtain an appropriate Hamiltonian $\mathbf{H}(t)$ such that the quantum algorithm based on the Hamiltonian $\mathbf{H}(t)$ for Schr\"{o}dinger equation possesses a provably reliable approximation guarantee. Primarily, at each time $t$, if we can ensure that the inequality $E(t) \ge E(0)$ is maintained, then there is the following theoretical guarantee: 
\begin{equation}
        \langle \psi(t)| \mathbf{H}_f | \psi(t) \rangle \ge (\lambda(t)-\lambda(0)) \langle \psi^*| \mathbf{H}_f | \psi^* \rangle + \langle \psi(0)| \mathbf{H}_f | \psi(0) \rangle
\end{equation}
Consequently, there is a sufficient condition to ensure $E(t) \ge E(0)$, i.e., $E(t)$ is non-decreasing. If $E(t)$ is differentiable, the non-decreasing Lyapunov function $E(t)$ can be transfer to $\frac{d E(t)}{d t} \ge 0$. Considering the derivative of Lyapunov function in Eq.(\ref{eq:: Lyapunov function 1 main}), we have:
\begin{equation*}
    \cfrac{d E(t)}{d t} =\langle \psi(t)| i [\mathbf{H}^{\hat{f}}(t), \mathbf{H}_f] | \psi(t) \rangle - \cfrac{d \lambda(t)}{dt} \langle \psi^*| \mathbf{H}_f | \psi^* \rangle
\end{equation*}
Potentially, $\lambda(t)$ can be constrained to be a non-decreasing function, i.e., $\frac{d \lambda(t)}{d t} \ge 0$. This is because, as the algorithm progresses, we invariably aspire to achieve the non-decreasing approximation ratio. Additionally, $\mathbf{H}(t)$ can be partitioned into two components: (1) the part that commutes with $\mathbf{H}_f$ (denoted as $\mathbf{H}^{f}(t)$) and (2) the part that does not commute with $\mathbf{H}_f$ (denoted as $\mathbf{H}^{\hat{f}}(t)$). However, in the derivative of the Lyapunov function, it is challenging to ascertain $|\psi^*\rangle$ in advance. This poses an obstacle to ensuring the non-negativity of $\frac{d E(t)}{d t}$. To deal with this issue, we consider the upper bound of $\langle \psi^*| \mathbf{H}_f | \psi^* \rangle$, which can be easily estimated prior to the design of the algorithm. Normally, the upper bound of the optimal solution is determined by the mathematical properties of the problem. Suppose that the upper bound has the following form:
\begin{equation}
    0 \le \langle \psi^*| \mathbf{H}_f | \psi^* \rangle \le \langle \psi(t)| \mathbf{Q}(t) | \psi(t) \rangle
\end{equation}
Therefore, we have a sufficient condition to ensure the monotonicity of Lyapunov function:
\begin{equation*}
    \begin{array}{c}
        \begin{aligned}
            \cfrac{d E(t)}{d t} & \ge \langle \psi(t)| i [\mathbf{H}^{\hat{f}}(t) , \mathbf{H}_f] | \psi(t) \rangle - \cfrac{d \lambda(t)}{d t} \langle \psi(t)| \mathbf{Q}(t) | \psi(t) \rangle \\
            & = 0
        \end{aligned} 
    \end{array}
\end{equation*}
According to the above equation, it induces a limitation of $\mathbf{H}^{\hat{f}}(t)$, i.e., $\langle \psi(t)| i [\mathbf{H}^{\hat{f}}(t) , \mathbf{H}_f] | \psi(t) \rangle\ge 0 $. Therefore, the approximation ratio $\lambda(t)-\lambda(0)$ has the following explicit formula:
\begin{equation}
        \lambda(T)-\lambda(0) = \int_0^T \cfrac{\langle \psi(s)| i [\mathbf{H}^{\hat{f}}(s) , \mathbf{H}_f] | \psi(s) \rangle} { \langle \psi(s)| \mathbf{Q}(s) | \psi(s) \rangle }ds
\end{equation}
Different choices for $\mathbf{H}^{\hat{f}}(t)$ will directly lead to different algorithms and their approximation ratios. Specially, in this work, we decompose the commutative component and the non-commutative component into a combination of efficiently implementable quantum gates as below:
\begin{equation}
    \mathbf{H}^f(t) = \sum_{k=1}^{n_1}\eta_k(t)\mathbf{H}^f_k, \quad \mathbf{H}^{\hat{f}}(t) = \sum_{k=1}^{n_2} \alpha_k(t) \mathbf{H}^{\hat{f}}_k
    \label{eq::Algorithm_Hamiltonian_Choice}
\end{equation}
where we restrict that $\mathbf{H}^f_k$ and $\mathbf{H}^{\hat{f}}_k$ can be implemented on quantum circuits. $\mathbf{H}^f(t)$ and $\mathbf{H}^{\hat{f}}(t)$ described above imply that they commute at different times, respectively. By employing feedback-control and denoting $O_k(t) = \langle \psi(t)| i [\mathbf{H}^{\hat{f}}_k, \mathbf{H}_f] | \psi(t) \rangle$, we have: 
\begin{equation*}
    \lambda(T)-\lambda(0) = \int_0^T \cfrac{\sum_{k=1}^{n_2}\alpha_k(s)O_k(s)} { \langle \psi(s)| \mathbf{Q}(s) | \psi(s) \rangle }ds
\end{equation*}
Our target is reduced to choose proper $\alpha_k(t)$ such that $\lambda(T)-\lambda(0)$ is as large as possible and $\sum_{k=1}^{n_2}\alpha_k(t)O_k(t) \ge 0$. Potentially, there is an intuitive and simple choice about $\alpha_k$:
\begin{equation}
    \alpha_k(t) = \beta_k(t) O_k(t), \beta_k(t) \ge 0
    \label{eq::value_of_alpha}
\end{equation}

Totally, the Hamiltonian and the approximation ratio corresponding to the quantum algorithm are as follows:
\begin{equation}
    \begin{array}{c}
        \mathbf{H}(t) = \sum_{k=1}^{n_1} \eta_k(t)\mathbf{H}^f_k + \sum_{k=1}^{n_2} \beta_k(t) O_k(t)\mathbf{H}^{\hat{f}}_k \\
        \lambda(T)-\lambda(0) = \displaystyle\int_0^T \cfrac{\sum_{k=1}^{n_2}\beta_k(t) O_k^2(t)} { \langle \psi(s)| \mathbf{Q}(s) | \psi(s) \rangle }ds
    \end{array}
    \label{eq::single_para_estimate}
\end{equation}

The continuous-time algorithm defined in Eq. (\ref{eq::single_para_estimate}) can be discretized into a corresponding parameterized ansatz. Firstly, we discretize continuous time interval $[0,T]$ into a discrete time series $t_0=0< t_1< ... < t_N=T$. Within each subinterval $t\in (t_j,t_{j+1}]$, the time-dependent Hamiltonian $\mathbf{H}(t)$ is approximated as a time-independent Hamiltonian $\mathbf{H}(t_j)$. Adopting Trotter decomposition, we then express the time evolution operator as $e^{i \mathbf{H}(t_j)(t_{j+1}-t_j)} \approx U^{\hat{f}}(j+1) U^f(j+1)$ where $U^f(j+1) = \prod_{k=1}^{n_1} e^{-i\eta_k(j+1)\mathbf{H}^f_k (t_{j+1} - t_{j})}$ and $U^{\hat{f}}(j+1) = \prod_{k=1}^{n_2} e^{-i  \alpha_k(j+1) \mathbf{H}_k^{\hat{f}} (t_{j+1} - t_{j})}$. The resulting pseudo-algorithm is outlined in Fig.\ref{fig:algorithm}. In Appendix \ref{Appendix::One-Parameter-Discretion}, we provide detailed descriptions for discretized procedure of this algorithm and analyze the increment in the approximation ratio achieved during each iteration of the discretized algorithm, which is formally characterized by the following theorem.

\begin{theorem}
In the discretized version of the algorithm defined by Eq. (\ref{eq::single_para_estimate}), if the evolution time $t_{j+1}-t_j \le \frac{\epsilon}{B+C\epsilon+\sqrt{A \epsilon}}$, then the improvement in the approximation ratio contributed by the $j+1$-th iteration can be expressed as below with the bounded discretization error $\epsilon$
    \begin{equation}
        \lambda(t_{j+1} ) - \lambda(t_j) = \cfrac{\sum_{k=1}^{n_2} \beta_k(t_j)O_k^2(t_j) (t_{j+1}-t_j)}{\langle \psi(t_j)| \mathbf{Q}(t_j) | \psi(t_j) \rangle}
    \end{equation}
    where
    \begin{equation*}
    \begin{array}{l}
        A = 2 \left\| \mathbf{H}_f \right\|
               \left\| \mathbf{H}^{\hat{f}}(t_j) \right\|
               \left\| \mathbf{H}(t_j) \right\| \\
        \begin{aligned}
            B = 2 \left\| \mathbf{H}_f \right\| 
             \bigg(
               & \sum_{k=1}^{n_1} \left\| \mathbf{H}^f_{k}(t_j) \right\|
               + 
               \sum_{k=1}^{n_2-1} \left\| \mathbf{H}_k^{\hat{f}}(t_j) \right\| \\
               & 
               +
               \left\| 
                   \sum_{k=1}^{n_1} \mathbf{H}^f_{k}(t_j)
                   +
                   \sum_{k=1}^{n_2-1} \mathbf{H}_k^{\hat{f}}(t_j)
               \right\|
             \bigg)
        \end{aligned}
         \\
        C = \left\| \mathbf{H}(t_j) \right\|
    \end{array}
\end{equation*}
\end{theorem}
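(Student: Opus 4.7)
The plan is to compute the one-step change $\Delta\langle\mathbf{H}_f\rangle_j := \langle\psi(t_{j+1})|\mathbf{H}_f|\psi(t_{j+1})\rangle - \langle\psi(t_j)|\mathbf{H}_f|\psi(t_j)\rangle$ produced by the Trotterised block $U^{\hat{f}}(j+1)U^{f}(j+1)$, match it to $\Delta t_j \sum_{k=1}^{n_2}\beta_k(t_j)O_k^2(t_j)$ up to a controllable remainder, and then convert this into the stated $\lambda$-increment by invoking the Lyapunov equality $E(t_{j+1})=E(t_j)$ with $\langle\psi^*|\mathbf{H}_f|\psi^*\rangle$ replaced by its upper bound $\langle\psi(t_j)|\mathbf{Q}(t_j)|\psi(t_j)\rangle$ as in the continuous derivation. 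Writing $\Delta t_j := t_{j+1}-t_j$, the goal is to show that all discretisation artefacts are absorbed into a single additive error of size $\epsilon$.

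First I would eliminate the commuting block. Since each $\mathbf{H}^{f}_k$ commutes with $\mathbf{H}_f$ by the decomposition in Eq.~(\ref{eq::Algorithm_Hamiltonian_Choice}), so does $U^{f}(j+1)$, and therefore it leaves $\langle\mathbf{H}_f\rangle$ invariant. Setting $|\tilde\psi_j\rangle := U^{f}(j+1)|\psi(t_j)\rangle$, I reduce the analysis to $\Delta\langle\mathbf{H}_f\rangle_j = \langle\tilde\psi_j|\bigl(U^{\hat f}(j+1)^\dagger \mathbf{H}_f U^{\hat f}(j+1) - \mathbf{H}_f\bigr)|\tilde\psi_j\rangle$, and then apply Duhamel's formula to the product $U^{\hat f}(j+1) = \prod_{k=1}^{n_2} e^{-i\alpha_k(t_j)\mathbf{H}^{\hat f}_k \Delta t_j}$ expanded to first order in $\Delta t_j$. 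The linear term yields $i\Delta t_j\sum_{k=1}^{n_2}\alpha_k(t_j)\langle\psi(t_j)|[\mathbf{H}^{\hat f}_k,\mathbf{H}_f]|\psi(t_j)\rangle$, which, after substituting the feedback rule $\alpha_k(t_j)=\beta_k(t_j)O_k(t_j)$ from Eq.~(\ref{eq::value_of_alpha}) and using that $U^{f}$ commutes with $\mathbf{H}_f$, collapses to the advertised $\Delta t_j\sum_k \beta_k(t_j)O_k^2(t_j)$.

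Next I would control the three residual pieces that remain. Piece (i) is the nested-commutator tail of Duhamel's expansion: its operator norm is bounded by $2\|\mathbf{H}_f\|\|\mathbf{H}^{\hat f}(t_j)\|\|\mathbf{H}(t_j)\|(\Delta t_j)^2 = A(\Delta t_j)^2$ through the elementary inequality $\|[X,Y]\|\le 2\|X\|\|Y\|$. Piece (ii) is the first-order Trotter error coming from serialising the $n_1+n_2$ non-commuting exponentials inside $U^{\hat f}(j+1)U^f(j+1)$; unrolling the product factor by factor and applying a commutator bound at each stage produces a telescopic sum along the partial generators $\sum_{k\le n_1}\mathbf{H}^{f}_k$ and $\sum_{k\le n_2-1}\mathbf{H}^{\hat f}_k$, which collapses precisely to the combination of operator norms defining $B$ (the overall $2\|\mathbf{H}_f\|$ appears because the error is tracked inside $\langle\mathbf{H}_f\rangle$). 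Piece (iii) is the state-propagation error $C\epsilon\Delta t_j$ capturing how an $O(\epsilon)$ perturbation of $|\psi(t_j)\rangle$ is amplified by evolution under a generator of norm $\|\mathbf{H}(t_j)\|=C$. Re-expressing (i) as a linear-in-$\Delta t_j$ quantity via the elementary estimate $\Delta t_j\le\sqrt{\epsilon/A}\Rightarrow A(\Delta t_j)^2\le\sqrt{A\epsilon}\,\Delta t_j$, the three pieces sum to at most $(B+C\epsilon+\sqrt{A\epsilon})\Delta t_j$, which the hypothesis $\Delta t_j\le\epsilon/(B+C\epsilon+\sqrt{A\epsilon})$ forces to be at most $\epsilon$. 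Dividing through the positive quantity $\langle\psi(t_j)|\mathbf{Q}(t_j)|\psi(t_j)\rangle$ then yields the stated $\lambda$-increment.

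The main obstacle is the bookkeeping in piece (ii): the standard Lie-Trotter bound only gives a loose sum over all pairwise commutators $\sum_{j<k}\|[\mathbf{H}_j,\mathbf{H}_k]\|$, whereas the theorem asks for the specific combination $\sum_{k=1}^{n_1}\|\mathbf{H}^f_k\|+\sum_{k=1}^{n_2-1}\|\mathbf{H}^{\hat f}_k\|+\|\sum_{k=1}^{n_1}\mathbf{H}^f_k+\sum_{k=1}^{n_2-1}\mathbf{H}^{\hat f}_k\|$, stopping one index short of $n_1$ and $n_2$ respectively. Recovering that precise form requires unrolling the product in a prescribed order and repeatedly using $\|[X,Y]\|\le\|X\|\|Y+Z\|+\|Y\|\|X\|$ rather than the naive $2\|X\|\|Y\|$; once this combinatorial identity is in hand, the Duhamel and state-perturbation estimates in (i) and (iii) are routine, and the final conversion to the approximation-ratio increment follows by monotonicity of division by the positive quantity $\langle\psi(t_j)|\mathbf{Q}(t_j)|\psi(t_j)\rangle$.
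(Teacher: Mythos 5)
Your high-level skeleton (extract the first-order commutator term, bound the Trotterization and Taylor remainders, solve for the step size) matches the paper's strategy, and the identification of the increment with $\Delta t_j\sum_k\beta_k(t_j)O_k^2(t_j)$ is essentially right. But the error analysis — which is the entire substance of the constants $A$, $B$, $C$ and the step-size condition — contains genuine gaps. The most serious is piece (ii). The paper's $B$ is \emph{not} a Lie--Trotter commutator bound at all: it comes from a crude first-order telescoping in which each factor $e^{i\mathbf{H}_k(t_j)\Delta t_j}$ of the product is compared to the identity (contributing $\|\mathbf{H}_k\|\Delta t_j$ for the first $n_1+n_2-1$ factors, via the bound $\|e^{iH\Delta t}-\mathbf{I}\|\le\|H\|\Delta t$), and the last factor $e^{i\mathbf{H}^{\hat f}_{n_2}\Delta t_j}$ is compared to the exact $e^{i\mathbf{H}(t_j)\Delta t_j}$, contributing $\|\mathbf{H}^{\hat f}_{n_2}-\mathbf{H}\|\Delta t_j=\|\sum_{k\le n_1}\mathbf{H}^f_k+\sum_{k\le n_2-1}\mathbf{H}^{\hat f}_k\|\Delta t_j$; the prefactor $2\|\mathbf{H}_f\|$ then comes from sandwiching $\mathbf{H}_f$ between two unitaries that differ by this amount. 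That is why $B$ is a sum of single Hamiltonian norms rather than products of pairs. A commutator-based Trotter bound would produce terms of the form $\|\mathbf{H}_j\|\|\mathbf{H}_k\|\Delta t_j^2$ and cannot ``collapse precisely'' to $B$; moreover the inequality you propose to force it, $\|[X,Y]\|\le\|X\|\|Y+Z\|+\|Y\|\|X\|$, is false for general $Z$ (take $Z=-Y$ with $X=\sigma_x$, $Y=\sigma_y$). Relatedly, your plan to eliminate $U^f$ exactly (because it commutes with $\mathbf{H}_f$) is inconsistent with then charging a serialization error over ``the $n_1+n_2$ exponentials inside $U^{\hat f}U^f$''; and if you do eliminate $U^f$ first, the first-order term you extract is the commutator expectation in $U^f|\psi(t_j)\rangle$, not in $|\psi(t_j)\rangle$ where $O_k(t_j)$ is actually measured — an $O(\Delta t_j^2)$ discrepancy you would still need to absorb into the error budget.

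The second gap is the $C\epsilon$ term. It is not a ``state-propagation error'' and you give no derivation for it. In the paper it arises because the Taylor tail is resummed to all orders as a geometric series, $\sum_{k\ge2}|\Delta^k|\le 2\|\mathbf{H}_f\|\|\mathbf{H}^{\hat f}(t_j)\|\|\mathbf{H}(t_j)\|\Delta t_j^2/(1-\|\mathbf{H}(t_j)\|\Delta t_j)$; clearing the denominator when solving $A\Delta t_j^2/(1-C\Delta t_j)+B\Delta t_j\le\epsilon$ produces the quadratic whose root gives exactly $\Delta t_j\le\epsilon/(B+C\epsilon+\sqrt{A\epsilon})$. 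Your piece (i) keeps only the $k=2$ term $A\Delta t_j^2$ and so both loses the higher-order tail and leaves the $C\epsilon$ term with no origin. To repair the argument you should: (a) replace the commutator bookkeeping in piece (ii) by the telescoping-against-identity estimate above (the paper's Lemma on the product of exponentials, proved from $\|\mathcal{T}e^{-i\int\mathbf{H}_1}-\mathcal{T}e^{-i\int\mathbf{H}_2}\|\le\int\|\mathbf{H}_1-\mathbf{H}_2\|$); (b) bound the full Taylor tail by the geometric series and carry the $(1-C\Delta t_j)^{-1}$ factor into the quadratic inequality; and (c) state explicitly that the $\lambda$-increment is \emph{defined} as the first-order term over $\langle\psi(t_j)|\mathbf{Q}(t_j)|\psi(t_j)\rangle$ while the $\epsilon$ is charged to the potential function's allowed decrease, rather than invoking an exact equality $E(t_{j+1})=E(t_j)$.
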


Note that the above analysis reveals that we can choose any non-negative $\beta(t)$ as the control function. However, there is an implicit hazard that could potentially undermine the aforementioned analysis. For unconstrained combinatorial optimization, $|\psi(T) \rangle$ is invariably in the feasible domain regardless of the unitary operator employed. In contrast, for constrained combinatorial optimization, if we opt for any non-negative $\beta(t)$ and any $\mathbf{H}^{\hat{f}}_k$, $|\psi(T) \rangle$ may be not in feasible domain. Therefore, for constrained combinatorial optimization, an additional constraint comes here. The detailed descriptions can be found in Appendix \ref{appendix: discretization in lyapunov 1}.


\newtcolorbox{CircuitRow}[2][]{%
  enhanced,
  colback=white,
  colframe=#2!60!black,
  boxrule=0.5pt,
  arc=2mm,
  left=2mm,right=2mm,top=1.2mm,bottom=1.2mm,
  borderline west={2.2pt}{0pt}{#2!80!black},
  drop shadow={blur shadow, shadow xshift=1.2pt, shadow yshift=-1.2pt, shadow blur steps=8, opacity=0.25},
  before upper=\raggedright\setlength{\parindent}{0pt},
  #1
}

\newcommand{\CircuitScale}{0.90}
\newlength{\SubcircuitPadX}
\newlength{\SubcircuitPadY}
\pgfmathsetlength{\SubcircuitPadX}{6pt/\CircuitScale}
\pgfmathsetlength{\SubcircuitPadY}{6pt/\CircuitScale}

\tikzset{
  PBox/.style={
    dashed,
    rounded corners,
    line width=0.9pt,
    inner xsep=\SubcircuitPadX,
    inner ysep=\SubcircuitPadY,
  },
  PLabel/.style={font=\scriptsize, anchor=south},
  MeasLabel/.style={font=\scriptsize, anchor=south},
}

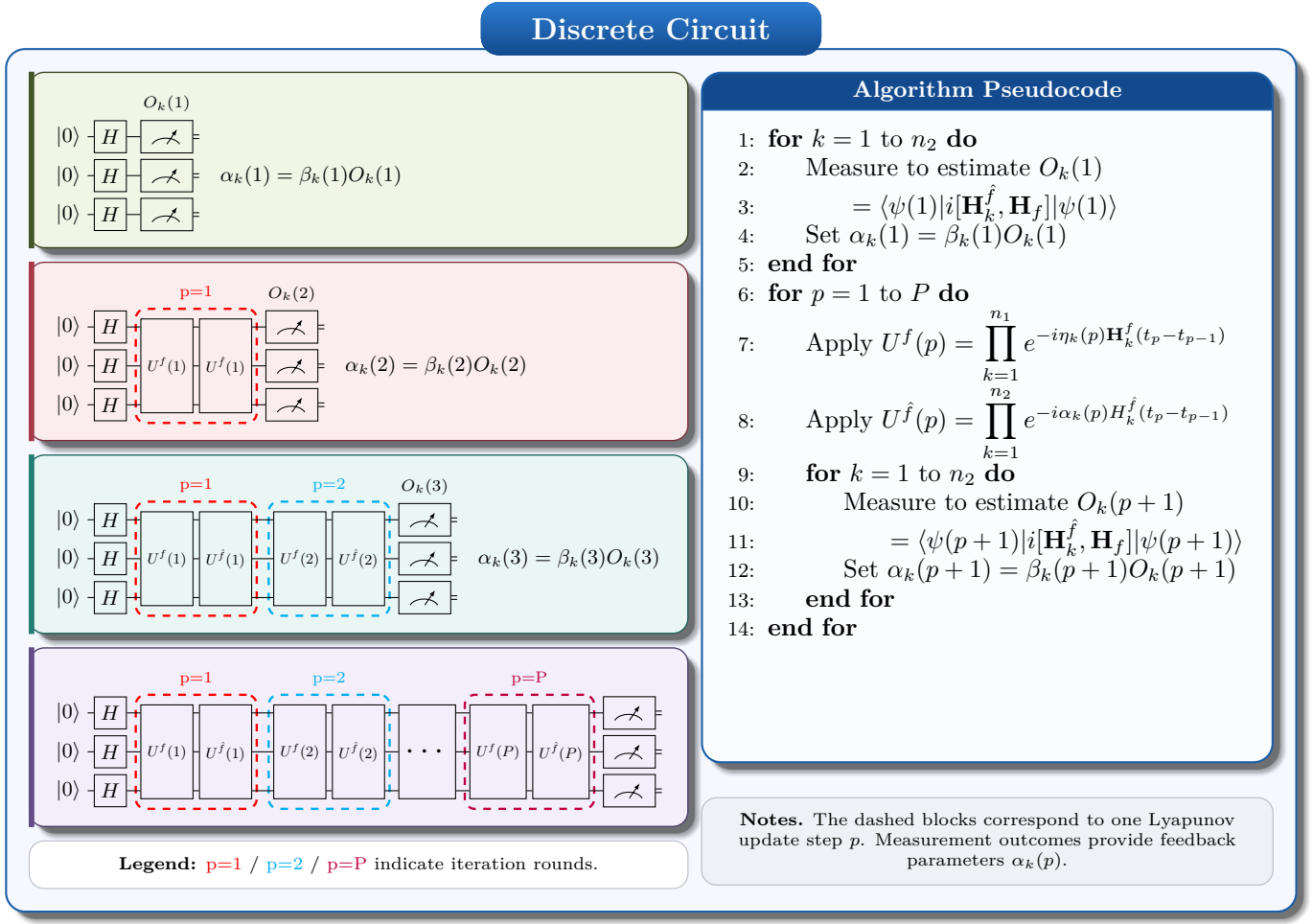
\begin{figure*}
\centering

\definecolor{MainFrame}{HTML}{1B5FA7}
\definecolor{MainBack}{HTML}{F5F8FF}
\definecolor{TitleTop}{HTML}{2D7DD2}
\definecolor{TitleBot}{HTML}{134A8E}

\definecolor{PaperWhite}{HTML}{FFFFFF}
\definecolor{SoftGray}{HTML}{EEF1F5}
\definecolor{CodeBack}{HTML}{FBFCFE}
\definecolor{LineGray}{HTML}{4B5563}

\definecolor{RowOlive}{HTML}{556B2F}
\definecolor{RowOliveTint}{HTML}{F1F5E6}
\definecolor{RowRed}{HTML}{D1495B}
\definecolor{RowRedTint}{HTML}{FBEDEE}
\definecolor{RowTeal}{HTML}{2A9D8F}
\definecolor{RowTealTint}{HTML}{EAF6F4}
\definecolor{RowPurple}{HTML}{7B5EA7}
\definecolor{RowPurpleTint}{HTML}{F3EEF9}

\begin{tcolorbox}[
    enhanced,
    width=\textwidth,
    colback=MainBack,
    colframe=MainFrame,
    boxrule=0.8pt,
    arc=2.5mm,
    left=2mm,right=2mm,top=2mm,bottom=2mm,
    title=\textbf{Discrete Circuit},
    fonttitle=\bfseries\large,
    colbacktitle=TitleBot,
    coltitle=white,
    boxed title style={
      boxrule=0pt,
      arc=2.5mm,
      left=6mm,right=6mm,top=1.2mm,bottom=1.2mm,
      interior style={top color=TitleTop, bottom color=TitleBot}
    },
    attach boxed title to top center={yshift=-1mm},
    drop shadow={blur shadow, shadow xshift=1.2pt, shadow yshift=-1.2pt, shadow blur steps=8, opacity=0.25}
]

\noindent
\begin{minipage}[t]{0.53\textwidth}
  \vspace{0pt}

  \begin{CircuitRow}[colback=RowOliveTint]{RowOlive}
  \noindent\begin{tikzpicture}[scale=\CircuitScale,transform shape]
  \begin{yquant}
      qubit {$\ket{0}$} q[3];
      h q;
      [value=1.2mm] hspace q;
      [name=MeasOne] measure q;
      output {~ $\alpha_k(1) = \beta_k(1) O_k(1)$} q[1];
  \end{yquant}

  \node[MeasLabel] at ([yshift=0.25mm]MeasOne-0.north) {$O_k(1)$};
  \end{tikzpicture}
  \end{CircuitRow}

  \begin{CircuitRow}[colback=RowRedTint]{RowRed}
  \noindent\begin{tikzpicture}[scale=\CircuitScale,transform shape]
  \begin{yquant}
      qubit {$\ket{0}$} q[3];
      h q;
      [value=1.2mm] hspace q;

      [name=UfOne]  box {\scalebox{0.72}{$U^f(1)$}} (q);
      [name=UhfOne] box {\scalebox{0.72}{$U^{\hat{f}}(1)$}} (q);

      [value=1.2mm] hspace q;
      [name=MeasTwo] measure q;
      output {~ $\alpha_k(2) = \beta_k(2) O_k(2)$} q[1];
  \end{yquant}

    \tikzset{
      PBox/.style={
        dashed, rounded corners,
        line width=0.9pt,
        inner xsep=2.0pt,
        inner ysep=4.0pt,
        outer sep=0pt
      }
    }

  \begin{scope}[on background layer]
    \node[draw=red, PBox, fit=(UfOne-0) (UhfOne-0)] (Pone) {};
  \end{scope}
  \node[red, PLabel] at ([yshift=0.35mm]Pone.north) {p=1};

  \node[MeasLabel] at ([yshift=0.25mm]MeasTwo-0.north) {$O_k(2)$};
  \end{tikzpicture}
  \end{CircuitRow}

  \begin{CircuitRow}[colback=RowTealTint]{RowTeal}
  \noindent\begin{tikzpicture}[scale=\CircuitScale,transform shape]
  \begin{yquant}
      qubit {$\ket{0}$} q[3];
      h q;

      [value=1.2mm] hspace q;

      [name=UfOne]  box {\scalebox{0.72}{$U^f(1)$}} (q);
      [name=UhfOne] box {\scalebox{0.72}{$U^{\hat{f}}(1)$}} (q);

      [value=2.4mm] hspace q;

      [name=UfTwo]  box {\scalebox{0.72}{$U^f(2)$}} (q);
      [name=UhfTwo] box {\scalebox{0.72}{$U^{\hat{f}}(2)$}} (q);

      [value=1.2mm] hspace q;
      
      [name=MeasThree] measure q;
      output {~ $\alpha_k(3) = \beta_k(3) O_k(3)$} q[1];
  \end{yquant}

    \tikzset{
      PBox/.style={
        dashed, rounded corners,
        line width=0.9pt,
        inner xsep=2.0pt,
        inner ysep=4.0pt,
        outer sep=0pt
      }
    }

  \begin{scope}[on background layer]
    \node[draw=red,  PBox, fit=(UfOne-0) (UhfOne-0)] (Pone) {};
    \node[draw=cyan, PBox, fit=(UfTwo-0) (UhfTwo-0)] (Ptwo) {};
  \end{scope}
  \node[red,  PLabel] at ([yshift=0.35mm]Pone.north) {p=1};
  \node[cyan, PLabel] at ([yshift=0.35mm]Ptwo.north) {p=2};

  \node[MeasLabel] at ([yshift=0.25mm]MeasThree-0.north) {$O_k(3)$};
  \end{tikzpicture}
  \end{CircuitRow}

  \begin{CircuitRow}[colback=RowPurpleTint]{RowPurple}
  \noindent\begin{tikzpicture}[scale=\CircuitScale,transform shape]
  \begin{yquant}
      qubit {$\ket{0}$} q[3];
      h q;

      [value=1.2mm] hspace q;

      [name=UfOne]  box {\scalebox{0.72}{$U^f(1)$}} (q);
      [name=UhfOne] box {\scalebox{0.72}{$U^{\hat{f}}(1)$}} (q);

      [value=2.4mm] hspace q;

      [name=UfTwo]  box {\scalebox{0.72}{$U^f(2)$}} (q);
      [name=UhfTwo] box {\scalebox{0.72}{$U^{\hat{f}}(2)$}} (q);

      [value=1.2mm] hspace q;

      box {\Large $\cdots$} (q);

      [value=1.2mm] hspace q;

      [name=UfP]   box {\scalebox{0.72}{$U^f(P)$}} (q);
      [name=UhfP]  box {\scalebox{0.72}{$U^{\hat{f}}(P)$}} (q);

      [value=1.2mm] hspace q;

      measure q;
      output {} q[1];
  \end{yquant}

    \tikzset{
          PBox/.style={
            dashed, rounded corners,
            line width=0.9pt,
            inner xsep=2.0pt,
            inner ysep=4.0pt,
            outer sep=0pt
          }
        }

  \begin{scope}[on background layer]
    \node[draw=red,    PBox, fit=(UfOne-0) (UhfOne-0)] (Pone) {};
    \node[draw=cyan,   PBox, fit=(UfTwo-0) (UhfTwo-0)] (Ptwo) {};
    \node[draw=purple, PBox, fit=(UfP-0)   (UhfP-0)]   (Pp)   {};
  \end{scope}
  \node[red,    PLabel] at ([yshift=0.35mm]Pone.north) {p=1};
  \node[cyan,   PLabel] at ([yshift=0.35mm]Ptwo.north) {p=2};
  \node[purple, PLabel] at ([yshift=0.35mm]Pp.north) {p=P};

  \end{tikzpicture}
  \end{CircuitRow}

  \begin{tcolorbox}[enhanced, colback=PaperWhite, colframe=LineGray!40, boxrule=0.4pt,
    arc=2mm, left=1.5mm,right=1.5mm,top=1mm,bottom=1mm]
    \scriptsize
    \textbf{Legend:} \textcolor{red}{p=1} / \textcolor{cyan}{p=2} / \textcolor{purple}{p=P} indicate iteration rounds.
  \end{tcolorbox}

\end{minipage}
\hfill
\begin{minipage}[t]{0.46\textwidth}
  \vspace{0pt}

  \begin{tcolorbox}[
        enhanced,
        title=\textbf{Algorithm Pseudocode},
        colback=CodeBack,
        colframe=MainFrame,
        colbacktitle=TitleBot,
        coltitle=white,
        fonttitle=\bfseries,
        boxrule=0.8pt,
        arc=2.5mm,
        left=2mm,right=2mm,top=1.5mm,bottom=1.5mm,
        drop shadow={blur shadow, shadow xshift=1.2pt, shadow yshift=-1.2pt, shadow blur steps=8, opacity=0.25},
        height=9.6cm,
        valign=top,
        boxed title style={
            colback=TitleBot,
            interior style={top color=TitleTop, bottom color=TitleBot}
        },
    ]
{\normalsize
\begin{algorithmic}[1]
    \For{$k = 1$ to $n_2$}
        \State Measure to estimate $O_k(1)$
        \State \hspace*{\algorithmicindent}
        $\displaystyle = \langle \psi(1)|
        i [\mathbf{H}^{\hat{f}}_k, \mathbf{H}_f]
        | \psi(1) \rangle$
        \State Set $\alpha_k(1) = \beta_k(1) O_k(1)$
    \EndFor
    \For{$p = 1$ to $P$}
    \State Apply $U^f(p) = $
      $\displaystyle \prod_{k=1}^{n_1} e^{-i\eta_k(p)\mathbf{H}^f_k (t_p-t_{p-1})}$
    \State Apply $U^{\hat{f}}(p) = $
      $\displaystyle \prod_{k=1}^{n_2} e^{-i\alpha_k(p) H_k^{\hat{f}} (t_p-t_{p-1})}$
    \For{$k = 1$ to $n_2$}
        \State Measure to estimate $O_k(p+1)$
        \State \hspace*{\algorithmicindent}
        $\displaystyle = \langle \psi(p+1)|
        i [\mathbf{H}^{\hat{f}}_k, \mathbf{H}_f]
        | \psi(p+1) \rangle$
        \State Set $\alpha_k(p+1) = \beta_k(p+1) O_k(p+1)$
    \EndFor
  \EndFor
\end{algorithmic}
}
  \end{tcolorbox}

  \vspace{1mm}
  \begin{tcolorbox}[enhanced, colback=SoftGray, colframe=LineGray!35,  boxrule=0.5pt,
    arc=2mm, left=2mm,right=2mm,top=1mm,bottom=1mm]
    \scriptsize
    \textbf{Notes.} The dashed blocks correspond to one Lyapunov update step $p$.
    Measurement outcomes provide feedback parameters $\alpha_k(p)$.
  \end{tcolorbox}

\end{minipage}

\end{tcolorbox}

\caption{Discrete Implementation of the Algorithm Derived from the Lyapunov Framework}
\label{fig:algorithm}
\end{figure*}

\subsection{Lyapunov Function with Two Parameters}
The above analysis represents the approximation ratio using a parameter $\lambda (t) \ge 0$, which can also be expressed as a ratio of two positive parameters, i.e., $\lambda(t) = \frac{y(t)}{x(t)}$ for $x(t),y(t) \in \mathbb{R}_{+}$. Accordingly, we introduce a Lyapunov function with two parameters, which takes the following form:
\begin{equation}
    E(t) = x(t) \langle \psi(t)| \mathbf{H}_f | \psi(t) \rangle- y(t) \langle \psi^*| \mathbf{H}_f | \psi^* \rangle
\end{equation}    
where $x(t),y(t)$ are non-decreasing. The additional parameter potentially enables a more precise estimation of the approximation ratio during the execution of the algorithm. Similarly, if it can be ensured that $E(t) \geq E(0)$ holds for any $t$, the inequality can be reformulated to derive an expression for the approximation ratio:
\begin{equation}
    \begin{aligned}
        \langle \psi(t)| \mathbf{H}_f | \psi(t) \rangle \ge & \cfrac{y(t)-y(0)}{x(t)} \langle \psi^*| \mathbf{H}_f | \psi^* \rangle \\
        & + \cfrac{x(0)}{x(t)}\langle \psi(0)| \mathbf{H}_f | \psi(0) \rangle
    \end{aligned}
    \label{eq::10}
\end{equation}
The condition $\frac{dE(t)}{dt} \ge 0$ remains a sufficient criterion to ensure $E(t) \ge E(0)$. The derivative of the two-parameter Lyapunov function is given by:
\begin{equation}
    \begin{aligned}
        \cfrac{d E(t)}{d t} = & x(t) \langle \psi(t)| i [\mathbf{H}^{\hat{f}}(t) , \mathbf{H}_f] | \psi(t) \rangle \\
         + &\cfrac{d x(t)}{d t} \langle \psi(t) |  \mathbf{H}_f | \psi(t) \rangle - \cfrac{d y(t)}{dt} \langle \psi^*| \mathbf{H}_f | \psi^* \rangle
    \end{aligned}
\end{equation}

In addition, on some combinatorial optimization problems, the upper bound of the optimal solution can also be characterized as the summation of two parts: the function value of the current solution and an adjustment term. That implies that, within the two-parameter framework, the upper bound of the optimal solution $\langle \psi^* | \mathbf{H}_f | \psi^* \rangle$ takes the following form:
\begin{equation}\label{eq::two_para_optimal_solution}
        \begin{aligned}
            0 &\le \langle \psi^*| \mathbf{H}_f | \psi^* \rangle \\
            & \le a(t) \langle \psi(t)| \mathbf{H}_f | \psi(t) \rangle + b(t) \langle \psi(t)| \mathbf{Q}(t) | \psi(t) \rangle
        \end{aligned}
\end{equation}
where $a(t),b(t)$ are non-negative. Therefore, we obtain a sufficient condition such that $ \frac{dE(t)}{dt} $ is non-negative, from which a set of differential equations governing the parameters $x(t)$ and $y(t)$ can be derived:
\begin{widetext}
\begin{equation*}
    \begin{array}{c}
            \cfrac{d E(t)}{d t}\ge \left(x(t) \langle \psi(t)| i [\mathbf{H}^{\hat{f}}(t) , \mathbf{H}_f] | \psi(t) \rangle - b(t) \cfrac{d y(t)}{d t} \langle \psi(t)| \mathbf{Q}(t) | \psi(t) \rangle \right)+ \left( \cfrac{d x(t)}{d t} - a(t) \cfrac{d y(t)}{d t} \right) \langle \psi(t) |  \mathbf{H}_f | \psi(t) \rangle = 0\\ 
        \Downarrow\\
        \begin{cases}
            x(t) \langle \psi(t)| i [\mathbf{H}^{\hat{f}}(t) , \mathbf{H}_f] | \psi(t) \rangle = b(t) \cfrac{d y(t)}{d t} \langle \psi(t)| \mathbf{Q}(t) | \psi(t) \rangle \\
            \cfrac{d x(t)}{d t} - a(t) \cfrac{d y(t)}{d t} = 0
        \end{cases} 
    \end{array}
\end{equation*}
\end{widetext}
Similar with the one parameter Lyapunov framework, the Hamiltonian $ \mathbf{H}(t) $ can be decomposed into two components: one commutes with the target Hamiltonian and the other one does not. Here, we continue to adopt parameters defined in the last subsection. From the above equations, the explicit forms of parameters $x(t)$ and $y(t)$ can be obtained as follows:
\begin{equation}\label{eq::two_para_estimate}
    \begin{aligned}
        y(t) & = y(0) + \int_0^t \cfrac{x(s)}{b(s)} \cdot \cfrac{\sum_{k=1}^{n_2}\alpha_k(s)O_k(s)} { \langle \psi(s)| \mathbf{Q}(s) | \psi(s) \rangle} ds \\
        x(t) & = x(0)\exp{\left[\int_0^t \cfrac{a(s)}{b(s)} \cdot \cfrac{\sum_{k=1}^{n_2}{\alpha_k(s) O_k(s)}}{\langle \psi(s)| \mathbf{Q}(s) | \psi(s) \rangle } ds \right]}
    \end{aligned}
\end{equation}
Substituting this into Eq. (\ref{eq::10}), we obtain the approximation ratio under the two parameter Lyapunov framework:
\begin{equation}\label{eq::two_para_ratio}
    \cfrac{y(T)-y(0)}{x(T)} = \cfrac{\int_0^T \frac{x(t)}{b(t)} \cdot \frac{\sum_{k=1}^{n_2}\beta_k(t)O^2_k(t)} { \langle \psi(t)| \mathbf{Q}(t) | \psi(t) \rangle} dt}{x(0)\exp{\left[\int_0^T \frac{a(t)}{b(t)} \cdot \frac{\sum_{k=1}^{n_2}\beta_k(t)O^2_k(t)} { \langle \psi(t)| \mathbf{Q}(t) | \psi(t) \rangle} dt \right]}}
\end{equation}
By appropriately adjusting the tunable parameter $\beta(t)$, the above approximation ratio could be large. Within the two parameter Lyapunov framework, we also analyze the increment in the approximation ratio achieved during each iteration of the discretized algorithm, which is formally as follows. The detailed analysis is presented in Appendix \ref{Appendix::Two-Parameter-Discretion}.


\begin{theorem}
If the evolution time $t_{j+1}-t_j \le \frac{\epsilon}{Bx(t_{j+1})+C\epsilon+\sqrt{A x(t_{j+1})\epsilon}} $, the parameter increment in the $j+1$-th iteration can be expressed as below with the bounded discretization error $\epsilon$
    \begin{equation}
        \begin{array}{c}
            x(t_{j+1}) = x(t_j) \cdot \cfrac{b(t_j) \langle \psi(t_j)| \mathbf{Q}(t_j) | \psi(t_j)}{c(t_j)}\\
            y(t_{j+1}) = y(t_j) + x(t_j) \cdot \cfrac{\sum_{k=1}^{n_2} \beta_k(t_j)O^2_k(t_j) (t_{j+1}-t_j)}{c(t_j)}
        \end{array}
    \end{equation}
    where 
    \begin{equation*}
        \begin{aligned}
           c(t_j) = & b(t_j) \langle \psi(t_j)| \mathbf{Q}(t_j) | \psi(t_j) \\
        & -a(t_j) \sum_{k=1}^{n_2} \beta_k(t_j)O^2_k(t_j) (t_{j+1}-t_j)
        \end{aligned}
    \end{equation*}
\end{theorem}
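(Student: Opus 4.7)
The plan is to mirror the discretization analysis behind the preceding theorem while carefully tracking the two coupled differential equations for $x(t)$ and $y(t)$ rather than the single parameter $\lambda(t)$. After substituting the feedback choice $\alpha_k(t)=\beta_k(t)O_k(t)$ and the two-term upper bound $\langle\psi^*|\mathbf{H}_f|\psi^*\rangle\le a(t)\langle\psi(t)|\mathbf{H}_f|\psi(t)\rangle+b(t)\langle\psi(t)|\mathbf{Q}(t)|\psi(t)\rangle$, the enforced non-decreasing condition $\frac{dE(t)}{dt}\ge 0$ splits into two ODEs: (a) $x(t)\sum_k\beta_k(t)O_k^2(t)=b(t)\tfrac{dy}{dt}\langle\psi(t)|\mathbf{Q}(t)|\psi(t)\rangle$, and (b) $\tfrac{dx}{dt}=a(t)\tfrac{dy}{dt}$. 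I would freeze $\beta_k$, $O_k$, $a$, $b$, $\mathbf{Q}$, and $|\psi\rangle$ at the left endpoint $t_j$ on each subinterval $(t_j,t_{j+1}]$, consistent with the Trotterised circuit that keeps the Hamiltonian piecewise constant, while treating $x(t_{j+1})$ and $y(t_{j+1})$ as unknowns. This turns (a) and (b) into a semi-implicit Euler step.

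Next, I would solve the resulting $2\times 2$ linear system explicitly. Write $\Delta t=t_{j+1}-t_j$, $Q_j=\langle\psi(t_j)|\mathbf{Q}(t_j)|\psi(t_j)\rangle$, and $S_j=\sum_k\beta_k(t_j)O_k^2(t_j)$. Eliminating $y(t_{j+1})-y(t_j)$ via (b) and substituting into (a) yields a scalar equation for $x(t_{j+1})$ whose solution is $x(t_{j+1})=x(t_j)\,b(t_j)Q_j/(b(t_j)Q_j-a(t_j)S_j\Delta t)$, which is exactly the claimed identity with denominator $c(t_j)$. Back-substituting into (b) recovers $y(t_{j+1})-y(t_j)=x(t_j)S_j\Delta t/c(t_j)$, matching the stated formula.

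I would then justify the step-size bound by controlling the combined Trotter and truncation error incurred when the continuous Schrödinger evolution under $\mathbf{H}(t)$ is approximated by the product $U^{\hat{f}}(j+1)U^{f}(j+1)$. The key estimate is a Taylor expansion of $\langle\psi(t_{j+1})|\mathbf{H}_f|\psi(t_{j+1})\rangle$ around $t_j$, with remainder bounded by a polynomial in $\Delta t$ whose coefficients are the operator-norm expressions $A$, $B$, $C$ of the previous theorem; because $\mathbf{H}_f$ now appears weighted by $x(t)$ inside $E(t)$, the tolerated error $\epsilon$ at the Lyapunov-value level translates into an allowed error of $\epsilon/x(t_{j+1})$ in the expectation value itself, which is what produces the extra $x(t_{j+1})$ factors in the denominator. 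Inverting the resulting quadratic-in-$\sqrt{\Delta t}$ inequality gives $\Delta t\le \epsilon/(Bx(t_{j+1})+C\epsilon+\sqrt{Ax(t_{j+1})\epsilon})$.

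The main obstacle is the implicit coupling between the step size and $x(t_{j+1})$ in this bound. Since the semi-implicit system determines $x(t_{j+1})$ from $\Delta t$ through $c(t_j)$, one must either fix $\Delta t$ and verify positivity of $c(t_j)$ afterwards, or read the stated inequality as an a posteriori certificate on the pair $(\Delta t, x(t_{j+1}))$; either way, strict positivity of $c(t_j)$, namely $b(t_j)Q_j>a(t_j)S_j\Delta t$, is what prevents the update from blowing up and what must be shown to be compatible with the Trotter error budget. Reconciling this feedback between $x$ and $\Delta t$—absent in the one-parameter analysis—is the delicate step of the proof, and should be handled by first assuming a crude a priori bound on $x(t_{j+1})$ (say via monotonicity of $x(t)$ and an initial-interval estimate) and then bootstrapping it using the explicit formula derived above.
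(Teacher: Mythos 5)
Your proposal is correct and follows essentially the same route as the paper's Appendix~\ref{Appendix::Two-Parameter-Discretion}: the increment of the two-parameter potential is split into the two conditions that must vanish, the feedback choice $\alpha_k(t_j)=\beta_k(t_j)O_k(t_j)$ is substituted, the resulting $2\times 2$ system is solved with $x(t_{j+1})$ appearing on the left of the first equation (your ``semi-implicit'' step, which is exactly what produces the denominator $c(t_j)$), and the Taylor/Trotter remainder is budgeted at $\epsilon/x(t_{j+1})$ to yield the stated step-size bound. Your closing remark about the implicit coupling between $\Delta t$ and $x(t_{j+1})$, and the need for $c(t_j)>0$, is a genuine subtlety that the paper leaves implicit rather than resolves.
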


\subsection{Application for Max-Cut Problem}
The Max-Cut problem is among the most common benchmark for combinatorial optimization problem solving by quantum algorithms. We solve the Max-Cut problem on unweighted, undirected graphs and demonstrate the performance of the approximation ratio on our framework in comparison with the actual approximation ratio.

Recall the definition of the Max-Cut problem. For a graph $G = (V, E)$, where $V$ is the set of vertices with $|V| = n$ and $E$ is the set of edges with $|E| = m$, finding a partition $(V_1,V_2)$ of the vertex set $V$ such that edges between $V_1$ and $V_2$ are maximized. The Hamiltonian of the Max-Cut problem is defined on $n$ qubits as:
\begin{equation*}
    \mathbf{H}_f = \sum_{<j,k>\in E} \cfrac{1}{2}(\mathbf{I}^{\otimes n} - \mathbf{Z}_j\mathbf{Z}_k)
\end{equation*}

According to Eq.(\ref{eq::Algorithm_Hamiltonian_Choice}), we choose $\mathbf{H}^f(t)$ and $\mathbf{H}^{\hat{f}}(t)$ as below:
\begin{equation}\label{eq::using_FALQON_Hamilton}
    \mathbf{H}^f(t) = \cfrac{\mathbf{H}_f}{m}, \quad \mathbf{H}^{\hat{f}}(t) = \beta(t)O(t)\sum_{j=1}^n\mathbf{X}_j
\end{equation}
where $O(t) = \bra{\psi(t)} i [\sum_{j=1}^n\mathbf{X}_j, \mathbf{H}_f] \ket{\psi(t)}$ denotes the observable used in the measurement-feedback control, and $\beta(t)$ is a tunable parameter. Therefore, the Hamiltonian of the algorithm can be defined as:
\begin{equation*}
    \mathbf{H}(t) = \cfrac{\mathbf{H}_f}{m} + \beta(t)\bra{\psi(t)} i [\sum_{j=1}^n\mathbf{X}_j, \mathbf{H}_f] \ket{\psi(t)}\sum_{j=1}^n\mathbf{X}_j
\end{equation*}
Note that the above choice for $\mathbf{H}^{f}(t)$ and $\mathbf{H}^{\hat{f}}(t)$ is a concrete instance of QAOA algorithms with the parameter $\beta(t)\bra{\psi(t)} i [\sum_{j=1}^n\mathbf{X}_j, \mathbf{H}_f] \ket{\psi(t)}$ for mixer operators and $1/m$ for phase operators.

To obtain the lower bound of the approximation ratio, we first construct an upper bound on the optimal solution and a tighter upper bound will result in a better lower bound of the approximation ratio. For the one parameter Lyapunov function, as a demonstration of deriving a lower bound on the approximation ratio of the algorithm, we note that the Max-Cut problem admits a simple upper bound $m$:
\begin{equation*}
    \bra{\psi^*} \mathbf{H}_f \ket{\psi^*} \le \bra{\psi(t)} m\mathbf{I} \ket{\psi(t)}
\end{equation*}
Leveraging Eq.(\ref{eq::single_para_estimate}), we obtain the explicit forms of the lower bound for the approximation ratio derived from the one parameter Lyapunov function: 
\begin{equation}
    \begin{array}{cc}
         \lambda(T)-\lambda(0) = \displaystyle\int_{0}^{T} \cfrac{\beta(s) \left[\langle \psi(s)| i [\sum_{k=1}^n\mathbf{X}_k, \mathbf{H}_f] | \psi(s) \rangle \right]^2} {\bra{\psi(s)} m\mathbf{I} \ket{\psi(s)}}ds
    \end{array}
\end{equation}

For two parameter Lyapunov function, we need to construct another upper bound on the optimal solution. To obtain an upper bound that is not much looser than the one parameter case, we impose that the right-hand side of Eq. (\ref{eq::two_para_optimal_solution}) be less than or equal to $\bra{\psi(t)} m\mathbf{I} \ket{\psi(t)}$, from which the upper bound of $\langle \psi(t)| \mathbf{Q}(t) | \psi(t) \rangle$ can be derived as:
\begin{equation*}
    \begin{array}{cc}
        \langle \psi(t)| \mathbf{Q}(t) | \psi(t) \rangle \leq \cfrac{\bra{\psi(t)} m\mathbf{I} \ket{\psi(t)} - a(t)\langle \psi(t)| \mathbf{H}_f | \psi(t) \rangle}{b(t)}
    \end{array}
\end{equation*}

To ensure the validity of Eq. (\ref{eq::two_para_optimal_solution}), one approach is to set $\langle \psi(t)| \mathbf{Q}(t) | \psi(t) \rangle$ at its upper bound and choose $a(t) = b(t) = 1$. Consequently, Eq. (\ref{eq::two_para_estimate}) can be reformulated as:

\begin{equation}
    \begin{array}{cc}
    \displaystyle
         y(T) = y(0) + \int_{0}^{T}  \cfrac{x(s)\beta(s) \left[\langle \psi(s)| i [\sum_{k}\mathbf{X}_k, \mathbf{H}_f] | \psi(s) \rangle \right]^2} {  \langle \psi(s)| m\mathbf{I}-\mathbf{H}_f | \psi(s) \rangle} ds \\
    \displaystyle
        x(T) = x(0)\exp{\left[\int_{0}^{T} \cfrac{\beta(s) \left[\langle \psi(s)| i [\sum_{k}\mathbf{X}_k, \mathbf{H}_f] | \psi(s) \rangle \right]^2}{ \langle \psi(s)| m\mathbf{I}-\mathbf{H}_f | \psi(s) \rangle} ds \right]}
    \end{array}
\end{equation}


During the execution of the algorithm at time $t$, the parameters $\lambda(t)$, $x(t)$, and $y(t)$ associated with the approximation ratio can be calculated using numerical integration methods such as the trapezoidal rule, which produces a lower bound of the true approximation ratio at any time $t$ via $\lambda(t) - \lambda(0)$ and $\frac{y(t) - y(0)}{x(t)}$.



\begin{figure*}
    \centering
    \begin{minipage}{0.95\linewidth}
        \includegraphics[width=1\linewidth]{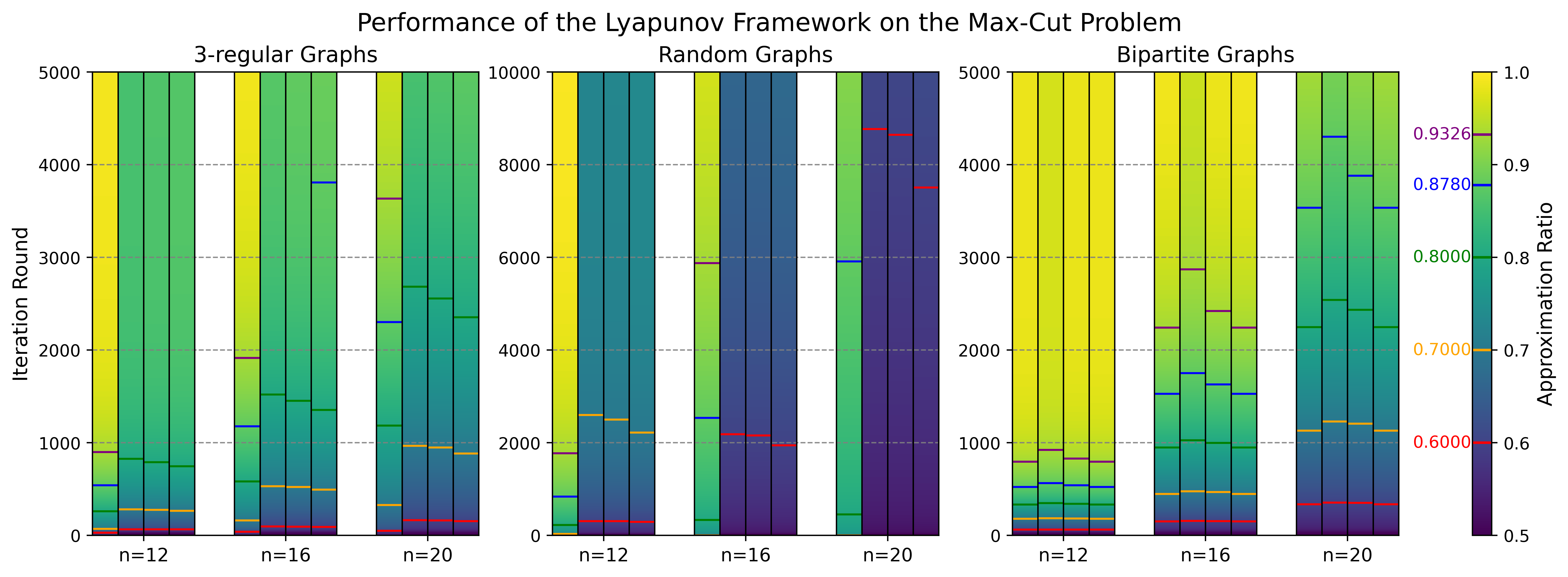}
    \end{minipage}
    
    \caption{The algorithm defined in Eq. (\ref{eq::using_FALQON_Hamilton}) was evaluated on nine experimental settings, corresponding to 3-regular graphs, Erdős-Rényi random graphs, and bipartite graphs with node sizes $n={12,16,20}$ (from left to right). Each experiment contains four bar plots, where the color of each bar encodes the value of the corresponding parameter across different iteration rounds. Within each experiment, the four bars from left to right represent, respectively,
    $\langle \psi(t_j)| \mathbf{H}_f | \psi(t_j) \rangle / \langle \psi^*| \mathbf{H}_f | \psi^* \rangle,\quad
    \lambda(t_j) - \lambda(0),\quad
    (y(t_j) - y(0)) / x(t_j),\quad
    \langle \psi(t_j)| \mathbf{H}_f | \psi(t_j) \rangle / m$.}
    \label{approximation-ratio}
\end{figure*}

\section{Numerical Experiments}
In our numerical experiments, we simulate our framework on three classes of graphs: 3-regular graphs, Erdős-Rényi random undirected graphs with $p = 0.5$, and binomial random bipartite graphs with $p = 0.5$. The test set includes all connected 3-regular graphs with $n \in \{4,6,8,...,20\}$, totaling 501 instances. For Erdős-Rényi random graphs and random bipartite graphs, we generate 50 connected graphs for each $n \in \{10,11,...,,20\}$. The algorithm is initialized in the state $|+\rangle^{\otimes n} = \left(\tfrac{1}{\sqrt{2}}(|0\rangle + |1\rangle)\right)^{\otimes n}$. The single-step size is set to $\Delta t = 0.08$, and the total number of iterations is fixed at $R = 10{,}000$. The parameter $\beta$ is defined as a gradually decelerating decreasing function,
\begin{equation}
    \beta(t) = 0.04 \left(0.5 \big(1 - e^{-\tfrac{2}{R}(\Delta t R - t)}\big) + 0.5\right)
    \label{beta_value}
\end{equation}

Experiments were performed on a workstation running Ubuntu 24.04.1 LTS, equipped with dual-socket Intel Xeon Gold 5222 CPUs @ 3.80 GHz (8 physical cores / 16 threads), 125 GiB RAM, and an NVIDIA GeForce RTX 3090 GPU (24 GB; driver 550.144.03). The code implementing our framework has been made publicly available at \url{https://gitee.com/lzier/lyapunov_qaoa_open}.

\subsection{Approximation Ratio} \label{subsection::5.1}
We implement the algorithm as defined in Eq.(\ref{eq::using_FALQON_Hamilton}) and evaluate both the true approximation ratio, i.e., $\bra{\psi(t_j)} \mathbf{H}_f \ket{\psi(t_j)} / \bra{\psi^*} \mathbf{H}_f \ket{\psi^*}$, during its execution and the lower bounds on the approximation ratio derived from the one parameter and two parameter Lyapunov functions, respectively. Here, $\bra{\psi^*}\mathbf{H}_f \ket{\psi^*}$ is the optimal value of the Max-Cut instance, obtained by brute-force enumeration over all $2^n$ bit-string configurations. To evaluate the effectiveness of the approximate ratio from continuous procedures to discrete procedures, we calculate the averages of $\lambda(t_j) - \lambda(0)$ and $\left( y(t_j) - y(0)\right)/x(t_j)$ on each test instance. In addition, we calculate the $\bra{\psi(t)} \textbf{H}_f \ket{\psi(t)}/m$ as references to illustrate errors from discrete procedure.

Fig. \ref{approximation-ratio} illustrates the performance of the algorithm. From the figure, we observe that as the number of iterations increases, the true approximation ratio gradually converges to a value close to 1, indicating that the algorithm is capable of producing near-optimal solutions when the number of rounds tends to infinity. Moreover, for any finite number of iterations, the approximation ratio derived from both the one parameter and two parameter Lyapunov functions consistently lie below the true ratio, thereby serving as guaranteed lower bounds. Because a relatively loose upper bound on the optimal solution was used, the Lyapunov-based approximation ratio deviate from the true approximation ratio. Nevertheless, the numerical results demonstrate that they still provide accurate approximations to $\bra{\psi(t_j)}\mathbf{H}_f\ket{\psi(t_j)}/m$. In practice, if a tighter upper bound for the optimal solution is available, we expect the approximation ratio derived from the Lyapunov framework to achieve even stronger performance.


Erdős-Rényi random graphs tend to exhibit higher connectivity, and their degree distribution is less uniform than that of 3-regular graphs. As a consequence, the local structural variations across different regions of the graph are substantial, which makes it difficult to characterize the full evolution using a single set of global parameters. This leads to a slower convergence of the true approximation ratio compared with 3-regular graphs, as reflected by the fact that Erdős-Rényi graphs attain a lower approximation ratio under the same number of iterations, particularly for larger values of $n$. These structural differences also influence the lower bound for the approximation ratio derived from the one parameter and two parameter Lyapunov formulations, resulting in a larger gap for the true approximation ratio. Nevertheless, both discrete procedures serve as good approximations to $\bra{\psi(t_j)} \mathbf{H}_f \ket{\psi(t_j)} / m$.


Since the optimal solution for bipartite graphs is exactly equal to $ m $, the true approximation ratio matches $\langle \psi(t_j) | \mathbf{H}_f | \psi(t_j) \rangle / m$ at every iteration step. Because the lower bounds obtained from both Lyapunov formulations provide accurate estimates of $\langle \psi(t_j) | \mathbf{H}_f | \psi(t_j) \rangle / m$, the resulting approximation-ratio estimates are more precise for bipartite graphs than for 3-regular or Erdős-Rényi random graphs. Moreover, across all tests, the two parameter formulation consistently outperforms the one parameter.



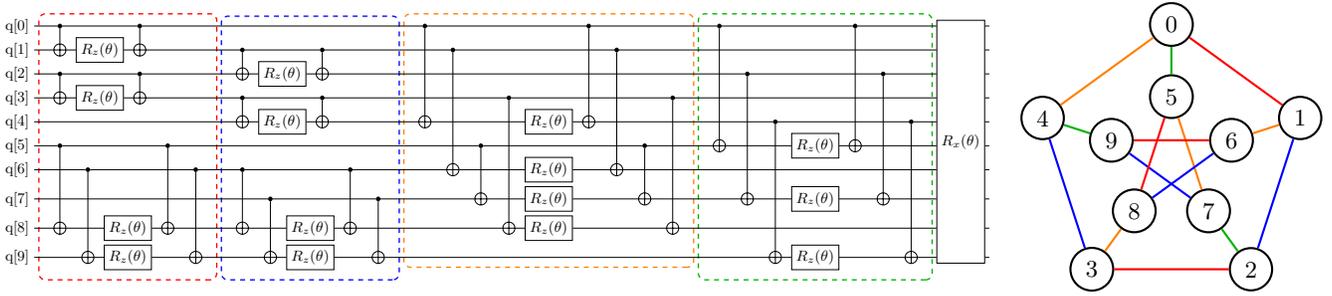
\begin{figure*}
    \begin{minipage}{0.74\linewidth}
    \resizebox{1\linewidth}{!}{
        \begin{tikzpicture}
            \begin{yquant}
            
                qubit q[10];
            
                [this subcircuit box style={
                draw=red,        
                dashed,          
                rounded corners, 
                line width=0.8pt,
                inner ysep=6pt,  
            }]
            subcircuit {
                qubit {} q[10];
        
                cnot q[1] | q[0];
                cnot q[3] | q[2];
                cnot q[8] | q[5];
                cnot q[9] | q[6];
                
                box {$R_z{(\theta)}$} (q[1]);
                box {$R_z{(\theta)}$} (q[3]);
                box {$R_z{(\theta)}$} (q[8]);
                box {$R_z{(\theta)}$} (q[9]);
                
                cnot q[1] | q[0];
                cnot q[3] | q[2];
                cnot q[8] | q[5];
                cnot q[9] | q[6];
            } (q);
            [this subcircuit box style={
                draw=blue,        
                dashed,          
                rounded corners, 
                line width=0.8pt,
                inner ysep=6pt,  
            }]
            subcircuit {
                qubit {} q[10];
        
                cnot q[2] | q[1];
                cnot q[4] | q[3];
                cnot q[8] | q[6];
                cnot q[9] | q[7];
                
                box {$R_z{(\theta)}$} (q[2]);
                box {$R_z{(\theta)}$} (q[4]);
                box {$R_z{(\theta)}$} (q[8]);
                box {$R_z{(\theta)}$} (q[9]);
                
                cnot q[2] | q[1];
                cnot q[4] | q[3];
                cnot q[8] | q[6];
                cnot q[9] | q[7];
            } (q);
            [this subcircuit box style={
                draw=orange,        
                dashed,          
                rounded corners, 
                line width=0.8pt,
                inner ysep=6pt,  
            }]
            subcircuit {
                qubit {} q[10];
        
                cnot q[4] | q[0];
                cnot q[6] | q[1];
                cnot q[7] | q[5];
                cnot q[8] | q[3];
                
                box {$R_z{(\theta)}$} (q[4]);
                box {$R_z{(\theta)}$} (q[6]);
                box {$R_z{(\theta)}$} (q[7]);
                box {$R_z{(\theta)}$} (q[8]);
                
                cnot q[4] | q[0];
                cnot q[6] | q[1];
                cnot q[7] | q[5];
                cnot q[8] | q[3];
            } (q);
            [this subcircuit box style={
                draw=green!70!black,        
                dashed,          
                rounded corners, 
                line width=0.8pt,
                inner ysep=6pt,  
            }]
            subcircuit {
                qubit {} q[10];
        
                cnot q[5] | q[0];
                cnot q[7] | q[2];
                cnot q[9] | q[4];
                
                box {$R_z{(\theta)}$} (q[5]);
                box {$R_z{(\theta)}$} (q[7]);
                box {$R_z{(\theta)}$} (q[9]);
                
                cnot q[5] | q[0];
                cnot q[7] | q[2];
                cnot q[9] | q[4];
            } (q);
            
            box {$R_x{(\theta)}$} (q);
             \end{yquant}
        \end{tikzpicture}
        }
    \end{minipage}
    \begin{minipage}{0.25\linewidth}
        \begin{tikzpicture}[every node/.style={circle, draw, fill=white, minimum size=5mm},
                    scale=1.2, thick]

            \node (A1) at (90:1.5) {0};
            \node (A2) at (18:1.5) {1};
            \node (A3) at (-54:1.5) {2};
            \node (A4) at (-126:1.5) {3};
            \node (A5) at (162:1.5) {4};
            
            \node (B1) at (90:0.7) {5};
            \node (B2) at (18:0.7) {6};
            \node (B3) at (-54:0.7) {7};
            \node (B4) at (-126:0.7) {8};
            \node (B5) at (162:0.7) {9};
            
            \draw[red] (A1)--(A2);
            \draw[blue] (A2)--(A3);
            \draw[red] (A3)--(A4);
            \draw[blue] (A4)--(A5);
            \draw[orange] (A5)--(A1);
            
            \draw[orange] (B1)--(B3);
            \draw[blue] (B3)--(B5);
            \draw[red] (B5)--(B2);
            \draw[blue] (B2)--(B4);
            \draw[red] (B4)--(B1);
            
            \draw[green!70!black] (A1)--(B1);
            \draw[orange] (A2)--(B2);
            \draw[green!70!black] (A3)--(B3);
            \draw[orange] (A4)--(B4);
            \draw[green!70!black] (A5)--(B5);
        
        \end{tikzpicture}
    \end{minipage}
    \caption{The right panel illustrates one possible edge-coloring scheme of the Petersen graph based on Vizing's theorem, while the left panel depicts a single-layer iterative circuit of the algorithm defined in Eq. (\ref{eq::using_FALQON_Hamilton}). Edges sharing the same color can be processed in parallel, resulting in a circuit of constant depth.}
    \label{Fig::Peterson_example}
\end{figure*}

\begin{figure*}
    \centering
    \begin{minipage}{0.95\linewidth}
        \includegraphics[width=1\linewidth]{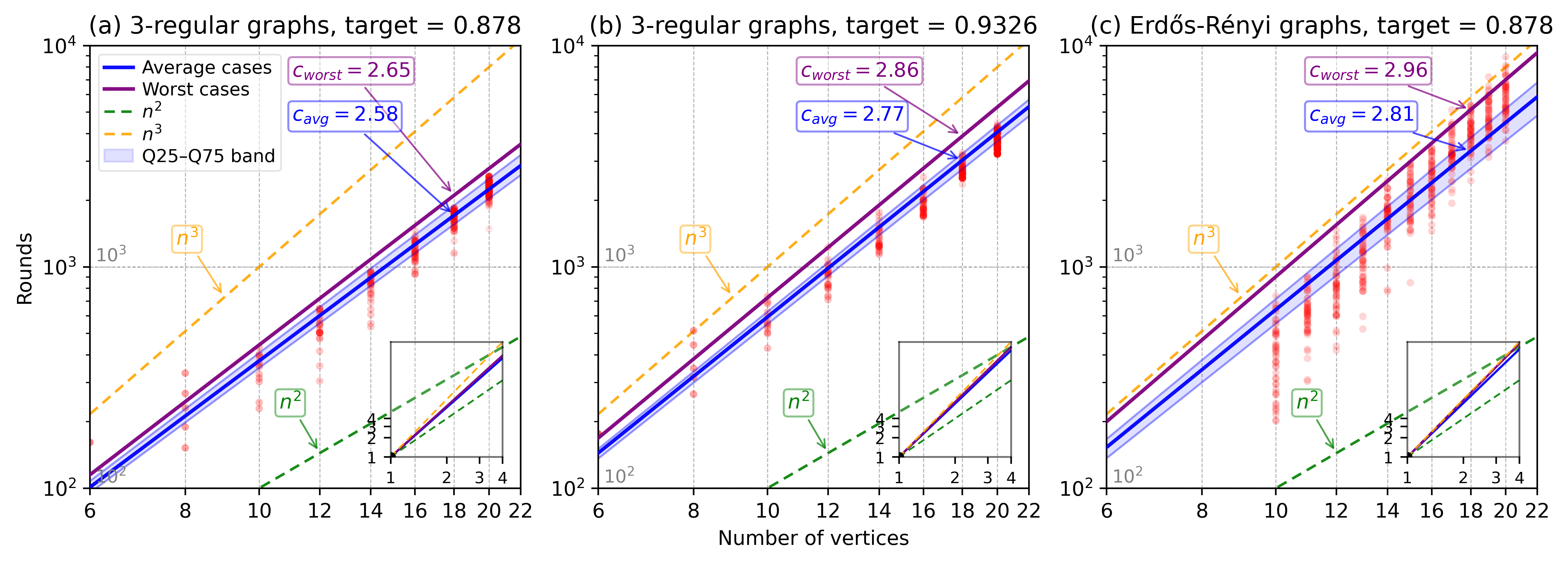}
    \end{minipage}
    \caption{The scatter points represent the number of iterations required for the algorithm to achieve an approximation ratio on instances. The blue line denotes the linear fit of these data points. The purple line corresponds to the linear fit of the largest iteration counts observed for each graph size $n$. The light-blue shaded region shows the interquartile spread of the iteration counts (25th–75th percentiles) for each graph size $n$. For comparison, the yellow and green lines represent the functions $R(n) = n^3$ and $R(n) = n^2$, respectively.
}
    \label{convergence rate}
\end{figure*}

\subsection{Convergence Rate}
We now examine the convergence rate of the algorithm defined in Eq. (\ref{eq::using_FALQON_Hamilton}). The convergence rate is characterized by the function $R(n)$, which denotes the number of iterations required to achieve a target approximation ratio for an undirected graph of size $n$. 
According to Vizing's theorem, any 3-regular graph can be edge-colored using at most four colors. Edges assigned the same color can be processed simultaneously, as they correspond to the commuting terms in Eq. (\ref{eq::using_FALQON_Hamilton}). Consequently, the circuit implementing a single iteration of the algorithm defined by Eq. (\ref{eq::using_FALQON_Hamilton}) has constant depth when applied to 3-regular graphs and the convergence rate $R(n)$ is expected to be a polynomial function of $n$, that is, there exists a constant $c$ such that $R(n) = O(n^c)$. An illustrative example is shown in Fig. \ref{Fig::Peterson_example}.


The experimental data set includes all 501 3-regular graphs with fewer than 20 vertices, together with 50 random undirected graphs for each size $n \in \{10,11,\ldots,20\}$. The best-known classical algorithm for solving the Max-Cut problem on general undirected graphs without special structural constraints achieves an approximation ratio of 0.878 \cite{GoemansWilliamson1995}, while for 3-regular graphs, the best classical approximation ratio is 0.9326 \cite{HalperinLivnatZwick2002}. Therefore, to estimate the exponent $c$, we record the number of iterations required for each instance to reach approximation ratios of 0.878 and 0.9326, respectively. These iteration counts are plotted against $n$ on a log-log scale of base 10, with the horizontal axis representing the size of the graph $n$ and the vertical axis representing the number of iterations. Linear regression is then performed on all data points (blue line) and on the maximal iteration counts for each value of $n$ (purple line). For reference, we also plot the curves $R(n)=n^2$ (green) and $R(n)=n^3$ (orange). The light-blue shaded band indicates the interquartile range (Q25–Q75) of the iteration counts for each graph size $n$. For each $n$, we compute the 25th and 75th percentiles of the required rounds across instances, and the two boundary curves are obtained by applying a linear fit to these percentile values. The results are shown in Fig. \ref{convergence rate}.


The left one in Fig. \ref{convergence rate} presents the iteration complexity of the algorithm on 3-regular graphs with a target approximation ratio of 0.878. The fitted scaling exponents are $c = 2.58$ on average and $c = 2.65$ in the worst case. In the log-log plot, most of the data points cluster closely around the regression line (blue), indicating that the algorithm exhibits stable performance in graphs of similar structure, without rare outliers with significantly worse behavior. 

The middle one in Fig. \ref{convergence rate} shows the number of iterations required to achieve an approximation ratio of 0.9326 on 3-regular graphs. The fitted scaling exponents are $c = 2.77$ on average and $c = 2.86$ in the worst case. Compared with the left one in Fig. \ref{convergence rate}, achieving a higher approximation ratio requires more iterations, leading to larger values of $c$. For reference, the best known classical approximation algorithm that attains a ratio of 0.9326 run in time $O(n^{3.5}/\log n)$\cite{HalperinLivnatZwick2002}. In contrast, the observed scaling suggests that even in the worst case, the circuit depth required by the algorithm defined in Eq. (\ref{eq::using_FALQON_Hamilton}) grows no faster than $O(n^3)$, highlighting the possible advantage of the quantum approximation algorithm over classical methods.

The right one in Fig. \ref{convergence rate} reports the iteration counts for achieving an approximation ratio of 0.878 on random undirected graphs. The fitted scaling exponents are $c = 2.81$ on average and $c = 2.96$ in the worst case. The larger exponents compared with the 3-regular case indicate that random graphs require deeper circuits to achieve comparable performance, underscoring the role of graph structure in algorithmic behavior.

\subsection{Comparison of Another Ansatz}
Next, we discuss the performance of the Lyapunov framework on another ansatz, in order to further verify that our framework can provide a reliable lower bound of approximation ratios for different variational quantum algorithms. The light-cone variational quantum algorithm (light-cone VQA) \cite{Wang2025} is an efficient VQA which, unlike the traditional QAOA, does not employ mixing and separation Hamiltonians. Instead, it orients the undirected edges of the original graph sequentially using a breadth-first search, and then constructs a circuit as a sequence of $Z\!Y$-Pauli gates arranged according to the resulting topological order.

In our experiments, we combine the light-cone VQA circuit with the idea of measurement feedback. Specifically, for the commuting terms we set $\mathbf{H}^f = \mathbf{0}$, and for the non-commuting terms, if the position of the $j$-th node in the BFS order is denoted by $seq_j$, then we set
\begin{equation}
    \mathbf{H}^{\hat{f}} = \beta(t)\sum_{<j,k>\in E \wedge seq_j < seq_k} Y_jZ_k
    \label{light-cone_Hamilton}
\end{equation}
and $\beta(t)$ is defined identically to Eq. (\ref{beta_value}). As before, the initial state is set to $\ket{+}$, the step size is chosen as $\Delta t = 0.08$, and the total number of iterations is $R = 30$. Throughout the execution, we calculate $\lambda(t_j)-\lambda(0)$, $\left( y(t_j) - y(0)\right)/x(t_j)$, and $\langle \psi(t)|\mathbf{H}_f|\psi(t)\rangle / m$, which serve to evaluate the performance of our framework.
\begin{figure}
    \centering
    \includegraphics[width=1\linewidth]{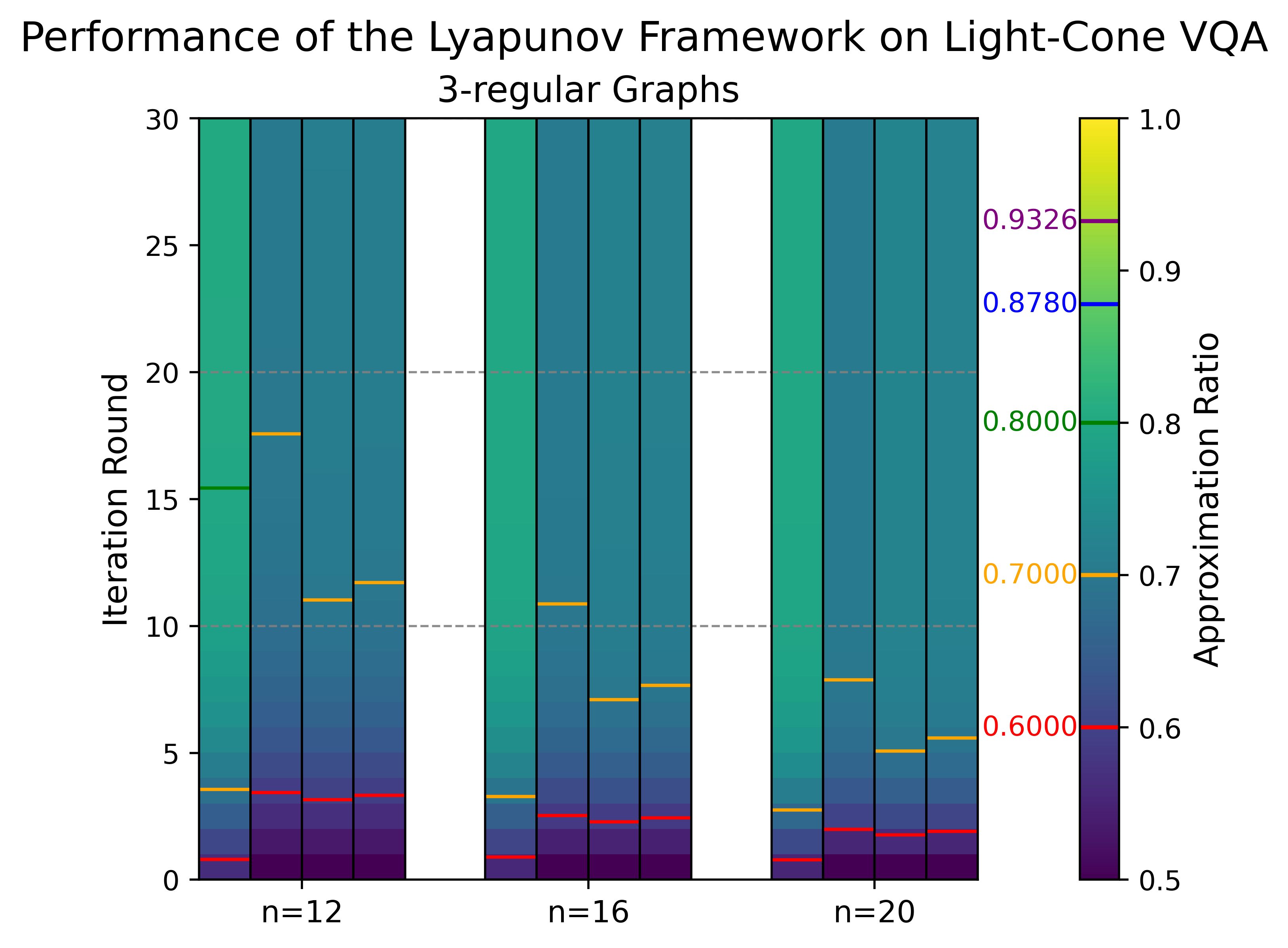}
    \caption{The results of running the algorithm defined in Eq. (\ref{light-cone_Hamilton}) on 3-regular graphs with $n = \{12, 16, 20\}$ are presented. For different $n$, the four bars from left to right represent, respectively,
    $\langle \psi(t_j)| \mathbf{H}_f | \psi(t_j) \rangle / \langle \psi^*| \mathbf{H}_f | \psi^* \rangle,\quad
    \lambda(t_j) - \lambda(0),\quad
    (y(t_j) - y(0)) / x(t_j),\quad
    \langle \psi(t_j)| \mathbf{H}_f | \psi(t_j) \rangle / m$.}
    \label{light-cone_3-regular}
\end{figure}
The experimental results are shown in Fig.\ref{light-cone_3-regular}, with legends consistent with those in Fig.\ref{approximation-ratio}. As observed, compared to the traditional QAOA, the Hamiltonian associated with the light-cone VQA yields a much faster growth in the approximation ratio during the initial iterations: the average approximation ratio approaches 0.8 at about $p=20$, whereas QAOA requires several hundred iterations to reach a comparable level.

Consistent with Subsection \ref{subsection::5.1}, the lower bound for approximation ratios via Lyapunov functions with either a one or two parameter remains small errors between the continuous procedure and the discrete procedure, once again confirming the reliability of our framework.


\section{Methods}
In this section, we first briefly retrospect combinatorial optimization problems and details of the quantum approximation optimization algorithm. Next, we illustrate intuitions for constructing Lyapunov functions. 

\subsection{Backgrounds}
In optimization problems, our goal is to find a solution such that the objective function $f$ is as large(small) as possible. If $f$ is defined on a ground set $N$, i.e., $f: 2^N \rightarrow \mathbb{R}$, the optimization problem of $\max(\min) f$ is a combinatorial optimization problem. Formally, the combinatorial optimization problem can be expressed as follows:
\begin{equation*}
    \begin{aligned}
         \max \quad & f(S) \\
        \text{s.t.}  \quad &S \in \mathcal{C} \subseteq 2^N
    \end{aligned}
\end{equation*}
where $\mathcal{C}$ is the feasible domain suffered from some constraints. Although some combinatorial optimization problems, such as shortest-path and max-flow, can be solved within polynomial complexity, there are still numerous instances in which it is hard-to-find the optimal solution in acceptable time-consuming, such as Max-Cut, maximum independent set, etc. Consequently, approximation algorithms are substituted for exactly solvable algorithms, which trades off the efficiency and performance. Concretely, the target of approximation algorithms is to design algorithms whose complexity is polynomially bounded with respect to the problem size $n$, such that the solution $S_{ALG}$ yielded by the algorithm adheres some guarantees, i.e., $\frac{f(S_{ALG})}{f(S_{OPT})} \ge r$ with $r \in [0,1]$. It is readily apparent that the approximation ratio serves as the theoretical guarantee of approximation algorithms under the worst cases, given that the approximation algorithm can obtain a solution with $r$-guarantee for all possible instances of this problem. It turns out that the design of approximation algorithms is contingent upon leveraging distinct properties inherent in different problems. Essentially, it entails resolving the following optimization problem.
\begin{equation*}
    \begin{aligned}
         \max \quad & \cfrac{f(S)}{f(S_{OPT})} \\
        \text{s.t.}  \quad & S = ALG(S_0) \\
        & ALG \in \mathcal{ALG}(f,\mathcal{C},\mathcal{P})
    \end{aligned}
\end{equation*}
where $\mathcal{ALG}(f,\mathcal{C},\mathcal{P})$ is the domain of algorithms that crafted based on properties of the function $f$ and are subject to constraints $\mathcal{C}$ of the original problem as well as the complexity $\mathcal{P}$ of the polynomial growth for the problem size $n$. Considering Max-Cut problem, for example, $f$ corresponds to the cut function of general graphs, and $\mathcal{C}$ pertains the power set of the ground set. Semidefinite programming (SDP) is a $0.878$ approximation algorithm with the polynomial complexity $O(poly(n))$ \citep{Goemans1995}, i.e., SDP $\in \mathcal{ALG}(f,\mathcal{C},\mathcal{P})$ and $\frac{f(S_{SDP})}{f(S_{OPT})} \ge 0.878$. Note that, under certain complexity assumptions, such as NP$\neq$P, there is an upper bound for the ratio $\frac{f(S)}{f(S_{OPT})}$ in the context of combinatorial optimization problems within the APX class, i.e., $\sup_{ALG \in \mathcal{ALG}(f,\mathcal{C},\mathcal{P})} \frac{f(S_{ALG})}{f(S_{OPT})} < 1$. For example, under the Unique Games Conjecture, the approximation ratio of SDP is optimal \citep{Khot2007} for Max-Cut, i.e., there is no polynomial approximation algorithm that can achieve the approximation ratio $0.878+\epsilon$ for Max-Cut under the Unique Games Conjecture.   

Alternatively, the combinatorial optimization problem can be converted into a representation with $n$ bits, which corresponds to the ground set $N$ comprising $n$ elements. For each function $f:\{0,1\}^n \rightarrow \mathbb{R}$, we obtain $2^n$ values of $f(\mathbf{x})$ across all possible inputs. Conceivably, this function can be encoded as a Hamiltonian $\mathbf{H}_f$ possessing $2^n$ eigenvalues: 
\begin{equation*}
    \mathbf{H}_f |\mathbf{x} \rangle = f(\mathbf{x}) | \mathbf{x} \rangle \quad \forall \mathbf{x} \in \{0,1\}^n
\end{equation*}
where $\mathbf{H}_f$ is diagonal because the Hamiltonian $\mathbf{H}_f$ corresponding to the function acts as $f$ on all basis. Consequently, for each quantum state $|\psi\rangle = \sum_{\mathbf{x}\in\{0,1\}^n } \alpha_{\mathbf{x}} |\mathbf{x} \rangle$, with $\alpha_{\mathbf{x}} \in \mathbb{C}$ and $\sum_{\mathbf{x}} |\alpha_{\mathbf{x}}|^2 = 1$, the expectation value of the function $f$ with respect to the quantum state $|\psi \rangle$ can be expressed as follows:
\begin{equation*}
    \langle \psi | \mathbf{H}_f  |\psi \rangle = \sum_{\mathbf{x}\in\{0,1\}^n } |\alpha_{\mathbf{x}}|^2 f(\mathbf{x}) = \sum_{S \subseteq N} |\alpha_{S}|^2 f(S)
\end{equation*}
The above map provides an intuition for solving combinatorial optimization problem by employing the quantum computer worked in a $2^n$ dimensional Hilbert space with the above basis.

The QAOA is the seminal framework for generating approximation solutions of combinatorial optimization problems \citep{farhi2014}. This algorithm initiates from the uniform superposition states $|\mathbf{s} \rangle = \frac{1}{\sqrt{2^n}}\sum_{\mathbf{x}} |\mathbf{x}\rangle$. During each round, QAOA successively employs two unitary operators: the function value Hamiltonian operator $U(\mathbf{H}_f,\gamma)$ and the mix Hamiltonian operator $U(\mathbf{H}_x,\beta)$.
\begin{equation*}
    U(\mathbf{H}_f,\gamma)=e^{-i \gamma \mathbf{H}_f}, U(\mathbf{H}_x,\beta) = e^{-i \beta \mathbf{H}_x}, \mathbf{H}_x = \sum_{j=1}^n\mathbf{X}_j
\end{equation*}
where $\gamma \in [0,2\pi]$, $\beta \in [0,\pi]$ and $\mathbf{X}_j$ is the Pauli matrix $\mathbf{X}$ acting on the $j$-th qubit. After $p$ rounds, QAOA returns a quantum state:  
\begin{equation*}
    \begin{aligned}
        |\psi(p)\rangle & = |\bm{\gamma},\bm{\beta} \rangle \\
        & = U(\mathbf{H}_x,\beta_p)U(\mathbf{H}_f,\gamma_p)\cdots U(\mathbf{H}_x,\beta_1)U(\mathbf{H}_f,\gamma_1)|\mathbf{s}\rangle
    \end{aligned}
\end{equation*}
As a result, the main obstacle of QAOA is finding the proper $\bm{\gamma},\bm{\beta}$ to obtain a quantum state such that the expectation of $f$ is as large as possible.  
\begin{equation*}
    \bm{\gamma},\bm{\beta} = \arg\max_{\bm{\gamma},\bm{\beta}} \langle \bm{\gamma},\bm{\beta}| \mathbf{H}_f |\bm{\gamma},\bm{\beta} \rangle
\end{equation*}

Generally, the above optimization problem in classical computers is challenging to obtain the optimal solution $\bm{\gamma}^*,\bm{\beta}^*$. This difficulty arises due to the notoriously barren plateau phenomenon, i.e., the derivative vanishes dramatically, particularly in the case of deep QAOA. 
\subsection{Intuitions from Lyapunov Control}
Recalling the combinatorial optimization problem in its classical version, we can reformulate the combinatorial optimization problem in a quantum version. Given that each subset of the ground set is associated with a $n$-bit representation, in the subsequent discussion, we will refrain from differentiating between these two expressions. 
\begin{equation}
    \begin{aligned}
        \max \quad &\langle \psi | \mathbf{H}_f | \psi \rangle \\
        \text{s.t.} \quad & |\psi\rangle =   \sum_{S \in \mathcal{C}} a_S |S\rangle, \sum_{S \in \mathcal{C}} |a_S|^2 = 1
    \end{aligned}
\end{equation}
where $\mathcal{C} \subseteq 2^N$ is the feasible domain. For designing quantum algorithms, the goal is to maximize approximation ratios of quantum algorithms. Consequently, we can construct the following optimization problem:
\begin{equation}\label{eq:: approximation ratio}
    \begin{aligned}
        & \max \cfrac{\langle \psi(T)| \mathbf{H}_f | \psi(T) \rangle}{\langle \psi^*| \mathbf{H}_f | \psi^* \rangle} \\
        & \text{s.t.}\quad |\psi(T) \rangle \in \mathcal{ALG}(|\psi(0)\rangle,\mathcal{C})
    \end{aligned}
\end{equation}
where $T$ represents the terminal time of the algorithm, and $\mathcal{ALG}$ signifies the domain of quantum algorithms that adhere to some constraints such as complexity requirements and the feasible domain $\mathcal{C}$. The quantum state $|\psi(T)\rangle $, returned by the quantum algorithm after $T$ units of time, constitutes the superposition of feasible states corresponding to all feasible solution. $|\psi ^* \rangle$ is regarded as the optimal solution, i.e., $|\psi ^* \rangle = \arg\max_{|\psi\rangle} \langle \psi | \mathbf{H}_f | \psi \rangle$. Concretely, if we denote $OPT = \{S \in \mathcal{C}| f(S) \ge f(V), \forall V \in \mathcal{C}\}$, then $|\psi^* \rangle = \sum_{S \in OPT} a_S |S\rangle$. Essentially, our target is to devise a quantum algorithm that evolves the state $| \psi(0) \rangle$ at time $0$ to the state $|\psi(T) \rangle$ at time $T$ such that the expectation of the objective function $f$ is as large as possible. In the context of quantum approximation algorithms, it will be strengthen to ensure that the expectation of the objective function $f$ under the worst instance is as large as possible. Considering the Schr\"{o}dinger evolution equation:
\begin{equation*}
       i\cfrac{d |\psi(t)\rangle}{ dt } = \mathbf{H}(t)  |\psi(t)\rangle \Rightarrow  |\psi(t)\rangle = \mathcal{T} e^{-i \int_0^t \mathbf{H}(s) ds }  |\psi(0)\rangle 
\end{equation*}
where $\mathcal{T}$ is the time-ordering operator and the Planck's constant has been ignored. Consequently, the above optimization problem (Eq. (\ref{eq:: approximation ratio})) can be rewritten as follows: 
\begin{equation}\label{eq: approximation ratio quantum version}
    \begin{aligned}
        & \max \cfrac{\langle \psi(T)| \mathbf{H}_f | \psi(T) \rangle}{\langle \psi^*| \mathbf{H}_f | \psi^* \rangle} \\
         \text{s.t.}\quad &|\psi(T)\rangle = \mathcal{T} e^{-i \int_0^T \mathbf{H}(s) ds }  |\psi(0)\rangle \\
        &  \mathbf{H}(s) \text{ is Hermitian operator}
    \end{aligned}
\end{equation}
Theoretically, if our intention is to design a quantum approximation algorithm with some theoretical guarantees, the essential obstacle is to design a proper Hamiltonian $\mathbf{H}(t)$. Denote the optimal approximation ratio and the optimal Hamiltonian as follows:
\begin{equation*}
    \begin{aligned}
        \lambda^* & = \max \cfrac{\langle \psi(T)| \mathbf{H}_f | \psi(T) \rangle}{\langle \psi^*| \mathbf{H}_f | \psi^* \rangle}\\
        \mathbf{H}^*(t) &= \arg\max \cfrac{\langle \psi(T)| \mathbf{H}_f | \psi(T) \rangle}{\langle \psi^*| \mathbf{H}_f | \psi^* \rangle}
    \end{aligned}
\end{equation*}
We first introduce the following lemma that facilitates our reconstruction of the ratio maximization problem into the difference maximization problem.
\begin{lemma}\label{lm::ratio difference}
    (1) If $\lambda^*$ is the optimal ratio, then the following optimization problem has the same optimal solution:
    \begin{equation}
        \begin{aligned}
            \mathbf{H}_1^*(t) &= \arg\max_{\mathbf{H}(t)} \cfrac{\langle \psi(T)| \mathbf{H}_f | \psi(T) \rangle}{\langle \psi^*| \mathbf{H}_f | \psi^* \rangle} \\
            &= \arg\max_{\mathbf{H}(t) } \langle \psi(T)| \mathbf{H}_f | \psi(T) \rangle- \lambda^* \langle \psi^*| \mathbf{H}_f | \psi^* \rangle
        \end{aligned}
    \end{equation}
    (2) If $c^*$ is the optimal difference, i.e., $c^*=\max_{\mathbf{H}(t) } \langle \psi(T)| \mathbf{H}_f | \psi(T) \rangle- \langle \psi^*| \mathbf{H}_f | \psi^* \rangle$, then the following optimization problem has the same optimal solution:
    \begin{equation}
        \begin{aligned}
            \mathbf{H}_2^*(t) &= \arg\max_{\mathbf{H}(t) } \langle \psi(T)| \mathbf{H}_f | \psi(T) \rangle- \langle \psi^*| \mathbf{H}_f | \psi^* \rangle \\
            &= \arg\max_{\mathbf{H}(t)} \cfrac{\langle \psi(T)| \mathbf{H}_f | \psi(T) \rangle}{\langle \psi^*| \mathbf{H}_f | \psi^*\rangle +c^*}
        \end{aligned}
    \end{equation}
\end{lemma}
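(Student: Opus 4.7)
The plan is to exploit the fact that $|\psi^*\rangle$ is an eigenstate of the problem Hamiltonian $\mathbf{H}_f$ and therefore does not depend on the evolution Hamiltonian $\mathbf{H}(t)$ that one is optimizing over. Let $M := \langle \psi^*| \mathbf{H}_f | \psi^*\rangle$. Because $|\psi^*\rangle$ is fixed once the combinatorial problem is specified, $M$ is a positive constant with respect to the optimization variable $\mathbf{H}(t)$. With this observation, both parts of the lemma collapse to the elementary fact that applying an affine shift or a positive rescaling to an objective preserves its set of maximizers.

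For Part (1), I would first rewrite the ratio objective as $\langle \psi(T)| \mathbf{H}_f | \psi(T)\rangle / M$, which is a positive rescaling of $\langle \psi(T)| \mathbf{H}_f | \psi(T)\rangle$; the shifted objective $\langle \psi(T)| \mathbf{H}_f | \psi(T)\rangle - \lambda^* M$ is an additive shift of the same quantity, with $\lambda^* M$ independent of $\mathbf{H}(t)$. Since $M>0$, both transformations preserve the argmax, so both optimization problems share the common optimizer $\arg\max_{\mathbf{H}(t)} \langle \psi(T)| \mathbf{H}_f | \psi(T)\rangle$. For Part (2), the reasoning is symmetric: the difference objective $\langle \psi(T)| \mathbf{H}_f | \psi(T)\rangle - M$ is an affine shift, and the ratio objective $\langle \psi(T)| \mathbf{H}_f | \psi(T)\rangle / (M + c^*)$ is a positive rescaling, provided the denominator is strictly positive. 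To verify the latter, I would use that $\mathbf{H}_f$ (after the usual $0/1$ encoding of combinatorial objectives) is positive semidefinite with largest eigenvalue $M$, so every normalized state satisfies
\begin{equation*}
0 \le \langle \psi(T)| \mathbf{H}_f | \psi(T)\rangle \le M,
\end{equation*}
whence $-M \le c^* \le 0$ and $M+c^* \ge 0$. Under the mild non-degeneracy assumption that $|\psi(0)\rangle$ can be steered to a state of strictly positive objective value, the inequality is strict, so the rescaling is admissible and preserves the argmax.

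The only substantive point is checking that the denominator $M+c^*$ in Part (2) is strictly positive; beyond that, the proof is a bookkeeping exercise in the invariance of $\arg\max$ under monotone affine and positive-scaling reparameterizations, and I do not anticipate any additional technical obstruction. The conceptual content of the lemma is precisely that, because $|\psi^*\rangle$ is not a dynamical variable, the ratio-maximization problem and the corresponding difference-maximization problem are equivalent parametric reformulations of one another, in the spirit of Dinkelbach's transformation for fractional programs.
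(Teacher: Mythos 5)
Your proposal is correct, and it takes a genuinely different — and more elementary — route than the paper. The paper proves both parts by contradiction in the style of Dinkelbach's transformation: at the optimum of the ratio problem the parametric difference $\langle \psi(T)|\mathbf{H}_f|\psi(T)\rangle - \lambda^*\langle\psi^*|\mathbf{H}_f|\psi^*\rangle$ equals zero, and any Hamiltonian achieving a strictly positive difference would force a ratio exceeding $\lambda^*$ (and symmetrically for part (2), where the optimal ratio value is $1$). You instead observe that $M=\langle\psi^*|\mathbf{H}_f|\psi^*\rangle$ is a constant of the problem instance, independent of the optimization variable $\mathbf{H}(t)$, so every objective in sight is an affine reparameterization of the single quantity $\langle\psi(T)|\mathbf{H}_f|\psi(T)\rangle$ with positive scaling; the equality of argmax sets is then immediate. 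Your route has two advantages: it yields exact equality of the full maximizer sets rather than the paper's slightly loose "same optimal solution" phrasing, and it makes explicit the only nontrivial hypothesis, namely strict positivity of the denominators $M$ and $M+c^*$ — a point the paper's proof uses implicitly but never checks (your bound $-M\le c^*\le 0$ and the non-degeneracy caveat are a genuine improvement). What the paper's Dinkelbach-style argument buys in exchange is robustness: it would survive a generalization in which the denominator depends on $\mathbf{H}(t)$, which is the setting where fractional-programming machinery is actually needed; in the present setting, your observation that the lemma is essentially trivial is accurate.
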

The detailed proof of Lemma \ref{lm::ratio difference} can be found in Appendix \ref{Proof of Threom thm::ratio difference}. The above lemma demonstrates an equivalent transformation from the ratio maximization problem to a difference problem. Inspired by Lemma \ref{lm::ratio difference}, we can reformulate the Eq.(\ref{eq:: approximation ratio}) as follows:
\begin{equation}\label{eq:: approximation ratio quantum version b}
    \begin{aligned}
         \max & \quad \langle \psi(T)| \mathbf{H}_f | \psi(T) \rangle - \lambda^* \langle \psi^*| \mathbf{H}_f | \psi^* \rangle \\
        \text{s.t.} & \quad |\psi(T) \rangle \in \mathcal{ALG}(|\psi(0)\rangle,\mathcal{C})
    \end{aligned}
\end{equation}
Generally, the optimization problem described in Eq.(\ref{eq:: approximation ratio quantum version b}) is hard-to-solve exactly. Therefore, we resort to employ indirect approaches to derive an solution. Regarding the execute time $t$ as a parameter, it implies that the approximation ratio can be expressed as a function of time $t$ within the context of approximation algorithms. Consequently, we can reformulate the objective function presented in Eq.(\ref{eq:: approximation ratio quantum version b}) as follows:
\begin{equation*}
    E(t) = \langle \psi(t)| \mathbf{H}_f | \psi(t) \rangle - \lambda(t) \langle \psi^*| \mathbf{H}_f | \psi^* \rangle
\end{equation*}
If we can control the above equation properly, we can indirectly obtain an appropriate Hamiltonian $\mathbf{H}(t)$ such that the quantum approximation algorithm based on $\mathbf{H}(t)$ possesses a provably reliable approximation ratio.

\section{Discussion}

In this work, we propose a universal framework for quantum approximation algorithms designing and analyzing by constructing time-dependent Lyapunov functions with one and two parameters, respectively. This framework effectively exploits the information generated during the execution of the algorithm and enables a dynamic, adaptive analysis of the incremental improvement of the approximation ratio at each iteration. Through numerical experiments, we not only validate the reliability of the proposed lower bound for the true approximation ratio but also provide new evidences that adaptive quantum algorithms can possibly outperform the best-known classical algorithms in terms of performance. This work partially fills the gap in the theoretical performance analysis of quantum approximation algorithms. Nonetheless, the proposed framework has certain limitations, and many aspects remain open for further investigation.

\textit{a. Parameter Selection.} Part of the parameter selection in our algorithm is inspired by the measurement-feedback concept, which alleviates some of the burden associated with parameter tuning while ensuring that the algorithm evolves along the direction of increasing observable expectation. However, in our numerical experiments, due to the large circuit depth, the parameter $\beta(t)$ was chosen from a few simple candidate functions based on empirical performance. How to determine an optimal sequence of coefficients $\beta(t)$ for deep circuits remains an open and important problem.


\textit{b. Selection of the Upper Bound for the Solution.} We have verified the reliability of the proposed framework for the lower bound of the approximation ratio from both theoretical analysis and numerical experiments. However, in the numerical experiments on the Max-Cut problem, we adopted a simple upper bound, namely the total number of edges $|E|$ of the undirected graph, which is independent of the iterative process. This simplification led to a loose lower bound of the approximation ratio. Therefore, in practical applications of the framework, if specific methods can be developed to make full use of the information generated during the execution of the algorithm, either through analytical approaches or heuristic estimation of the optimal solution's upper bound, the overall performance of the framework could be significantly improved.

\section*{Acknowledgements}
This work was supported in part by the Quantum Science and Technology-National Science and Technology Major Project under Grant No. 2024ZD0300500 and the National Natural Science Foundation of China Grants No. 62325210, 62272441, 12501450, 12571382.

\bibliographystyle{plain}
\bibliography{ref}

\onecolumngrid
\newpage


\appendix

\section{Discrete Procedures for Lyapunov Function with One Parameter} \label{Appendix::One-Parameter-Discretion}

In this appendix, we present the discrete time formulation of the approximation ratio for the algorithm described by Eq. (\ref{eq::single_para_estimate}) under the one parameter Lyapunov framework, which facilitates the direct design of discrete-time quantum algorithms. Firstly, we define the discrete Lyapunov function, also known as the potential function. 
\begin{definition}[Potential Function]
    The discrete Lyapunov function at time $t_j$ is as follows:
    \begin{equation}
        E(t_j) = \langle \psi(t_j)| \mathbf{H}_f | \psi(t_j) \rangle- \lambda(t_j) \langle \psi^*| \mathbf{H}_f | \psi^* \rangle
    \end{equation}
    where $\lambda(t_j)$ is a non-negative non-decreasing sequence. The upper bound of the optimal solution at time $t_j$ is as follows:
    \begin{equation}
        0 \le \langle \psi^*| \mathbf{H}_f | \psi^* \rangle \le \langle \psi(t_j)| \mathbf{Q}(t_j) | \psi(t_j) \rangle
    \end{equation}
    Note that the upper bound of optimal solution can be easily estimated prior to the design of the algorithm, which depends on the mathematical property of the objective function.
\end{definition}
Since the potential function is discrete, it essentially constitutes a sequence that is only defined at discrete time points $t_j$. In continuous times, the approximation algorithm design hinges on the monotonicity of Lyapunov function. Analogously, in discrete times, we also need to construct a non-decreasing potential function. Then, the increment of the potential function from time $t_j$ to time $t_{j+1}$ can be expressed as follows:
\begin{equation}\label{eq::potential increment}
    \begin{aligned}
        E(t_{j+1}) - E(t_j) & = \langle \psi(t_{j+1})| \mathbf{H}_f | \psi(t_{j+1}) \rangle- \lambda(t_{j+1}) \langle \psi^*| \mathbf{H}_f | \psi^* \rangle  - \langle \psi(t_j)| \mathbf{H}_f | \psi(t_j) \rangle + \lambda(t_j) \langle \psi^*| \mathbf{H}_f | \psi^* \rangle \\
        & \ge \langle \psi(t_{j+1})| \mathbf{H}_f | \psi(t_{j+1}) \rangle - \langle \psi(t_j)| \mathbf{H}_f | \psi(t_j) \rangle - \left[ \lambda(t_{j+1}) - \lambda(t_j) \right] \langle \psi(t_j)| \mathbf{Q}(t_j) | \psi(t_j) \rangle \\
        & \ge -Err(t_j)
    \end{aligned}
\end{equation}
where $Err(t_j)$ is the error term, i.e., $Err(t_j)$ signifies an acceptable extent to which monotonicity can be disrupted. If the total round number is $N$ and the end time is $T$, then the approximation ratio of algorithms based on the above potential function is given as follows:
\begin{equation}\label{eq::approximation ratio c}
    \langle \psi(T)| \mathbf{H}_f | \psi(T) \rangle \ge (\lambda(T)-\lambda(0)) \langle \psi^*| \mathbf{H}_f | \psi^* \rangle + \langle \psi(0)| \mathbf{H}_f | \psi(0) \rangle -  \sum_{j=0}^{N-1} Err(t_j)
\end{equation}
Regarding $Err(t_j)$, there are primarily two resources: (1) errors that stem from the transition continuous times to discrete times; and (2) the potential sacrifice of monotonicity in pursuit of a larger $\lambda(t_{j+1})-\lambda(t_j)$. Therefore, our target is to choose Hamiltonian $\mathbf{H}(t_j)$ and $\lambda(t_j)$ at each discrete time $t_j$ such that the difference $\lambda(T) - \lambda(0)$ is as large as possible while keeping the absolute value of errors as small as possible. Intuitively, we start from Eq.(\ref{eq::potential increment}) to determine the Hamiltonian $\mathbf{H}(t_j)$ and $\lambda(t_j)$ at each time $t_j$. Before describing details, we introduce three lemmas that serve as fundamental ingredients in the analysis of errors. In this work, we consider the Hamiltonian is Hermitian. 
\begin{lemma}\label{lemma: two time-independent}
    \begin{equation}
        |\langle \psi(t)|\mathbf{A} \mathbf{H}_f  \mathbf{A}^{\dagger}| \psi(t) \rangle - \langle \psi(t)|\mathbf{B} \mathbf{H}_f  \mathbf{B}^{\dagger}| \psi(t) \rangle| \le \left\| \mathbf{A} - \mathbf{B}\right \| \left\| \mathbf{H}_f \right\| (\left\| \mathbf{A} \right\| + \left\| \mathbf{B} \right\|)
    \end{equation}
\end{lemma}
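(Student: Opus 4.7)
The plan is to prove the inequality by an add-and-subtract (telescoping) trick together with submultiplicativity of the operator norm. First I would rewrite the difference of the two operators inside the expectation values as
\begin{equation*}
\mathbf{A}\mathbf{H}_f\mathbf{A}^{\dagger} - \mathbf{B}\mathbf{H}_f\mathbf{B}^{\dagger}
= \mathbf{A}\mathbf{H}_f(\mathbf{A}^{\dagger}-\mathbf{B}^{\dagger}) + (\mathbf{A}-\mathbf{B})\mathbf{H}_f\mathbf{B}^{\dagger},
\end{equation*}
so that the single difference is split into two terms, each of which contains exactly one factor of $\mathbf{A}-\mathbf{B}$ (or its adjoint). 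This is the standard device for bounding differences of quadratic expressions.

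Next I would take the expectation value $\langle\psi(t)|\cdot|\psi(t)\rangle$ on both sides, apply the triangle inequality, and bound each term using the elementary fact that for any bounded operator $\mathbf{M}$ and any unit vector $|\psi\rangle$ one has $|\langle\psi|\mathbf{M}|\psi\rangle|\le\|\mathbf{M}\|$. Combining this with submultiplicativity of the operator norm yields
\begin{equation*}
|\langle\psi(t)|\mathbf{A}\mathbf{H}_f(\mathbf{A}^{\dagger}-\mathbf{B}^{\dagger})|\psi(t)\rangle|
\le \|\mathbf{A}\|\,\|\mathbf{H}_f\|\,\|\mathbf{A}^{\dagger}-\mathbf{B}^{\dagger}\|,
\end{equation*}
and similarly for the second term. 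Using the identity $\|\mathbf{A}^{\dagger}-\mathbf{B}^{\dagger}\| = \|\mathbf{A}-\mathbf{B}\|$ (since taking the adjoint is a norm-preserving involution) and factoring out $\|\mathbf{H}_f\|\|\mathbf{A}-\mathbf{B}\|$ produces the claimed bound $\|\mathbf{A}-\mathbf{B}\|\|\mathbf{H}_f\|(\|\mathbf{A}\|+\|\mathbf{B}\|)$.

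There is no genuine obstacle here: the proof is essentially a two-line computation once the correct splitting is chosen. The only point requiring a brief justification is the passage from the splitting to the norm estimates, namely that the operator norm is submultiplicative and that $|\langle\psi|\mathbf{M}|\psi\rangle|\le\|\mathbf{M}\|$ for unit $|\psi\rangle$; both are standard facts, and the latter follows directly from the Cauchy--Schwarz inequality applied to $\langle\psi|\mathbf{M}|\psi\rangle$. I would also remark (once, at the outset) that $|\psi(t)\rangle$ is assumed to be normalized, since this is the hypothesis under which the bound is used in the subsequent discretization error analysis.
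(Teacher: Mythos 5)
Your proposal is correct and takes essentially the same route as the paper: an add-and-subtract of a single cross term followed by the triangle inequality, submultiplicativity, and $|\langle\psi|\mathbf{M}|\psi\rangle|\le\|\mathbf{M}\|$ for normalized $|\psi(t)\rangle$. The only (immaterial) difference is that you insert $\mathbf{A}\mathbf{H}_f\mathbf{B}^{\dagger}$ while the paper inserts $\mathbf{B}\mathbf{H}_f\mathbf{A}^{\dagger}$, and you make explicit the normalization and adjoint-invariance facts that the paper leaves implicit.
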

\begin{proof}
    \begin{equation*}
        \begin{aligned}
            & \left\|\langle \psi(t)|\mathbf{A} \mathbf{H}_f  \mathbf{A}^{\dagger}| \psi(t) \rangle - \langle \psi(t)|\mathbf{B} \mathbf{H}_f  \mathbf{B}^{\dagger}| \psi(t) \rangle\right\| \\
            = & \left\|\langle \psi(t)|\mathbf{A} \mathbf{H}_f  \mathbf{A}^{\dagger}| \psi(t) \rangle - \langle \psi(t)|\mathbf{B} \mathbf{H}_f  \mathbf{A}^{\dagger}| \psi(t) \rangle + \langle \psi(t)|\mathbf{B} \mathbf{H}_f  \mathbf{A}^{\dagger}| \psi(t) \rangle - \langle \psi(t)|\mathbf{B} \mathbf{H}_f  \mathbf{B}^{\dagger}| \psi(t) \rangle\right\| \\
            \le & \left\| \mathbf{A} - \mathbf{B}\right \| \left\| \mathbf{H}_f \right\| (\left\| \mathbf{A} \right\| + \left\| \mathbf{B} \right\|) 
        \end{aligned}
    \end{equation*}
\end{proof}

\begin{lemma}[A Rephrased Version of Lemma 1 in \citep{Berry2020}, Appendix B in \citep{Tran2019}]\label{lemma: two time-dependent}
    If we have two Hamiltonian $\mathbf{H}_1(t)$ and $\mathbf{H}_2(t)$, then
    \begin{equation}
        \left\| \mathcal{T} e^{-i \int_0^T \mathbf{H}_1(s) ds } - \mathcal{T} e^{-i \int_0^T \mathbf{H}_2(s) ds } \right\| \le \int_0^{T} \left\| \mathbf{H}_1(s) - \mathbf{H}_2(s) \right\| ds
    \end{equation}
\end{lemma}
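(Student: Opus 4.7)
The plan is to prove the bound via a Duhamel-style integral representation for the difference of the two propagators. First I would define the time-ordered exponentials $U_j(t) := \mathcal{T} e^{-i \int_0^t \mathbf{H}_j(s) ds}$ for $j=1,2$, noting that each satisfies the operator Schr\"odinger equation $i\, dU_j/dt = \mathbf{H}_j(t) U_j(t)$ with $U_j(0) = \mathbf{I}$, and that each is unitary because the $\mathbf{H}_j(t)$ are Hermitian (so $\|U_j(t)\| = 1$ in the operator norm).

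Next I would differentiate the quantity $V(t) := U_1^{\dagger}(t) U_2(t)$ to obtain a clean first-order ODE with the Hamiltonian difference as a source. Using $dU_1^{\dagger}/dt = i\, U_1^{\dagger}(t) \mathbf{H}_1(t)$ together with the equation for $U_2$, one finds
\begin{equation*}
    \frac{d V(t)}{dt} = -i\, U_1^{\dagger}(t) \bigl(\mathbf{H}_2(t) - \mathbf{H}_1(t)\bigr) U_2(t).
\end{equation*}
Integrating from $0$ to $T$ with $V(0) = \mathbf{I}$, multiplying on the left by $U_1(T)$, and using $U_1(T)\mathbf{I} = U_1(T)$ and $U_1(T) U_1^{\dagger}(s) = \mathcal{T} e^{-i \int_s^T \mathbf{H}_1(u)du}$ yields the Duhamel-type identity
\begin{equation*}
    U_2(T) - U_1(T) = -i \int_0^T U_1(T) U_1^{\dagger}(s) \bigl(\mathbf{H}_2(s) - \mathbf{H}_1(s)\bigr) U_2(s)\, ds.
\end{equation*}

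Finally, I would take operator norms on both sides, pass the norm inside the integral by the Bochner-style triangle inequality, and use submultiplicativity together with $\|U_1(T) U_1^{\dagger}(s)\| = \|U_2(s)\| = 1$ to bound the integrand by $\|\mathbf{H}_1(s) - \mathbf{H}_2(s)\|$. This immediately gives the claimed inequality
\begin{equation*}
    \left\| \mathcal{T} e^{-i \int_0^T \mathbf{H}_1(s) ds} - \mathcal{T} e^{-i \int_0^T \mathbf{H}_2(s) ds} \right\| \le \int_0^{T} \left\| \mathbf{H}_1(s) - \mathbf{H}_2(s) \right\| ds.
\end{equation*}

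The main obstacle, although modest, is handling the time-ordering correctly: the naive move of bringing $\mathbf{H}_1 - \mathbf{H}_2$ outside the time-ordered exponential is illegal when $\mathbf{H}_j(s)$ fails to commute with itself at different times. The $V(t) = U_1^{\dagger}U_2$ trick sidesteps this because differentiation only ever involves the Hamiltonians at a single instant, so no reordering is required. A minor technical point is to justify differentiating the time-ordered exponentials; I would invoke the standard fact that for strongly continuous bounded $\mathbf{H}_j(t)$ the Dyson series converges and defines a propagator satisfying the Schr\"odinger equation in the strong sense, which suffices for the norm bound above.
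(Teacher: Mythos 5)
Your proposal is correct and follows essentially the same route as the paper: the paper likewise differentiates the product $\mathcal{T} e^{i \int_0^l \mathbf{H}_2(s) ds}\,\mathcal{T} e^{-i \int_0^l \mathbf{H}_1(s) ds}$ (your $V$ with the indices swapped), integrates via the fundamental theorem of calculus to isolate the single-time difference $\mathbf{H}_2(l)-\mathbf{H}_1(l)$, and then invokes unitarity and the triangle inequality under the integral. The only cosmetic difference is that the paper bounds $\|U_1(T)-U_2(T)\|$ by $\|U_2^{\dagger}(T)U_1(T)-\mathbf{I}\|$ directly rather than multiplying back by $U_1(T)$ to state the Duhamel identity explicitly.
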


\begin{proof}
    Firstly, we have 
    \begin{equation*}
        \begin{array}{c}
            \cfrac{d |\psi(t)\rangle}{d t} = \cfrac{d \mathcal{T} e^{-i \int_0^t \mathbf{H}(s) ds }  |\psi(0)\rangle}{d t} = -i \mathbf{H}(t) \mathcal{T} e^{-i \int_0^t \mathbf{H}(s) ds }  |\psi(0)\rangle \\
            \cfrac{d \langle \psi(t)|}{d t} = \cfrac{d \langle \psi(0) |\mathcal{T} e^{i \int_0^t \mathbf{H}(s) ds }  }{d t} =\langle\psi(0)| i \mathcal{T} e^{i \int_0^t \mathbf{H}(s) ds }  \mathbf{H}(t)
        \end{array}
    \end{equation*}
    In addition, $ \left\| \mathcal{T} e^{i \int_0^T \mathbf{H}(s) ds } \right\| = 1$ because it is unitary, we have 
    \begin{equation*}
        \begin{aligned}
            & \left\| \mathcal{T} e^{-i \int_0^T \mathbf{H}_1(s) ds } - \mathcal{T} e^{-i \int_0^T \mathbf{H}_2(s) ds } \right\|  \le \left\| \mathcal{T} e^{i \int_0^T \mathbf{H}_2(s) ds }\mathcal{T} e^{-i \int_0^T \mathbf{H}_1(s) ds} -\mathbf{I}\right \| = \left \| \int_0^T dl \cfrac{d \mathcal{T} e^{i \int_0^l \mathbf{H}_2(s) ds} \mathcal{T} e^{-i \int_0^l \mathbf{H}_1(s) ds } } {d l} \right\| \\
            & = \left\| \int_0^T dl \mathcal{T} e^{i \int_0^l \mathbf{H}_2(s) ds }  \mathbf{H}_2(l)  \mathcal{T} e^{-i \int_0^l \mathbf{H}_1(s) ds } - \mathcal{T} e^{i \int_0^l \mathbf{H}_2(s) ds }  \mathbf{H}_1(l)  \mathcal{T} e^{-i \int_0^l \mathbf{H}_1(s) ds } \right\| \\
            & = \left\| \int_0^T dl \mathcal{T} e^{i \int_0^l \mathbf{H}_2(s) ds } ( \mathbf{H}_2(l) -\mathbf{H}_1(l) ) \mathcal{T} e^{-i \int_0^l \mathbf{H}_1(s) ds } \right\| \le  \int_0^T dl \left\| \mathcal{T} e^{i \int_0^l \mathbf{H}_2(s) ds } ( \mathbf{H}_2(l) -\mathbf{H}_1(l) ) \mathcal{T} e^{-i \int_0^l \mathbf{H}_1(s) ds } \right\| \\
            & \le \int_0^T dl \left\| \mathbf{H}_2(l) -\mathbf{H}_1(l) \right\|
        \end{aligned}
    \end{equation*}
\end{proof}

\begin{lemma}
    \begin{equation*}
        \begin{aligned}
            & \left \| e^{i \mathbf{H}^{f}_1(t_j) (t_{j+1}-t_j)} \cdots e^{i \mathbf{H}^{f}_{n_1}(t_j) (t_{j+1}-t_j)} e^{i \mathbf{H}^{\hat{f}}_1(t_j) (t_{j+1}-t_j)} \cdots e^{i \mathbf{H}^{\hat{f}}_{n_2}(t_j) (t_{j+1}-t_j)} -  e^{i \mathbf{H}(t_j) (t_{j+1}-t_j)} \right \| \\
            & \le \sum_{k=1}^{n_1} \left\| \mathbf{H}^f_k\right\| (t_{j+1}-t_j) + \sum_{k=1}^{n_2-1} \left\| \mathbf{H}_k^{\hat{f}}(t_j) \right\| (t_{j+1}-t_j) + \left\| \mathbf{H}^{\hat{f}}_{n_2}(t_j)- \mathbf{H} (t_j)\right\| (t_{j+1}-t_j) 
        \end{aligned}
    \end{equation*}
\end{lemma}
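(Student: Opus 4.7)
The plan is a telescoping argument: peel off the Trotter factors one at a time using the triangle inequality, and finish by comparing the last remaining factor to the exact evolution via Lemma \ref{lemma: two time-dependent}. For notational convenience, let $\tau = t_{j+1} - t_j$, collect the Hamiltonians into one ordered list $A_1,\ldots,A_m$ with $m = n_1 + n_2$ (the first $n_1$ entries being the commuting pieces $\mathbf{H}^f_k(t_j)$ and the remaining $n_2$ being the non-commuting pieces $\mathbf{H}^{\hat{f}}_k(t_j)$), and observe $\mathbf{H}(t_j) = \sum_{k=1}^m A_k$.

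The first ingredient I would record is the elementary estimate $\|e^{iA\tau} - I\| \le \|A\|\,\tau$ for any bounded Hermitian $A$, which follows by writing $e^{iA\tau} - I = \int_0^\tau iA\, e^{iAs}\,ds$ and using $\|e^{iAs}\| = 1$. The second is submultiplicativity of the operator norm together with the fact that each $e^{iA_k\tau}$ is unitary. Combining these, for the partial Trotter product $X_r := \prod_{k=r+1}^m e^{iA_k\tau}$ (so $\|X_r\| = 1$) one has
\begin{equation*}
\Bigl\| e^{iA_r\tau} X_r - e^{i\mathbf{H}(t_j)\tau} \Bigr\| \le \Bigl\| (e^{iA_r\tau} - I)\,X_r \Bigr\| + \Bigl\| X_r - e^{i\mathbf{H}(t_j)\tau} \Bigr\| \le \|A_r\|\,\tau + \Bigl\| X_r - e^{i\mathbf{H}(t_j)\tau} \Bigr\|.
\end{equation*}
Applying this recursion starting from the full product and iterating $m-1$ times peels off the first $m-1$ factors, yielding
\begin{equation*}
\Bigl\| \prod_{k=1}^m e^{iA_k\tau} - e^{i\mathbf{H}(t_j)\tau} \Bigr\| \le \sum_{k=1}^{m-1} \|A_k\|\,\tau + \Bigl\| e^{iA_m\tau} - e^{i\mathbf{H}(t_j)\tau} \Bigr\|.
\end{equation*}

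For the leftover single-exponential residual, I would invoke Lemma \ref{lemma: two time-dependent} with the two constant Hamiltonians $A_m = \mathbf{H}^{\hat{f}}_{n_2}(t_j)$ and $\mathbf{H}(t_j)$, viewed as time-independent Hamiltonians on $[0,\tau]$; that lemma gives $\|e^{iA_m\tau} - e^{i\mathbf{H}(t_j)\tau}\| \le \|\mathbf{H}^{\hat{f}}_{n_2}(t_j) - \mathbf{H}(t_j)\|\,\tau$. Substituting this bound back and re-splitting the index range $1,\ldots,m-1$ into its $\mathbf{H}^f_k$ block ($k=1,\ldots,n_1$) and its $\mathbf{H}^{\hat{f}}_k$ block ($k=1,\ldots,n_2-1$) reproduces exactly the inequality in the statement.

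There is no deep obstacle here; the proof is essentially bookkeeping on top of the triangle inequality and one application of Lemma \ref{lemma: two time-dependent}. The one point worth flagging is that the telescoping must isolate the final factor cleanly: keeping the residual as $\|e^{iA_m\tau} - e^{i\mathbf{H}(t_j)\tau}\|$ and disposing of it by a single call to Lemma \ref{lemma: two time-dependent} is what yields the compact term $\|\mathbf{H}^{\hat{f}}_{n_2}(t_j) - \mathbf{H}(t_j)\|\,\tau$, rather than the cruder bound one would get by also splitting $\mathbf{H}(t_j)$ into its constituent summands before invoking the lemma.
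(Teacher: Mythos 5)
Your proof is correct and follows essentially the same route as the paper's: a left-to-right telescoping of the Trotter product using unitary invariance of the remaining factors together with $\left\| e^{iA\tau} - \mathbf{I} \right\| \le \left\| A \right\| \tau$, and a single application of Lemma \ref{lemma: two time-dependent} to the leftover residual $\left\| e^{i\mathbf{H}^{\hat{f}}_{n_2}(t_j)\tau} - e^{i\mathbf{H}(t_j)\tau} \right\|$. The only cosmetic difference is that you justify the elementary estimate directly via an integral representation, whereas the paper obtains it by invoking the same lemma with one Hamiltonian set to zero.
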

\begin{proof}
Firstly, we can rewrite the difference by the accumulation.
\begin{equation*}
    \begin{aligned}
        & e^{i \mathbf{H}^{f}_1(t_j) (t_{j+1}-t_j)} \cdots e^{i \mathbf{H}^{f}_{n_1}(t_j) (t_{j+1}-t_j)} e^{i \mathbf{H}^{\hat{f}}_1(t_j) (t_{j+1}-t_j)} \cdots e^{i \mathbf{H}^{\hat{f}}_{n_2}(t_j) (t_{j+1}-t_j)} -  e^{i \mathbf{H}(t_j) (t_{j+1}-t_j)} \\
        & = e^{i \mathbf{H}^{f}_1(t_j) (t_{j+1}-t_j)} \cdots e^{i \mathbf{H}^{f}_{n_1}(t_j) (t_{j+1}-t_j)} e^{i \mathbf{H}^{\hat{f}}_1(t_j) (t_{j+1}-t_j)} \cdots e^{i \mathbf{H}^{\hat{f}}_{n_2}(t_j) (t_{j+1}-t_j)} \\
        & \quad  - e^{i \mathbf{H}^{f}_2(t_j) (t_{j+1}-t_j)} \cdots e^{i \mathbf{H}^{f}_{n_1}(t_j) (t_{j+1}-t_j)} e^{i \mathbf{H}^{\hat{f}}_1(t_j) (t_{j+1}-t_j)} \cdots e^{i \mathbf{H}^{\hat{f}}_{n_2}(t_j) (t_{j+1}-t_j)} \\
        & \quad + e^{i \mathbf{H}^{f}_2(t_j) (t_{j+1}-t_j)} \cdots e^{i \mathbf{H}^{f}_{n_1}(t_j) (t_{j+1}-t_j)} e^{i \mathbf{H}^{\hat{f}}_1(t_j) (t_{j+1}-t_j)} \cdots e^{i \mathbf{H}^{\hat{f}}_{n_2}(t_j) (t_{j+1}-t_j)} \\
        & \quad - e^{i \mathbf{H}^{f}_3(t_j) (t_{j+1}-t_j)} \cdots e^{i \mathbf{H}^{f}_{n_1}(t_j) (t_{j+1}-t_j)} e^{i \mathbf{H}^{\hat{f}}_1(t_j) (t_{j+1}-t_j)} \cdots e^{i \mathbf{H}^{\hat{f}}_{n_2}(t_j) (t_{j+1}-t_j)} \\
        & \quad \vdots \\
        & \quad +e^{i \mathbf{H}^{\hat{f}}_{n_2}(t_j) (t_{j+1}-t_j)} - e^{i \mathbf{H}(t_j) (t_{j+1}-t_j)}  \\
    \end{aligned}
\end{equation*}
For convenience, we denote $\mathbf{H}_p(t_j) = \mathbf{H}^{f}_p(t_j)$ if $p \le n_1$ and $\mathbf{H}_p(t_j) = \mathbf{H}^{\hat{f}}_{p-n_1}(t_j)$ otherwise. Adopting the triangle inequality, we have:
\begin{equation*}
    \begin{aligned}
        & \left \| e^{i \mathbf{H}^{f}_1(t_j) (t_{j+1}-t_j)} \cdots e^{i \mathbf{H}^{f}_{n_1}(t_j) (t_{j+1}-t_j)} e^{i \mathbf{H}^{\hat{f}}_1(t_j) (t_{j+1}-t_j)} \cdots e^{i \mathbf{H}^{\hat{f}}_{n_2}(t_j) (t_{j+1}-t_j)} -  e^{i \mathbf{H}(t_j) (t_{j+1}-t_j)} \right \| \\
        \le & \sum_{k=1}^{n_1+n_2-1} \left\| e^{i \prod_{p=k}^{n_1+n_2} \mathbf{H}_p(t_j)(t_{j+1}-t_j)} - e^{i \prod_{p=k+1}^{n_1+n_2} \mathbf{H}_p(t_j) (t_{j+1}-t_j)}\right\| + \left\| e^{i \mathbf{H}^{\hat{f}}_{n_2}(t_j) (t_{j+1}-t_j)} - e^{i \mathbf{H}(t_j) (t_{j+1}-t_j)} \right\| 
    \end{aligned}
\end{equation*}
Leveraging Lemma \ref{lemma: two time-dependent}, we have:
\begin{equation*}
    \left\| e^{i \prod_{p=k}^{n_1+n_2} \mathbf{H}_p(t_j)(t_{j+1}-t_j)} - e^{i \prod_{p=k+1}^{n_1+n_2} \mathbf{H}_p(t_j) (t_{j+1}-t_j)}\right\| \le \left \| e^{i \mathbf{H}_k(t_j) (t_{j+1}-t_j)} - \mathbf{I} \right\| \le \int_{t_j}^{t_{j+1}} dl \left\| \mathbf{H}_k(t_j)\right\|
\end{equation*}
Combining the above inequalities, we have:
\begin{equation*}
    \begin{aligned}
        & \left \| e^{i \mathbf{H}^{f}_1(t_j) (t_{j+1}-t_j)} \cdots e^{i \mathbf{H}^{f}_{n_1}(t_j) (t_{j+1}-t_j)} e^{i \mathbf{H}^{\hat{f}}_1(t_j) (t_{j+1}-t_j)} \cdots e^{i \mathbf{H}^{\hat{f}}_{n_2}(t_j) (t_{j+1}-t_j)} -  e^{i \mathbf{H}(t_j) (t_{j+1}-t_j)} \right \| \\
        \le & \sum_{k=1}^{n_1+n_2-1} \int_{t_j}^{t_{j+1}} dl \left\| \mathbf{H}_k(t_j)\right\| + \int_{t_j}^{t_{j+1}} dl \left\| \mathbf{H}_{n_1+n_2}(t_j) - \mathbf{H}(t_j)\right\| \\
        = &\sum_{k=1}^{n_1} \left\| \mathbf{H}^f_k(t_j)\right\| (t_{j+1}-t_j) + \sum_{k=1}^{n_2-1} \left\| \mathbf{H}_k^{\hat{f}}(t_j) \right\| (t_{j+1}-t_j) + \left\| \mathbf{H}^{\hat{f}}_{n_2}(t_j)- \mathbf{H} (t_j)\right\| (t_{j+1}-t_j)
    \end{aligned}
\end{equation*}
\end{proof}

We impose certain restrictions on our algorithm. During each period time interval $t_{j+1}-t_j$, the Hamiltonian remains time-independent $\mathbf{H}(t_j)$ rather than $\mathbf{H}(t)$ for $ \forall t \in [t_j,t_{j+1})$. Essentially, our algorithm evolves under a time-independent Hamiltonian within each discrete time period. In addition, we consider the implementation of our algorithm in quantum circuits. Here, we consider the linear combination of some efficiently implemented Hamiltonian. Consequently, we stipulate that our algorithm can be efficiently implemented in quantum circuits through two components $e^{\mathbf{H}^{f}_1(t_j)},\cdots ,e^{\mathbf{H}^{f}_{n_1}(t_j)}$ and $e^{\mathbf{H}^{\hat{f}}_1(t_j)},\cdots ,e^{\mathbf{H}^{\hat{f}}_{n_2}(t_j)}$, where $\mathbf{H}^{f}_k(t_j)$ commutes with $\mathbf{H}_f$, while $\mathbf{H}^{\hat{f}}_k(t_j)$ does not commute with $\mathbf{H}_f$. For notational convenience, we define $\mathbf{H}(t_j) = \mathbf{H}^f(t_j) + \mathbf{H}^{\hat{f}}(t_j) = \sum_{k=1}^{n_1} \mathbf{H}^{f}_k (t_j) + \sum_{k=1}^{n_2} \mathbf{H}^{\hat{f}}_k (t_j)$. The increment of the potential function from time $t_j$ to time $t_{j+1}$ can be expressed as follows: 
\begin{equation}
    \begin{aligned}
        & E(t_{j+1}) - E(t_j) = \langle \psi(t_{j+1})| \mathbf{H}_f | \psi(t_{j+1}) \rangle - \langle \psi(t_j)| \mathbf{H}_f | \psi(t_j) \rangle  - \left[ \lambda(t_{j+1}) - \lambda(t_j) \right] \langle \psi^*| \mathbf{H}_f | \psi^* \rangle  \\
        & = \langle \psi(t_j)|e^{i \mathbf{H}^f_1(t_j)(t_{j+1}-t_j)}\cdots e^{i \mathbf{H}^f_{n_1}(t_j)(t_{j+1}-t_j)} e^{i \mathbf{H}^{\hat{f}}_1(t_j)(t_{j+1}-t_j)}\cdots e^{i \mathbf{H}^{\hat{f}}_{n_2}(t_j)(t_{j+1}-t_j)} \\
        & \qquad \qquad \quad \mathbf{H}_f e^{-i \mathbf{H}^{\hat{f}}_{n_2}(t_j)(t_{j+1}-t_j)}\cdots e^{-i \mathbf{H}^{\hat{f}}_1 (t_j)(t_{j+1}-t_j)} e^{-i \mathbf{H}^f_{n_1}(t_j)(t_{j+1}-t_j)}\cdots e^{-i \mathbf{H}^f_1(t_j)(t_{j+1}-t_j)}   | \psi(t_j)\rangle \\
        & \quad - \langle \psi(t_j)| \mathbf{H}_f | \psi(t_j) \rangle - \left[ \lambda(t_{j+1}) - \lambda(t_j) \right] \langle \psi^*| \mathbf{H}_f | \psi^* \rangle \\
        & \ge \langle \psi(t_j)|e^{i \mathbf{H}(t_j)(t_{j+1}-t_j)} \mathbf{H}_f e^{-i \mathbf{H}(t_j)(t_{j+1}-t_j)} | \psi(t_j)\rangle - \langle \psi(t_j)| \mathbf{H}_f | \psi(t_j) \rangle - \left[ \lambda(t_{j+1}) - \lambda(t_j) \right] \langle \psi(t_j)| \mathbf{Q}(t_j) | \psi(t_j) \rangle \\
        & \quad -2 \left\| \mathbf{H}_f \right \| \left( \sum_{k=1}^{n_1} \left\| \mathbf{H}^f_{k}(t_j) \right\| + \sum_{k=1}^{n_2-1} \left\| \mathbf{H}_k^{\hat{f}}(t_j)  \right\|  + \left\| \sum_{k=1}^{n_1} \mathbf{H}^f_k(t_j) + \sum_{k=1}^{n_2-1} \mathbf{H}_k^{\hat{f}} (t_j)\right\|  \right) (t_{j+1}-t_j) \ge -\epsilon
    \end{aligned}
\end{equation}
The first inequality comes from the above lemmas. Rearranging the above inequality, the approximation ratio $\lambda$ has the following explicit formula:
\begin{equation}
    \begin{aligned}
        & \cfrac{\langle \psi(t_j)|e^{i \mathbf{H}(t_j)(t_{j+1}-t_j)} \mathbf{H}_f e^{-i \mathbf{H}(t_j)(t_{j+1}-t_j)} | \psi(t_j)\rangle - \langle \psi(t_j)| \mathbf{H}_f | \psi(t_j) \rangle + \epsilon}{\langle \psi(t_j)| \mathbf{Q}(t_j) | \psi(t_j) \rangle} \\
        & - \cfrac{2 \left\| \mathbf{H}_f \right \| \left( \sum_{k=1}^{n_1} \left\| \mathbf{H}^f_{k}(t_j) \right\| + \sum_{k=1}^{n_2-1} \left\| \mathbf{H}_k^{\hat{f}}(t_j)  \right\|  + \left\| \sum_{k=1}^{n_1} \mathbf{H}^f_k(t_j) + \sum_{k=1}^{n_2-1} \mathbf{H}_k^{\hat{f}} (t_j)\right\|  \right) (t_{j+1}-t_j)}{\langle \psi(t_j)| \mathbf{Q}(t_j) | \psi(t_j) \rangle} \\
        & \ge  \lambda(t_{j+1} ) - \lambda(t_j)
    \end{aligned}
\end{equation}

Leveraging Eq.(\ref{eq::potential increment}), our objective is to maximize $\lambda(T)-\lambda(0)$ and minimize $\sum_{j=0}^{N-1} Err(t_j)$ simultaneously. The pivotal task becomes identifying a proper $\mathbf{H}(t_j)$ such that the above inequality is as large as possible while $\epsilon$ is as small as possible. To achieve this goal, we employ the Taylor expansion technique. There are two primary reasons for this choice. Firstly, the objective function at the current state $| \psi(t_j)\rangle$ is known before choosing $\mathbf{H}(t_j)$. By applying the Taylor expansion, we can effectively eliminate it. Secondly, we can employ the Taylor expansion to describe the future, as yet unknown state $|\psi(t_{j+1})\rangle$ in terms of the presently known state $|\psi(t_{j})\rangle$. This allows us to more effectively manipulate and optimize the expression with respect to our target.
\begin{equation}\label{eq: taylor expansion}
    \begin{aligned}
        & \langle \psi(t_j)|e^{i \mathbf{H}(t_j)(t_{j+1}-t_j)}  \mathbf{H}_f  e^{-i \mathbf{H}(t_j) (t_{j+1}-t_j)} | \psi(t_j)\rangle - \langle \psi(t_j)| \mathbf{H}_f | \psi(t_j) \rangle \\
        & = \langle \psi(t_j)| i [ \mathbf{H}^{\hat{f}}(t_j), \mathbf{H}_f] | \psi(t_j) \rangle (t_{j+1}-t_j) + \frac{1}{2!}\langle \psi(t_j)| [\mathbf{H}(t_j),[\mathbf{H}_f,\mathbf{H}^{\hat{f}}(t_j)]]  | \psi(t_j) \rangle  (t_{j+1}-t_j)^2 \\
        & \quad + \frac{1}{3!} \langle \psi(t_j)| [\mathbf{H}(t_j),[\mathbf{H}(t_j),[\mathbf{H}_f,\mathbf{H}^{\hat{f}}(t_j)]]]  | \psi(t_j) \rangle (t_{j+1}-t_j)^3 + \cdots  
    \end{aligned}
\end{equation}

Based on the above equation, we construct a solution to approach the approximation ratio. Intuitively, our approach entails choosing a proper $\mathbf{H}(t_j)$ such that $\langle \psi(t_j)| i [ \mathbf{H}^{\hat{f}}(t_j), \mathbf{H}_f] | \psi(t_j) \rangle (t_{j+1}-t_j)$ is non-negative, while simultaneously keeping the absolute value of the remainder term as small as possible. Similarly, it should be emphasized that the above control procedure only manipulates the non-commutative component, as the commutative component does not directly contribute to the objective function. We choose for the following:
\begin{equation}\label{eq: Lyapunov 1 hamiltonian discrete}
    \mathbf{H}^{f}(t_j) = \sum_{k=1}^{n_1} \eta_k(t_j)\mathbf{H}^{f}_k, \quad \mathbf{H}^{\hat{f}}(t_j) = \sum_{k=1}^{n_2} \alpha_k(t_j) \mathbf{H}^{\hat{f}}_k
\end{equation}
where $\mathbf{H}^{f}_k$ and $\mathbf{H}^{\hat{f}}_k$ can be effectively implemented on quantum circuits. Using the feedback-control and denoting $O_k(t_j) = \langle \psi(t_j)| i [\mathbf{H}^{\hat{f}}_k, \mathbf{H}_f] | \psi(t_j) \rangle$, we choose $\alpha_k(t_j) = \beta_k(t_j)O_k(t_j)$. If $\beta_k(t_j) \ge 0$, we have: 
\begin{equation}\label{eq: approximation ratio lyapunov 1}
    \begin{aligned}
        \cfrac{\langle \psi(t_j)| i [ \mathbf{H}^{\hat{f}}(t_j), \mathbf{H}_f] | \psi(t_j) \rangle (t_{j+1}-t_j)}{\langle \psi(t_j)| \mathbf{Q}(t_j) | \psi(t_j) \rangle} = \cfrac{\sum_{k=1}^{n_2} \beta_k(t_j)O_k^2(t_j) (t_{j+1}-t_j)}{\langle \psi(t_j)| \mathbf{Q}(t_j) | \psi(t_j) \rangle} \ge 0
    \end{aligned}
\end{equation}
As a result, the difference of the approximation ratio at each round can be bounded from below by the following inequality:
\begin{equation}
    \begin{array}{c}
        \lambda(t_{j+1} ) - \lambda(t_j)  =\cfrac{\Delta^1 (t_{j+1}-t_j)}{\langle \psi(t_j)| \mathbf{Q}(t_j) | \psi(t_j) \rangle} \\
        \begin{aligned}
            \epsilon \ge &\cfrac{\sum_{k=2}^{\infty} |\Delta^k(t_{j+1}-t_j)|}{\langle \psi(t_j)| \mathbf{Q}(t_j) | \psi(t_j) \rangle} \\
            & +\cfrac{2 \left\| \mathbf{H}_f \right \| \left( \sum_{k=1}^{n_1} \left\| \mathbf{H}^f_{k}(t_j)  \right\| + \sum_{k=1}^{n_2-1} \left\| \mathbf{H}_k^{\hat{f}}(t_j) \right\|  + \left\| \sum_{k=1}^{n_1} \mathbf{H}^f_k(t_j) + \sum_{k=1}^{n_2-1} \mathbf{H}_k^{\hat{f}} (t_j)\right\|  \right) (t_{j+1}-t_j)}{\langle \psi(t_j)| \mathbf{Q}(t_j) | \psi(t_j) \rangle}
        \end{aligned}
    \end{array}
\end{equation}
where $\Delta^k (t_{j+1}-t_j)$ is denoted as $k$-th term in Taylor expansion. The above formula implies that our aspiration is to make the first term in Taylor expansion as large as possible and the reminder terms in Taylor expansion are errors as small as possible. Next, we analyze the bound of the error.
\begin{lemma}
    \begin{equation*}
        \sum_{k=2}^{\infty} |\Delta^k(t_{j+1}-t_j)| \le \cfrac{ 2 \left\| \mathbf{H}_f\right\|\left\| \mathbf{H}^{\hat{f}}(t_j)\right\|\left \| \mathbf{H}(t_j) \right\| (t_{j+1} - t_j)^2}{(1-\left \| \mathbf{H}(t_j) \right\| (t_{j+1} - t_j))}
    \end{equation*}
\end{lemma}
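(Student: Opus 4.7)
The plan is to identify each $\Delta^k(\tau)$ from the Hadamard expansion of $e^{i\mathbf{H}(t_j)\tau}\mathbf{H}_f e^{-i\mathbf{H}(t_j)\tau}$, bound it by iterating the submultiplicativity of the commutator norm, and close up with a geometric series once the factorials are dealt with. Write $\tau=t_{j+1}-t_j$ for brevity and let $\mathrm{ad}_A(X)=[A,X]$. The standard identity
\begin{equation*}
    e^{iA\tau}Be^{-iA\tau}=\sum_{k=0}^{\infty}\frac{(i\tau)^k}{k!}\,\mathrm{ad}_A^{\,k}(B)
\end{equation*}
applied with $A=\mathbf{H}(t_j)$ and $B=\mathbf{H}_f$ identifies
\begin{equation*}
    \Delta^k(\tau)=\frac{(i\tau)^k}{k!}\,\langle\psi(t_j)|\,\mathrm{ad}_{\mathbf{H}(t_j)}^{\,k}(\mathbf{H}_f)\,|\psi(t_j)\rangle .
\end{equation*}
The innermost bracket simplifies via $[\mathbf{H}(t_j),\mathbf{H}_f]=[\mathbf{H}^{\hat f}(t_j),\mathbf{H}_f]$, since the commuting part $\mathbf{H}^{f}(t_j)$ drops out. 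This is the crucial observation that will keep $\|\mathbf{H}^{\hat f}(t_j)\|$ (rather than $\|\mathbf{H}(t_j)\|$) as a prefactor in the final bound.

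Next I would apply $|\langle\psi|C|\psi\rangle|\le\|C\|$ followed by $k-1$ iterations of the elementary estimate $\|[A,B]\|\le 2\|A\|\,\|B\|$, the outer copies of which contribute $\|\mathbf{H}(t_j)\|$ and the innermost of which is applied to $[\mathbf{H}^{\hat f}(t_j),\mathbf{H}_f]$. This yields
\begin{equation*}
    |\Delta^k(\tau)|\le\frac{\tau^k}{k!}\,\bigl(2\|\mathbf{H}(t_j)\|\bigr)^{k-1}\!\cdot 2\,\|\mathbf{H}^{\hat f}(t_j)\|\,\|\mathbf{H}_f\|=\frac{2^k}{k!}\,\tau^k\,\|\mathbf{H}(t_j)\|^{k-1}\,\|\mathbf{H}^{\hat f}(t_j)\|\,\|\mathbf{H}_f\| .
\end{equation*}
The coarse bound $2^k/k!\le 2$, valid for every $k\ge 2$ with equality at $k=2$ and a one-line induction afterwards, then reduces each term to precisely the shape needed for a clean geometric sum.

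The final step is to sum from $k=2$ to $\infty$. With $r:=\|\mathbf{H}(t_j)\|\,\tau<1$, which is exactly the step-size condition already imposed in the preceding theorem, convergence is guaranteed and
\begin{equation*}
    \sum_{k=2}^{\infty}|\Delta^k(\tau)|\le 2\,\|\mathbf{H}_f\|\,\|\mathbf{H}^{\hat f}(t_j)\|\sum_{k=2}^{\infty}\|\mathbf{H}(t_j)\|^{k-1}\tau^k=\frac{2\,\|\mathbf{H}_f\|\,\|\mathbf{H}^{\hat f}(t_j)\|\,\|\mathbf{H}(t_j)\|\,\tau^2}{1-\|\mathbf{H}(t_j)\|\,\tau} ,
\end{equation*}
which is exactly the claimed inequality.

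There is no real obstacle here; the only point that needs care, and essentially the only subtlety, is to avoid bounding the innermost commutator as $2\|\mathbf{H}(t_j)\|\,\|\mathbf{H}_f\|$: doing so would replace $\|\mathbf{H}^{\hat f}(t_j)\|$ by $\|\mathbf{H}(t_j)\|$ and weaken the stated constant. The reduction $[\mathbf{H}^{f}(t_j),\mathbf{H}_f]=0$ at the bottom of the commutator chain is what preserves the sharper factor $\|\mathbf{H}^{\hat f}(t_j)\|$; every other step is routine commutator arithmetic, the combinatorial inequality $2^k\le 2\cdot k!$ for $k\ge 2$, and a single geometric series.
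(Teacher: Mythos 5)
Your proof is correct and follows essentially the same route as the paper's: identify $\Delta^k$ with the $k$-th term of the Hadamard/Taylor expansion whose innermost commutator is $[\mathbf{H}_f,\mathbf{H}^{\hat f}(t_j)]$, bound the nested commutator by $2^k\|\mathbf{H}(t_j)\|^{k-1}\|\mathbf{H}^{\hat f}(t_j)\|\,\|\mathbf{H}_f\|/k!$, use $2^k/k!\le 2$ for $k\ge 2$, and sum the geometric series under $\|\mathbf{H}(t_j)\|(t_{j+1}-t_j)<1$. The one point you flag as subtle --- keeping $\|\mathbf{H}^{\hat f}(t_j)\|$ rather than $\|\mathbf{H}(t_j)\|$ in the innermost bracket --- is indeed exactly what the paper's bound relies on as well.
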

\begin{proof}
    It is easy to check $|\Delta^k(t_{j+1}-t_j)|$ has the following explicit formulation:
    \begin{equation*}
        \begin{aligned}
            | \Delta^k (t_{j+1}-t_j) | & = \left| \frac{1}{k!} \langle \psi(t_j)|[\mathbf{H}(t_j), \cdots [\mathbf{H}(t_j),[\mathbf{H}(t_j),[\mathbf{H}_f,\mathbf{H}(t_j)]]]\cdots]  | \psi(t_j) \rangle \right| (t_{j+1}-t_j)^k \\
            & \le \cfrac{2^k \left \| \mathbf{H}(t_j) \right\|^{k-1} \left\| \mathbf{H}^{\hat{f}}(t_j) \right\| \left\| \mathbf{H}_f\right\|  (t_{j+1}-t_j)^k}{k!} \\
            & \le 2 \left \| \mathbf{H}(t_j) \right\|^{k-1} \left\| \mathbf{H}^{\hat{f}}(t_j) \right\| \left\| \mathbf{H}_f\right\|  (t_{j+1}-t_j)^k
        \end{aligned}
    \end{equation*}
    Therefore, we have:
    \begin{equation*}
        \begin{aligned}
            \sum_{k=2}^{\infty} |\Delta^k(t_{j+1}-t_j)| & \le \sum_{k=2}^\infty \cfrac{2 \left \| \mathbf{H}^{\hat{f}}(t_j) \right\| \left \| \mathbf{H}(t_j) \right\|^{k} \left\| \mathbf{H}_f\right\|  (t_{j+1}-t_j)^k}{\left \| \mathbf{H}(t_j) \right\| }  \\
            & = \cfrac{ 2 \left\| \mathbf{H}_f\right\|\left\| \mathbf{H}^{\hat{f}}(t_j)\right\|}{\left \| \mathbf{H}(t_j) \right\|} \cdot \left(\cfrac{1}{1-\left \| \mathbf{H}(t_j) \right\| (t_{j+1} - t_j)} - 1 - \left \| \mathbf{H}(t_j) \right\| (t_{j+1} - t_j) \right) \\
            & = \cfrac{ 2 \left\| \mathbf{H}_f\right\|\left\| \mathbf{H}^{\hat{f}}(t_j)\right\|\left \| \mathbf{H}(t_j) \right\| (t_{j+1} - t_j)^2}{(1-\left \| \mathbf{H}(t_j) \right\| (t_{j+1} - t_j))}
        \end{aligned}
    \end{equation*}
    The first equality comes from $\frac{1}{1-x} = \sum_{j=0}^{\infty} x^{j}$ for $x\in [0,1]$.
\end{proof}

Finally, we aggregate all errors to the approximation ratio. If we hope that errors are less than $\epsilon$, then the upper bound of $t_{j+1}-t_j$ can be derived as follows:
\begin{equation}\label{eq: delta t in lyapunov 1 direct}
    \begin{array}{c}
          \begin{aligned}
              & \cfrac{ 2 \left\| \mathbf{H}_f\right\|\left\| \mathbf{H}^{\hat{f}}(t_j)\right\|\left \| \mathbf{H}(t_j) \right\| (t_{j+1} - t_j)^2}{1-\left \| \mathbf{H}(t_j) \right\| (t_{j+1} - t_j)} \\ 
              & + 2 \left\| \mathbf{H}_f \right \| \left( \sum_{k=1}^{n_1} \left\| \mathbf{H}^f_{k}(t_j)  \right\| + \sum_{k=1}^{n_2-1} \left\| \mathbf{H}_k^{\hat{f}}(t_j) \right\|  + \left\| \sum_{k=1}^{n_1} \mathbf{H}^f_k(t_j) + \sum_{k=1}^{n_2-1} \mathbf{H}_k^{\hat{f}} (t_j)\right\|  \right) (t_{j+1}-t_j) \le \epsilon 
          \end{aligned}
          \\
         t_{j+1}-t_j  \begin{cases}
             \le \cfrac{\epsilon}{B+C\epsilon+\sqrt{\epsilon(A-BC)}}& \text{if } A-BC \ge 0\\
              \le \cfrac{\epsilon}{B+C\epsilon+\sqrt{A \epsilon}} & \text{if }A-BC < 0
         \end{cases} \Rightarrow t_{j+1}-t_j \le \cfrac{\epsilon}{B+C\epsilon+\sqrt{A \epsilon}}
    \end{array}
\end{equation}
where
\begin{equation*}
    \begin{aligned}
        A &= 2 \left\| \mathbf{H}_f \right\|
               \left\| \mathbf{H}^{\hat{f}}(t_j) \right\|
               \left\| \mathbf{H}(t_j) \right\| \\[0.8em]
        B &= 2 \left\| \mathbf{H}_f \right\| 
             \bigg(
               \sum_{k=1}^{n_1} \left\| \mathbf{H}^f_{k}(t_j) \right\|
               + 
               \sum_{k=1}^{n_2-1} \left\| \mathbf{H}_k^{\hat{f}}(t_j) \right\|
               +
               \left\| 
                   \sum_{k=1}^{n_1} \mathbf{H}^f_{k}(t_j)
                   +
                   \sum_{k=1}^{n_2-1} \mathbf{H}_k^{\hat{f}}(t_j)
               \right\|
             \bigg) \\[0.8em]
        C &= \left\| \mathbf{H}(t_j) \right\|
    \end{aligned}
\end{equation*}


Totally, according to the above analyses, we design a quantum approximation algorithm for combinatorial optimization problems through Lyapunov control and simultaneously obtain the theoretical guarantees increase with the number of round increase. 
\begin{theorem} \label{thm::one-para-descrete}
    When we choose the Hamiltonian in Eq.(\ref{eq: Lyapunov 1 hamiltonian discrete}) and set the evolution time to Eq.(\ref{eq: delta t in lyapunov 1 direct}) in $j+1$-th round, then the approximation ratio increases after $j+1$-th round by Eq.(\ref{eq: approximation ratio lyapunov 1}). 
\end{theorem}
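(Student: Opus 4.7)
The plan is to bound the one-step increment of the discrete potential $E(t_j) = \langle\psi(t_j)|\mathbf{H}_f|\psi(t_j)\rangle - \lambda(t_j)\langle\psi^*|\mathbf{H}_f|\psi^*\rangle$ and to identify the choice of $\lambda(t_{j+1})-\lambda(t_j)$ that keeps $E(t_{j+1}) - E(t_j)$ nonnegative up to a controllable error $\epsilon$. First I would upper bound the unknown term $\langle\psi^*|\mathbf{H}_f|\psi^*\rangle$ by $\langle\psi(t_j)|\mathbf{Q}(t_j)|\psi(t_j)\rangle$, and then invoke the three lemmas proved earlier in the appendix (the operator-sandwich perturbation estimate, the Hamiltonian-simulation continuity bound, and the Trotter factorization lemma) to replace the product $\prod_{k} e^{i\mathbf{H}^f_k(t_j)\tau}\prod_{k} e^{i\mathbf{H}^{\hat{f}}_k(t_j)\tau}$ by the single exponential $e^{i\mathbf{H}(t_j)\tau}$, with $\tau := t_{j+1}-t_j$. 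The resulting Trotter error is linear in $\tau$ with exactly the coefficient $B$, since $B$ is $2\|\mathbf{H}_f\|$ times the operator-norm sum appearing in the telescoping estimate.

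Second, I would expand $e^{i\mathbf{H}(t_j)\tau}\mathbf{H}_f e^{-i\mathbf{H}(t_j)\tau} - \mathbf{H}_f$ as the nested-commutator series in Eq.(\ref{eq: taylor expansion}). The first-order term is $i[\mathbf{H}^{\hat{f}}(t_j),\mathbf{H}_f]\tau$, because the commuting component $\mathbf{H}^f(t_j)$ drops out of the commutator with $\mathbf{H}_f$. Substituting the feedback choice $\alpha_k(t_j) = \beta_k(t_j)O_k(t_j)$ into $\mathbf{H}^{\hat{f}}(t_j)$ converts this leading contribution into $\sum_{k=1}^{n_2}\beta_k(t_j)O_k^2(t_j)\,\tau$, which is manifestly nonnegative. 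Dividing by the upper bound $\langle\psi(t_j)|\mathbf{Q}(t_j)|\psi(t_j)\rangle$ and equating this increment to $\lambda(t_{j+1})-\lambda(t_j)$ reproduces exactly the formula claimed in the theorem.

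Third, I would absorb both the higher-order Taylor terms and the Trotter residual into the error budget $\epsilon$. Estimating the $k$-th commutator contribution by $|\Delta^k(\tau)| \le \tfrac{2^k}{k!}\|\mathbf{H}(t_j)\|^{k-1}\|\mathbf{H}^{\hat{f}}(t_j)\|\|\mathbf{H}_f\|\,\tau^k$ and summing the geometric tail for $k\ge 2$ yields $\sum_{k\ge 2}|\Delta^k| \le A\tau^2/(1 - C\tau)$ with $A$ and $C$ as in the statement. Imposing that this quadratic-in-$\tau$ Taylor remainder plus the linear-in-$\tau$ Trotter error not exceed $\epsilon\langle\psi(t_j)|\mathbf{Q}(t_j)|\psi(t_j)\rangle$ reduces to the inequality $A\tau^2 + B\tau(1-C\tau) \le \epsilon(1-C\tau)$, equivalently $(A-BC)\tau^2 + (B+C\epsilon)\tau - \epsilon \le 0$. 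The main technical step is to solve this quadratic uniformly across the two sign regimes of $A-BC$: in the regime $A-BC\ge 0$ the positive root is $\epsilon/(B+C\epsilon+\sqrt{\epsilon(A-BC)})$, which is upper bounded by $\epsilon/(B+C\epsilon+\sqrt{A\epsilon})$ since $\sqrt{\epsilon(A-BC)} \le \sqrt{A\epsilon}$; in the regime $A-BC < 0$, dropping the non-positive quadratic term only enlarges the admissible $\tau$ and the same uniform estimate $\tau \le \epsilon/(B+C\epsilon+\sqrt{A\epsilon})$ remains sufficient. Unifying these two cases is the only subtle point; everything else is a routine assembly of the Trotter estimate, the commutator expansion, and the feedback-control substitution.
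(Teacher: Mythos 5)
Your proposal is correct and follows essentially the same route as the paper's Appendix~\ref{Appendix::One-Parameter-Discretion}: the same three lemmas for the Trotter/telescoping error, the same nested-commutator expansion with the feedback substitution $\alpha_k(t_j)=\beta_k(t_j)O_k(t_j)$, and the same quadratic inequality $(A-BC)\tau^2+(B+C\epsilon)\tau-\epsilon\le 0$ unified across the two sign regimes of $A-BC$ via $\sqrt{\epsilon(A-BC)}\le\sqrt{A\epsilon}$. The only cosmetic difference is that the quantity $\epsilon/(B+C\epsilon+\sqrt{\epsilon(A-BC)})$ is a lower bound on the positive root rather than the root itself, but this is the safe direction and is exactly how the paper treats it.
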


\section{Discrete Procedures for Lyapunov Function with Two Parameter} \label{Appendix::Two-Parameter-Discretion}
In this appendix, we present the discrete time formulation of the approximation ratio for the algorithm described by Eq. (\ref{eq::two_para_ratio}) under the two parameter Lyapunov framework. Firstly, we provide the formal definition of the discrete Lyapunov function.
\begin{definition}[Potential Function]
    The discrete Lyapunov function at time $t_j$ is as follows:
    \begin{equation}
        E(t_j) = x(t_j)\langle \psi(t_j)| \mathbf{H}_f | \psi(t_j) \rangle- y(t_j) \langle \psi^*| \mathbf{H}_f | \psi^* \rangle 
    \end{equation}
    The upper bound of the optimal solution at time $t_j$ is as follows:
    \begin{equation}
        0 \le \langle \psi^*| \mathbf{H}_f | \psi^* \rangle \le a(t_j) \langle \psi(t_j)| \mathbf{H}_f | \psi(t_j) \rangle+ b(t_j)\langle \psi(t_j)| \mathbf{Q}(t_j) | \psi(t_j) \rangle 
    \end{equation}
    where $x(t_j)$ and $y(t_j)$ are non-negative and non-decreasing sequences, $a(t_j)$ and $b(t_j)$ are non-negative sequences, and the upper bound sequence of the optimal solution can be constructed prior to the design of the algorithm. 
\end{definition}
Our target is to manipulate a 'non-decreasing' sequence $E(t_j)$ within a certain tolerant range that may disrupt monotonicity. The increment of the potential function from time $t_j$ to time $t_{j+1}$ is as follows:
\begin{equation}\label{eq::lyapunov increment 2}
    \begin{aligned}
        E(t_{j+1}) - E(t_j) & = x(t_{j+1})\langle \psi(t_{j+1})| \mathbf{H}_f | \psi(t_{j+1}) \rangle- y(t_{j+1}) \langle \psi^*| \mathbf{H}_f | \psi^* \rangle  - x(t_j)\langle \psi(t_j)| \mathbf{H}_f | \psi(t_j) \rangle + y(t_j) \langle \psi^*| \mathbf{H}_f | \psi^* \rangle \\
        & = x(t_{j+1}) \Big(\langle \psi(t_{j+1})| \mathbf{H}_f | \psi(t_{j+1}) \rangle - \langle \psi(t_j)| \mathbf{H}_f | \psi(t_j) \rangle\Big) \\
        & \quad+ \Big(x(t_{j+1})-x(t_j)\Big) \langle \psi(t_j)| \mathbf{H}_f | \psi(t_j) \rangle - \Big[ y(t_{j+1}) - y(t_j) \Big] \langle \psi^*| \mathbf{H}_f | \psi^* \rangle  \\
        & \ge x(t_{j+1}) \Big(\langle \psi(t_{j+1})| \mathbf{H}_f | \psi(t_{j+1}) \rangle - \langle \psi(t_j)| \mathbf{H}_f | \psi(t_j) \rangle\Big)\\
        & \quad + \Big(x(t_{j+1})-x(t_j) -  \big(y(t_{j+1}) - y(t_j)\big)a(t_j) \Big) \langle \psi(t_j)| \mathbf{H}_f | \psi(t_j) \rangle\\
        & \quad - \Big(y(t_{j+1}) - y(t_j)\Big) b(t_j)\langle \psi(t_j)| \mathbf{Q}(t_j) | \psi(t_j) \rangle  \\
        & \ge -Err(t_j)
    \end{aligned}
\end{equation}
If the total round is $N$, then the approximation ratio of the algorithm designing from the above potential function is as follows:
\begin{equation}\label{eq::approximation ratio c 2}
    \langle \psi(T)| \mathbf{H}_f | \psi(T) \rangle \ge \cfrac{y(T)-y(0)}{x(T)} \langle \psi^*| \mathbf{H}_f | \psi^* \rangle + \cfrac{x(0)}{x(T)}\langle \psi(0)| \mathbf{H}_f | \psi(0) \rangle -  \cfrac{\sum_{j=0}^{N-1} Err(t_j)}{x(T)} 
\end{equation}

Consequently, our target is to choose Hamiltonian $\mathbf{H}(t_j)$, $x(t_j)$, and $y(t_j)$ at each discrete time $t_j$ such that the difference $\frac{y(T)-y(0)}{x(T)}$ is as large as possible and the absolute value of errors are as small as possible. We start from Eq.(\ref{eq::lyapunov increment 2}) to derive Hamiltonian $\mathbf{H}(t_j)$, $x(t_j)$ and $y(t_j)$ at each time $t_j$. This implies that we are required to design three sequences $\mathbf{H}(t_j)$, $x(t_j)$, and $y(t_j)$. Generally speaking, this constitutes a large-scale optimization problem since the total round number $N$ could potentially be extremely large. Here, instead of directly tackling the large-scale optimization problem which is typically time-consuming, we impose certain restrictions on our algorithm to obtain a sufficiently good feasible solution. 

Similar with the single parameter Lyapunov function, we impose restrictions on our algorithm. Firstly, our algorithm evolves under a time-independent Hamiltonian during each time period, i.e., $\mathbf{H}(t) = \mathbf{H}(t_j)$ at each period $t \in [t_j,t_{j+1})$. Secondly, we impose the constraint that our algorithm can be efficiently implemented in quantum circuits via two components $e^{\mathbf{H}^{f}_1(t_j)},\cdots ,e^{\mathbf{H}^{f}_{n_1}(t_j)}$ and $e^{\mathbf{H}^{\hat{f}}_1(t_j)},\cdots ,e^{\mathbf{H}^{\hat{f}}_{n_2}(t_j)}$, where $\mathbf{H}^{f}_k(t_j)$ is commutative with $\mathbf{H}_f$, while $\mathbf{H}^{\hat{f}}_k(t_j)$ is not commutative with $\mathbf{H}_f$. For convenience, we denote $\mathbf{H}(t_j) = \mathbf{H}^f(t_j) + \mathbf{H}^{\hat{f}}(t_j) = \sum_{k=1}^{n_1} \mathbf{H}^{f}_k (t_j) + \sum_{k=1}^{n_2} \mathbf{H}^{\hat{f}}_k (t_j)$. Therefore, we have: 
\begin{equation}\label{eq: discrete marginal gain b b}
    \begin{aligned}
        & E(t_{j+1}) - E(t_j) = x(t_{j+1})\langle \psi(t_{j+1})| \mathbf{H}_f | \psi(t_{j+1}) \rangle - x(t_j)\langle \psi(t_j)| \mathbf{H}_f | \psi(t_j) \rangle  - \Big[ y(t_{j+1}) - y(t_j) \Big] \langle \psi^*| \mathbf{H}_f | \psi^* \rangle  \\
        & = x(t_{j+1})\langle \psi(t_j)|e^{i \mathbf{H}^f_1(t_j)(t_{j+1}-t_j)}\cdots e^{i \mathbf{H}^f_{n_1}(t_j)(t_{j+1}-t_j)} e^{i \mathbf{H}^{\hat{f}}_1(t_j)(t_{j+1}-t_j)}\cdots e^{i \mathbf{H}^{\hat{f}}_{n_2}(t_j)(t_{j+1}-t_j)} \\
        & \qquad \qquad \quad \mathbf{H}_f e^{-i \mathbf{H}^{\hat{f}}_{n_2}(t_j)(t_{j+1}-t_j)}\cdots e^{-i \mathbf{H}^{\hat{f}}_1 (t_j)(t_{j+1}-t_j)} e^{-i \mathbf{H}^f_{n_1}(t_j)(t_{j+1}-t_j)}\cdots e^{i \mathbf{H}^f_1(t_j)(t_{j+1}-t_j)}   | \psi(t_j)\rangle \\
        & \quad - x(t_j)\langle \psi(t_j)| \mathbf{H}_f | \psi(t_j) \rangle - \Big[ y(t_{j+1}) - y(t_j) \Big] \langle \psi^*| \mathbf{H}_f | \psi^* \rangle \\
        & \ge x(t_{j+1}) \Big[ \langle \psi(t_j)|e^{i \mathbf{H}(t_j)(t_{j+1}-t_j)} \mathbf{H}_f e^{-i \mathbf{H}(t_j)(t_{j+1}-t_j)} | \psi(t_j)\rangle - \langle \psi(t_j)| \mathbf{H}_f | \psi(t_j) \rangle \Big] \\
        & \quad -2 x(t_{j+1})\left\| \mathbf{H}_f \right \| \left( \sum_{k=1}^{n_1} \left\| \mathbf{H}^f_{k}(t_j) \right\| + \sum_{k=1}^{n_2-1} \left\| \mathbf{H}_k^{\hat{f}}(t_j)  \right\|  + \left\| \sum_{k=1}^{n_1} \mathbf{H}^f_k(t_j) + \sum_{k=1}^{n_2-1} \mathbf{H}_k^{\hat{f}} (t_j)\right\|  \right) (t_{j+1}-t_j) \\
        & \quad + \Big(x(t_{j+1})-x(t_j) -  \Big(y(t_{j+1}) - y(t_j)\Big)a(t_j) \Big) \langle \psi(t_j)| \mathbf{H}_f | \psi(t_j) \rangle\\
        & \quad - \Big(y(t_{j+1}) - y(t_j)\Big) b(t_j)\langle \psi(t_j)| \mathbf{Q}(t_j) | \psi(t_j) \rangle \\
    \end{aligned}
\end{equation}
Since the quantum state $|\psi(t_{j+1}) \rangle$ pretains to the subsequent discrete time, it remains unknown at the present time $t_j$. To address this, we employ the Taylor expansion to express $|\psi(t_{j+1}) \rangle$ in terms of $|\psi(t_j)\rangle,\mathbf{H}(t_j)$ and $(t_{j+1}-t_j)$. This approach also enables us to eliminate the objective function associated with the current quantum state, as it is independent of $\mathbf{H}(t_j)$. See details in Eq.(\ref{eq: taylor expansion}). Therefore, we can reformulate Eq.(\ref{eq: discrete marginal gain b b}) as follows:
\begin{equation*}
    \begin{aligned}
        & E(t_{j+1}) - E(t_j) \ge x(t_{j+1})\langle \psi(t_j)| i [ \mathbf{H}^{\hat{f}}(t_j), \mathbf{H}_f] | \psi(t_j) \rangle (t_{j+1}-t_j) - \Big(y(t_{j+1}) - y(t_j)\Big) b(t_j) \langle \psi(t_j)| \mathbf{Q}(t_j) | \psi(t_j) \rangle \\
        & \quad + \Big(x(t_{j+1})-x(t_j) -  (y(t_{j+1}) - y(t_j))a(t_j) \Big) \langle \psi(t_j)| \mathbf{H}_f | \psi(t_j) \rangle + x(t_{j+1})\sum_{k=2}^{\infty} \Delta^k(t_{j+1}-t_j)\\
        & \quad -2 x(t_{j+1})\left\| \mathbf{H}_f \right \| \left( \sum_{k=1}^{n_1} \left\| \mathbf{H}^f_{k}(t_j) \right\| + \sum_{k=1}^{n_2-1} \left\| \mathbf{H}_k^{\hat{f}}(t_j)  \right\|  + \left\| \sum_{k=1}^{n_1} \mathbf{H}^f_k(t_j) + \sum_{k=1}^{n_2-1} \mathbf{H}_k^{\hat{f}} (t_j)\right\|  \right) (t_{j+1}-t_j)
    \end{aligned}
\end{equation*}
where $\Delta^k (t_{j+1}-t_j)$ is denoted as $k$-th term in Taylor expansion. According to the above inequality, we derive a system to ensure the 'non-decreasing' sequence $E(t_j)$ with some tolerant ranges that may destroy monotonicity. Specially, the combination of the first two term is $0$, the third term is $0$, and the reminder terms can be bounded by $\epsilon$.
\begin{equation}
    \begin{cases}
        x(t_{j+1})\langle \psi(t_j)| i [ \mathbf{H}^{\hat{f}}(t_j), \mathbf{H}_f] | \psi(t_j) \rangle (t_{j+1}-t_j) - \Big(y(t_{j+1}) - y(t_j)\Big) b(t_j) \langle \psi(t_j)| \mathbf{Q}(t_j) | \psi(t_j) \rangle = 0\\
        x(t_{j+1})-x(t_j) -  \Big(y(t_{j+1}) - y(t_j)\Big)a(t_j) = 0
    \end{cases} 
\end{equation}
\begin{equation}
    \sum_{k=2}^{\infty} |\Delta^k(t_{j+1}-t_j)| + 2 \left\| \mathbf{H}_f \right \| \left( \sum_{k=1}^{n_1} \left\| \mathbf{H}^f_{k}(t_j) \right\| + \sum_{k=1}^{n_2-1} \left\| \mathbf{H}_k^{\hat{f}}(t_j)  \right\|  + \left\| \sum_{k=1}^{n_1} \mathbf{H}^f_k(t_j) + \sum_{k=1}^{n_2-1} \mathbf{H}_k^{\hat{f}} (t_j)\right\|  \right) (t_{j+1}-t_j) \le \frac{\epsilon}{x(t_{j+1})}
\end{equation}
Note that coming from the definition of the potential function $\Big(y(t_{j+1}) - y(t_j)\Big) b(t_j) \langle \psi(t_j)| \mathbf{Q}(t_j) | \psi(t_j) \rangle \ge 0$, it implies that we have to restrict $\langle \psi(t_j)| i [ \mathbf{H}^{\hat{f}}(t_j), \mathbf{H}_f] | \psi(t_j) \rangle (t_{j+1}-t_j) \ge 0$. Specially, in this work, we choose:
\begin{equation}\label{eq: Lyapunov 2 hamiltonian discrete}
    \mathbf{H}^{f}(t_j) = \sum_{k=1}^{n_1} \eta_k(t_j)\mathbf{H}^{f}_k, \quad \mathbf{H}^{\hat{f}}(t_j) = \sum_{k=1}^{n_2} \alpha_k(t_j) \mathbf{H}^{\hat{f}}_k
\end{equation}
where $\mathbf{H}^{f}_k$ and $\mathbf{H}^{\hat{f}}_k$ can be effectively implemented on quantum circuits. Using feedback-control and denoting $O_k(t_j) = \langle \psi(t_j)| i [\mathbf{H}^{\hat{f}}_k, \mathbf{H}_f] | \psi(t_j) \rangle$, we choose $\alpha_k(t_j) = \beta_k(t_j)O_k(t_j)$. If $\beta_k(t_j) \ge 0$, we have 
\begin{equation*}
    x(t_{j+1})\sum_{k=1}^{n_2} \beta_k(t_j)O^2_k(t_j) (t_{j+1}-t_j) = \Big(y(t_{j+1}) - y(t_j)\Big) b(t_j) \langle \psi(t_j)| \mathbf{Q}(t_j) | \psi(t_j) \rangle
\end{equation*}
Therefore, we have the following explicit formula of $x(t_j)$ and $y(t_j)$:
\begin{equation}\label{eq: x y in lyapunov 2 discrete}
    \begin{array}{c}
        x(t_{j+1}) = x(t_j) \cdot \cfrac{b(t_j) \langle \psi(t_j)| \mathbf{Q}(t_j) | \psi(t_j)}{b(t_j) \langle \psi(t_j)| \mathbf{Q}(t_j) | \psi(t_j) -a(t_j) \sum_{k=1}^{n_2} \beta_k(t_j)O^2_k(t_j) (t_{j+1}-t_j)} \\
        y(t_{j+1}) = y(t_j) + x(t_j) \cdot \cfrac{\sum_{k=1}^{n_2} \beta_k(t_j)O^2_k(t_j) (t_{j+1}-t_j)}{b(t_j) \langle \psi(t_j)| \mathbf{Q}(t_j) | \psi(t_j) -a(t_j) \sum_{k=1}^{n_2} \beta_k(t_j)O^2_k(t_j) (t_{j+1}-t_j)}
    \end{array}
\end{equation}

Next, we analyze errors. Firstly, considering the error from the $k$-th term in Taylor expansion about $t_{j+1}-t_j$, we have the following upper bound:
\begin{equation*}
    \sum_{k=2}^{\infty} \Delta^k(t_{j+1}-t_j) \le \sum_{k=2}^{\infty} |\Delta^k(t_{j+1}-t_j)| \le \cfrac{ 2 \left\| \mathbf{H}_f\right\|\left\| \mathbf{H}^{\hat{f}}(t_j)\right\|\left \| \mathbf{H}(t_j) \right\| (t_{j+1} - t_j)^2}{(1-\left \| \mathbf{H}(t_j) \right\| (t_{j+1} - t_j))}
\end{equation*}
Consequently, in each round, if our aim is to ensure that errors are bounded by $\epsilon$, it leads to an upper bound for the time period $t_{j+1}-t_j$ as dictated by the following inequality:
\begin{equation}
    \begin{aligned}
        & \cfrac{ 2 \left\| \mathbf{H}_f\right\|\left\| \mathbf{H}^{\hat{f}}(t_j)\right\|\left \| \mathbf{H}(t_j) \right\| (t_{j+1} - t_j)^2}{(1-\left \| \mathbf{H}(t_j) \right\| (t_{j+1} - t_j))}  \\ 
        & + 2 \left\| \mathbf{H}_f \right \| \left( \sum_{k=1}^{n_1} \left\| \mathbf{H}^f_{k}(t_j) \right\| + \sum_{k=1}^{n_2-1} \left\| \mathbf{H}_k^{\hat{f}}(t_j)  \right\|  + \left\| \sum_{k=1}^{n_1} \mathbf{H}^f_k(t_j) + \sum_{k=1}^{n_2-1} \mathbf{H}_k^{\hat{f}} (t_j)\right\|  \right) (t_{j+1}-t_j) \le \cfrac{\epsilon}{x(t_{j+1})}
    \end{aligned}
\end{equation}
\begin{equation}\label{eq: delta t in lyapunov 2 direct}
    t_{j+1}-t_j  \begin{cases}
             \le \cfrac{\frac{\epsilon}{x(t_{j+1})}}{B+C\frac{\epsilon}{x(t_{j+1})}+\sqrt{\frac{\epsilon}{x(t_{j+1})}(A-BC)}}& \text{if } A-BC \ge 0,\\
            \le \cfrac{\frac{\epsilon}{x(t_{j+1})}}{B+C\frac{\epsilon}{x(t_{j+1})}+\sqrt{\frac{A\epsilon}{x(t_{j+1})}}}  & \text{if }A-BC < 0,
         \end{cases} \Rightarrow t_{j+1}-t_j \le \cfrac{\epsilon}{Bx(t_{j+1})+C\epsilon+\sqrt{A x(t_{j+1})\epsilon}}
\end{equation}
where
\begin{equation*}
    \begin{aligned}
        A &= 2 \left\| \mathbf{H}_f \right\|
       \left\| \mathbf{H}^{\hat{f}}(t_j) \right\|
       \left\| \mathbf{H}(t_j) \right\| \\[0.8em]
        B &= 2 \left\| \mathbf{H}_f \right\|
             \bigg(
               \sum_{k=1}^{n_1} \left\| \mathbf{H}^f_{k}(t_j) \right\|
               +
               \sum_{k=1}^{n_2-1} \left\| \mathbf{H}_k^{\hat{f}}(t_j) \right\|
               +
               \left\|
                   \sum_{k=1}^{n_1} \mathbf{H}^f_k(t_j)
                   +
                   \sum_{k=1}^{n_2-1} \mathbf{H}_k^{\hat{f}}(t_j)
               \right\|
             \bigg) \\[0.8em]
        C &= \left\| \mathbf{H}(t_j) \right\|
    \end{aligned}
\end{equation*}

\begin{theorem} \label{thm::two-para-descrete}
    When we choose the Hamiltonian in Eq.(\ref{eq: Lyapunov 2 hamiltonian discrete}) and set the evolution time to Eq.(\ref{eq: delta t in lyapunov 2 direct}) in $j+1$-th round, then after $j+1$-th round, the parameter functions in the approximation ratio increases by Eq.(\ref{eq: x y in lyapunov 2 discrete}). 
\end{theorem}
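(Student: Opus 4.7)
My plan is to mirror the one-parameter argument of Theorem \ref{thm::one-para-descrete} but now track the extra parameter that appears in the two-parameter potential, and then enforce monotonicity via a coupled system rather than a single inequality. First I would write down the increment $E(t_{j+1}) - E(t_j)$ starting from the discrete two-parameter Lyapunov function, isolate the factor $x(t_{j+1})$ in front of $\langle \psi(t_{j+1}) | \mathbf{H}_f | \psi(t_{j+1}) \rangle - \langle \psi(t_j) | \mathbf{H}_f | \psi(t_j) \rangle$, keep the coefficient of $\langle \psi(t_j) | \mathbf{H}_f | \psi(t_j) \rangle$ separate (it is exactly $x(t_{j+1}) - x(t_j) - (y(t_{j+1}) - y(t_j)) a(t_j)$ after applying the two-parameter upper bound to $\langle \psi^* | \mathbf{H}_f | \psi^* \rangle$), and expose the remaining term $-(y(t_{j+1}) - y(t_j)) b(t_j) \langle \psi(t_j) | \mathbf{Q}(t_j) | \psi(t_j) \rangle$. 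This is exactly Eq.~(\ref{eq: discrete marginal gain b b}) already derived in the excerpt.

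Next I would replace the ordered product of Trotter exponentials with the single time-independent evolution $e^{\pm i \mathbf{H}(t_j)(t_{j+1}-t_j)}$ using the three lemmas established for the one-parameter case; this generates the term of magnitude $2 x(t_{j+1}) \| \mathbf{H}_f \| ( \sum \| \mathbf{H}^f_k \| + \sum \| \mathbf{H}^{\hat{f}}_k \| + \| \sum \mathbf{H}^f_k + \sum \mathbf{H}^{\hat{f}}_k \| ) (t_{j+1}-t_j)$. Then I would Taylor expand the Heisenberg-evolved Hamiltonian about $t_j$, keeping the first-order commutator $\langle \psi(t_j) | i[\mathbf{H}^{\hat{f}}(t_j), \mathbf{H}_f] | \psi(t_j) \rangle (t_{j+1}-t_j)$ explicitly and bounding the tail $\sum_{k \ge 2} |\Delta^k(t_{j+1}-t_j)|$ by the geometric-series estimate $\frac{2 \| \mathbf{H}_f \| \| \mathbf{H}^{\hat{f}}(t_j) \| \| \mathbf{H}(t_j) \| (t_{j+1}-t_j)^2}{1 - \| \mathbf{H}(t_j) \|(t_{j+1}-t_j)}$ as in the one-parameter proof.

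With this decomposition in hand, I would enforce $E(t_{j+1}) - E(t_j) \ge 0$ via two natural sufficient conditions: set to zero the contribution coming from the first-order commutator balanced against the $\langle \psi(t_j) | \mathbf{Q}(t_j) | \psi(t_j) \rangle$ term, and separately set to zero the coefficient multiplying $\langle \psi(t_j) | \mathbf{H}_f | \psi(t_j) \rangle$. After substituting the feedback choice $\alpha_k(t_j) = \beta_k(t_j) O_k(t_j)$ with $O_k(t_j) = \langle \psi(t_j) | i[\mathbf{H}^{\hat{f}}_k, \mathbf{H}_f] | \psi(t_j) \rangle$, the system becomes
\begin{equation*}
\begin{cases}
x(t_{j+1}) \sum_k \beta_k(t_j) O_k^2(t_j)(t_{j+1}-t_j) = (y(t_{j+1}) - y(t_j)) b(t_j) \langle \psi(t_j) | \mathbf{Q}(t_j) | \psi(t_j) \rangle, \\
x(t_{j+1}) - x(t_j) - (y(t_{j+1}) - y(t_j)) a(t_j) = 0.
\end{cases}
\end{equation*}
Solving this linear system for $x(t_{j+1})$ and $y(t_{j+1})$ by eliminating $y(t_{j+1}) - y(t_j) = (x(t_{j+1}) - x(t_j))/a(t_j)$ and substituting into the first equation gives exactly the claimed closed form with denominator $c(t_j)$.

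The last step is the error bookkeeping that fixes the step-size bound. Since the discarded terms carry an overall prefactor of $x(t_{j+1})$, requiring the total error to be at most $\epsilon$ is equivalent to bounding the Taylor tail plus the Trotter splitting error by $\epsilon/x(t_{j+1})$. Substituting the definitions of $A$, $B$, $C$ and solving the resulting inequality $\frac{A(t_{j+1}-t_j)^2}{1-C(t_{j+1}-t_j)} + B(t_{j+1}-t_j) \le \epsilon/x(t_{j+1})$ yields the stated bound $t_{j+1}-t_j \le \epsilon / (B x(t_{j+1}) + C\epsilon + \sqrt{A x(t_{j+1}) \epsilon})$ via the same quadratic-inequality manipulation used in Eq.~(\ref{eq: delta t in lyapunov 1 direct}). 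I expect the only real obstacle to be algebraic: carefully propagating the factor $x(t_{j+1})$ through both the Trotter and Taylor error bounds so that the step-size constraint comes out in the form stated, and verifying that $c(t_j) > 0$ under that constraint so that $x(t_{j+1})$ remains positive and the derivation is self-consistent.
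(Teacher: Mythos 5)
Your proposal follows essentially the same route as the paper's Appendix \ref{Appendix::Two-Parameter-Discretion}: the same decomposition of $E(t_{j+1})-E(t_j)$ with the $x(t_{j+1})$ prefactor, the same Trotter and Taylor-tail error lemmas carried over from the one-parameter case, the same two-equation sufficient system, and the same elimination yielding the closed forms for $x(t_{j+1})$ and $y(t_{j+1})$ with denominator $c(t_j)$ and the stated step-size bound. Your closing remark about checking $c(t_j)>0$ (so that $x(t_{j+1})$ stays positive) is a reasonable point that the paper leaves implicit, but it does not change the argument.
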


\section{Proof for Constrained Combinatorial optimization}\label{appendix: discretization in lyapunov 1}
When considering constrained combinatorial optimization problems, there are some operators such that $|\psi (t) \rangle$ leaks outside the feasible domain. To ensure that the output of quantum algorithms remains feasible, a commonly adopted method is to initialize the algorithm in a quantum state $|\psi (0)\rangle$ on the Hilbert subspace for feasible solutions $\mathbb{C}_{\mathcal{C}}$, i.e., $|\psi (0)\rangle = \sum_{S \in \mathcal{C}} \alpha_{S} |\mathbf{s} \rangle$, where $|\mathbf{s}\rangle$ is the computational basis state encoding the feasible solution $S$. If unitary operators for algorithms are constrained so that they preserve the feasible subspace $\mathbb{C}_{\mathcal{C}}$, i.e., for $\forall |\psi\rangle \in \mathbb{C}_{\mathcal{C}}$,  $U |\psi\rangle \in \mathbb{C}_{\mathcal{C}}$, the output for the quantum algorithm corresponding to the unitary operator $U$ is guaranteed to be a feasible solution. This requirement naturally imposes specific constraints on the unitary operator $U$. Next, we rephrase one such constraint for feasibility-preserving operators, which has previously appeared in \citep{Hadfield2018}.
\begin{lemma}[A Rephrased Version of Proposition 5 in \citep{Hadfield2018}]\label{lemma:constraint}
    A unitary operator $e^{-i\int_0^t \mathbf{H}(s)ds}$ preserves eigenspaces of $\mathbf{H}_{\mathcal{C}}$ iff $e^{-i\int_0^t \mathbf{H}(s)ds}$ and $\mathbf{H}_{\mathcal{C}}$ are commutative, i.e., $[e^{-i\int_0^t \mathbf{H}(s)ds}, \mathbf{H}_{\mathcal{C}}] = 0$. Here, $\mathbf{H}_{\mathcal{C}}$ is the Hamiltonian related to constraints. For each computational basis state $|\mathbf{s}\rangle$ encoding the feasible solution $S$, we have $\mathbf{H}_{\mathcal{C}} |\mathbf{s}\rangle = \lambda_s |\mathbf{s}\rangle$ with $\lambda_s > 0$. For each computational basis state $|\mathbf{s}\rangle$ encoding the unfeasible solution $S$, we have $\mathbf{H}_{\mathcal{C}} |\mathbf{s}\rangle = \lambda_s |\mathbf{s}\rangle$ with $\lambda_s = 0$.
\end{lemma}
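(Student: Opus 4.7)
The plan is to treat $U := e^{-i\int_0^t \mathbf{H}(s)\,ds}$ simply as a unitary operator on the computational Hilbert space and to exploit the Hermiticity of $\mathbf{H}_{\mathcal{C}}$, which guarantees a spectral decomposition $\mathbf{H}_{\mathcal{C}} = \sum_{\mu} \mu\, P_{\mu}$ with pairwise orthogonal projectors $P_{\mu}$ onto the eigenspaces $V_{\mu}$. Since the ground set is finite, only finitely many eigenvalues appear; the kernel $V_0$ collects exactly the infeasible computational basis states, while the union of eigenspaces with $\mu > 0$ spans the feasible subspace $\mathbb{C}_{\mathcal{C}}$. This reduces the lemma to a standard statement about a unitary commuting with a Hermitian operator.

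For the ($\Leftarrow$) direction I would fix an arbitrary eigenvector $|v\rangle \in V_{\mu}$ and simply compute
\begin{equation*}
\mathbf{H}_{\mathcal{C}}\bigl(U|v\rangle\bigr) = U\,\mathbf{H}_{\mathcal{C}}|v\rangle = \mu\, U|v\rangle,
\end{equation*}
which shows $U|v\rangle \in V_{\mu}$, hence $U(V_{\mu}) \subseteq V_{\mu}$ for every $\mu$, i.e., preservation of eigenspaces. For the ($\Rightarrow$) direction I would start from $U(V_{\mu}) \subseteq V_{\mu}$ for every $\mu$ and upgrade the inclusion to an equality via unitarity (bijectivity plus finite-dimensional dimension counting, or equivalently by applying the same inclusion to $U^{\dagger}$). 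Equality implies $P_{\mu} U = U P_{\mu}$ for each $\mu$, and then summing against the eigenvalues yields
\begin{equation*}
[U, \mathbf{H}_{\mathcal{C}}] = \sum_{\mu} \mu\,[U, P_{\mu}] = 0.
\end{equation*}

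The main obstacle is a conceptual subtlety rather than a technical one: in the ($\Rightarrow$) direction one must be careful to distinguish $U(V_{\mu}) \subseteq V_{\mu}$ from $U(V_{\mu}) = V_{\mu}$, because the mere inclusion only delivers $P_{\mu} U P_{\mu} = U P_{\mu}$ and would leave the companion identity $P_{\mu} U = P_{\mu} U P_{\mu}$ unjustified; without both, the commutator with $P_{\mu}$ need not vanish. Unitarity in finite dimensions bridges this gap cleanly, so the argument goes through.

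I would close by tying the result back to the framework: since $\mathbb{C}_{\mathcal{C}} = \bigoplus_{\mu > 0} V_{\mu}$, the iff immediately implies that any dynamics generated by a Hamiltonian commuting with $\mathbf{H}_{\mathcal{C}}$ leaves $\mathbb{C}_{\mathcal{C}}$ invariant, so an algorithm initialized in a feasible superposition remains feasible throughout its evolution. In particular, for the constrained version of the Lyapunov construction one need only impose that each generator $\mathbf{H}^{f}_k$ and $\mathbf{H}^{\hat{f}}_k$ in Eq.~(\ref{eq::Algorithm_Hamiltonian_Choice}) commute with $\mathbf{H}_{\mathcal{C}}$ in order to guarantee feasibility of $|\psi(T)\rangle$.
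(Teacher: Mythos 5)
Your proof is correct, and it is noticeably more careful than the paper's own argument, which takes a different (and rougher) route: the paper decomposes an arbitrary state into just two blocks, a feasible component $|\psi_f\rangle$ and an infeasible component $|\psi_{uf}\rangle$, and checks both directions by direct computation on these components, tacitly treating the whole feasible subspace as a single eigenspace with one eigenvalue $\lambda$ even though the lemma allows distinct $\lambda_s>0$; its reverse direction also contains the slip $0 = e^{-i\int_0^t \mathbf{H}(s)ds}|\psi_{uf}\rangle$, which should read $0 = \mathbf{H}_{\mathcal{C}}\, e^{-i\int_0^t \mathbf{H}(s)ds}|\psi_{uf}\rangle$. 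Your route via the full spectral decomposition $\mathbf{H}_{\mathcal{C}} = \sum_\mu \mu P_\mu$ handles the general case of non-coinciding feasible eigenvalues, and your explicit attention to $U(V_\mu)\subseteq V_\mu$ versus $U(V_\mu)=V_\mu$ addresses a point the paper leaves implicit. One small remark: once you assume $U(V_\mu)\subseteq V_\mu$ for \emph{every} $\mu$ (including $\mu=0$), orthogonality of the eigenspaces already gives $P_\mu U = \sum_\nu P_\mu U P_\nu = P_\mu U P_\mu = U P_\mu$, so the unitarity/dimension-counting upgrade is not strictly needed --- though it costs nothing and would be essential if preservation were only assumed for a single eigenspace. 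Your closing observation correctly recovers the intended application: generators commuting with $\mathbf{H}_{\mathcal{C}}$ keep the evolution inside the feasible subspace.
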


\begin{proof}
    For each quantum statement $|\psi\rangle$, we can decompose it into two component $|\psi \rangle = |\psi_{f}\rangle + |\psi_{uf}\rangle$ where $|\psi_{f}\rangle$ is in feasible solution space.
    \begin{itemize}
        \item '$\Rightarrow$': $e^{-i \int_0^t \mathbf{H}(s) ds} \mathbf{H}_{\mathcal{C}} |\psi \rangle=  e^{-i \int_0^t \mathbf{H}(s) ds} \mathbf{H}_{\mathcal{C}}(|\psi_{f}\rangle + |\psi_{uf}\rangle) = \lambda e^{-i \int_0^t \mathbf{H}(s) ds} |\psi_{f}\rangle = \mathbf{H}_{\mathcal{C}} e^{-i \int_0^t \mathbf{H}(s) ds} |\psi \rangle$. The last equality is because $e^{-i \int_0^t \mathbf{H}(s) ds}$ preserves eigenspaces. Therefore, we have $[e^{-i \int_0^t \mathbf{H}(s) ds}, \mathbf{H}_{\mathcal{C}}] = 0$. 
        \item '$\Leftarrow$': $0 = [e^{-i \int_0^t \mathbf{H}(s) ds}, \mathbf{H}_{\mathcal{C}}] |\psi_f \rangle = e^{-i \int_0^t \mathbf{H}(s) ds} \mathbf{H}_{\mathcal{C}} |\psi_f \rangle - \mathbf{H}_{\mathcal{C}}e^{-i \int_0^t \mathbf{H}(s) ds} |\psi_f \rangle = e^{-i \int_0^t \mathbf{H}(s) ds} \lambda |\psi_f \rangle - \mathbf{H}_{\mathcal{C}}e^{-i \int_0^t \mathbf{H}(s) ds} |\psi_f \rangle$. $0 = [e^{-i \int_0^t \mathbf{H}(s) ds}, \mathbf{H}_{\mathcal{C}}] |\psi_{uf} \rangle = e^{-i \int_0^t \mathbf{H}(s) ds} \mathbf{H}_{\mathcal{C}} |\psi_{uf} \rangle - \mathbf{H}_{\mathcal{C}}e^{-i \int_0^t \mathbf{H}(s) ds} |\psi_{uf} \rangle = - \mathbf{H}_{\mathcal{C}}e^{-i \int_0^t \mathbf{H}(s) ds} |\psi_{uf} \rangle$. That implies $\lambda e^{-i \int_0^t \mathbf{H}(s) ds}  |\psi_f \rangle = \mathbf{H}_{\mathcal{C}}e^{-i \int_0^t \mathbf{H}(s) ds} |\psi_f \rangle$ and $0 = e^{-i \int_0^t \mathbf{H}(s) ds}  |\psi_{uf} \rangle = \mathbf{H}_{\mathcal{C}}e^{-i \int_0^t \mathbf{H}(s) ds} |\psi_{uf} \rangle$, i.e., $e^{-i \int_0^t \mathbf{H}(s) ds}$ preserves eigenspaces.
    \end{itemize}
\end{proof}

\section{Proof of Lemma \ref{lm::ratio difference}}\label{Proof of Threom thm::ratio difference}
Lemma \ref{lm::ratio difference} introduces two new optimization problems within designing quantum algorithms that share the same optimal solution as the original problem. This result offers valuable insight into the construction of Lyapunov functions. The proof of this lemma is presented as follows:
\begin{proof}
   (1) Because $\mathbf{H}_1^*(t)$ is the optimal solution of the ratio maximization problem, then we have: 
    \begin{equation*}
            \langle \psi(0)| \mathcal{T}e^{i \int_0^t \mathbf{H}_1^*(s) ds }  \mathbf{H}_f \mathcal{T}e^{-i \int_0^t \mathbf{H}_1^*(s) ds }  |\psi(0)\rangle- \lambda^* \langle \psi^*| \mathbf{H}_f | \psi^* \rangle =0
    \end{equation*}
    If $\mathbf{H}_1^*(t)$ is not an optimal solution of the difference problem, i.e., there is a $\mathbf{\hat{H}}^*(t)$ such that 
    \begin{equation*}
            \langle \psi(0)| \mathcal{T}e^{i \int_0^t \mathbf{\hat{H}}^*(s) ds }  \mathbf{H}_f \mathcal{T}e^{-i \int_0^t \mathbf{\hat{H}}^*(s) ds }  |\psi(0)\rangle- \lambda^* \langle \psi^*| \mathbf{H}_f | \psi^* \rangle > 0\\
    \end{equation*}
    It is contradict with the optimal solution of the ratio problem.\\
    
    (2) Because $\mathbf{H}_2^*(t)$ is the optimal solution of the difference maximization problem, then we have:
    \begin{equation*}
            \cfrac{\langle \psi(0)| \mathcal{T}e^{i \int_0^t \mathbf{H}_2^*(s) ds }  \mathbf{H}_f \mathcal{T}e^{-i \int_0^t \mathbf{H}_2^*(s) ds }  |\psi(0)\rangle}{\langle \psi^*| \mathbf{H}_f | \psi^* \rangle +c^*}  = 1
    \end{equation*}
    If $\mathbf{H}_2^*(t)$ is not an optimal solution of the ratio problem, i.e., there is a $\mathbf{\hat{H}}^*(t)$ such that
    \begin{equation*}
            \cfrac{\langle \psi(0)| \mathcal{T}e^{i \int_0^t \mathbf{\hat{H}}^*(s) ds }  \mathbf{H}_f \mathcal{T}e^{-i \int_0^t \mathbf{\hat{H}}^*(s) ds }  |\psi(0)\rangle}{\langle \psi^*| \mathbf{H}_f | \psi^* \rangle +c^*} > 1\\
    \end{equation*}
    It is contradict with the optimal solution of the difference problem.\\
\end{proof}

\end{document}